\renewcommand{\part}[1]{\left(#1\right)}  
\newcommand{\parg}[1]{\left\{#1\right\}}  
\newcommand{\ket}[1]{\vert {#1} \rangle}  
\newcommand{\nl}{\par\noindent}   
\newcommand{\Tr}{{ \textrm{tr}}}       
\newcommand{\Prob}{{ \textrm{p}}}    
\newcommand{\Not}{{ \mathtt{NOT}}}
\newcommand{\SNot}{\sqrt{\mathtt{NOT}}}
\newcommand{\SId}{\sqrt{\mathtt{I}}}
\newcommand{\C}{\mathbb{C}}
\newcommand{\Hol}[1]{{\mathtt{Hol}}^\gamma({#1})}
\theoremstyle{plain}
\newtheorem{theorem}{Theorem}[section]
\newtheorem{lemma}{Lemma}[section]
\theoremstyle{definition}
\newtheorem{definition}{Definition}[section]
\theoremstyle{remark}
\newtheorem{example}{Example}
\numberwithin{equation}{section}
\begin{document}
\title[Logical arguments]{Holistic logical arguments in quantum computation}

\newcommand{\acr}{\newline\indent}

\author[M.L. Dalla Chiara \and R. Giuntini \and R. Leporini \and G. Sergioli]
{Maria Luisa Dalla Chiara* \and Roberto Giuntini** \and Roberto Leporini*** \and Giuseppe Sergioli**}

\address{\llap{*\,}Dipartimento di Lettere e Filosofia\acr
    Universit\`a di Firenze\acr
    Via Bolognese 52\acr I-50139 Firenze\acr ITALY}
\email{dallachiara@unifi.it}

\address{\llap{**\,}
Universit\`a di Cagliari\acr
    Via Is Mirrionis 1\acr I-09123 Cagliari\acr ITALY}
\email{giuntini@unica.it, giuseppe.sergioli@gmail.com}

\address{\llap{***\,}Dipartimento di Ingegneria\acr
    Universit\`a di Bergamo\acr
    viale Marconi 5\acr I-24044 Dalmine (BG)\acr ITALY}
\email{roberto.leporini@unibg.it}

\thanks{Sergioli's work has been supported by the Italian Ministry
 of Scientific Research within the FIRB project
 ``Structures and dynamics of knowledge and cognition'',
 Cagliari unit F21J12000140001; Leporini's work has been
 supported by the Italian Ministry of Scientific Research within
 the PRIN project ``Automata and Formal Languages: Mathematical
 Aspects and Applications''.}

\subjclass[2010]{Primary 03C65}
\keywords{Holistic semantics, quantum logics}

\dedicatory{Dedicated to Anatolij Dvure\v{c}enskij}

\maketitle
\begin{abstract}
Quantum computational logics represent a logical abstraction from
the circuit-theory in quantum computation. In these logics
formulas are supposed to denote  pieces of quantum information
(qubits, quregisters or mixtures of quregisters), while logical
connectives correspond to (quantum logical) gates that transform
quantum information in a reversible way. The characteristic
holistic features of the quantum theoretic formalism (which play
an essential role in entanglement-phenomena) can be used in order
to develop a {\em holistic\/} version of the quantum computational
semantics. In contrast with the {\em compositional\/} character of
most standard semantic approaches, {\em meanings\/} of formulas
are here dealt with as {\em global\/} abstract objects that
determine the {\em contextual meanings\/} of the formulas'
components (from the {\em whole\/} to the {\em parts\/}). We
present a survey of the most significant logical arguments that
are valid or that are possibly violated in the framework of this
semantics. Some logical features that may appear {\em prima
facie\/} strange seem to reflect pretty well informal arguments
that are currently used in our rational activity.

\end{abstract}

\section{Introduction}
According to a common belief a basic aim of our use of languages
is communicating some information. There are however diverging
theories about the general concept of {\em information\/}. What
does exactly mean {\em understanding\/} or {\em interpreting\/}
the information expressed by a sentence $\alpha$ of a language
$\mathcal L$?

As is well known, the classical approaches to logic and to
information theory are based on a simple idea: the {\em
informational meaning\/} of a sentence is represented by a {\em
bit\/}, which corresponds to a classical {\em truth-value\/}
(either $1$ or $0$). At the same time, sequences of $n$ bits ({\em
registers\/}) represent possible {\em informational meanings\/} of
sequences consisting of $n$ sentences. Such a sharp and dichotomic
view of information has been put in question by a number of
non-classical theories. For instance, in the framework of the so
called ``fuzzy thinking'', {\em uncertainty\/}, {\em ambiguity\/}
and {\em vagueness\/} are currently investigated by referring to
{\em truth-degrees\/}, which have replaced the classical
truth-values {\em truth\/} and {\em falsity\/}.

{\em Quantum computational logics\/} are based on a different
idea: the information expressed by a sentence may be ambiguous
because it is stored by a quantum object, which is governed by the
indeterministic laws of quantum theory.\footnote{See, for
instance, \cite{DGL05}, \cite{Gu03}, \cite{DGG}.} Accordingly, in
these logics sentences are supposed to denote pieces of quantum
information ({\em qubits\/}, {\em quregisters\/} or, more
generally, {\em mixtures of quregisters\/}), while the logical
connectives are interpreted as {\em unitary quantum operations\/}
that transform pieces of quantum information in a reversible way.
One obtains, in this way, a great variety of {\em logical
operators\/}: some of them represent the ``quantum informational
counterparts'' of the standard connectives (like negation,
conjunction, disjunction); some others correspond to {\em
genuine\/} quantum operations that may transform classical inputs
into quantum uncertainties. In this framework, some fundamental
quantum theoretic concepts, like {\em superposition\/} and {\em
entanglement\/} (which have often been described as mysterious and
potentially paradoxical), can be used as a ``semantic resource''
for a formal analysis of theoretic situations (even far from
microphysics) where {\em ambiguity\/}, {\em holism\/} and {\em
contextuality\/} play a relevant role. In this paper we will
present a survey of the most significant logical arguments that
are valid or that are possibly violated according to a {\em
holistic\/} version of quantum computational logic. We will see
how some semantic properties of this logic, which may appear {\em
prima facie\/} somewhat strange, seem to reflect pretty well both
quantum theoretic situations and informal arguments that are
currently used in our rational activity.

\section{The mathematical environment}

It is expedient to  recall some basic   concepts of quantum
computation that play an important role in the quantum
computational semantics.\footnote{See, for instance, \cite{DGL05},
\cite{NC00}, \cite{AKN98}.} The general mathematical environment
is the $n$-fold tensor product of the Hilbert space $\C^2$:
$$\mathcal H^{(n)}:=
\underbrace{\C^2\otimes\ldots\otimes\C^2}_{n-times},$$ where all
pieces of quantum information live. The elements $\ket{1} = (0,1)$
and $\ket{0} = (1,0)$ of the canonical orthonormal basis $B^{(1)}$
of $\C^2$ represent, in this framework, the two classical bits,
which can be also regarded as the canonical truth-values
\textit{Truth} and \textit{Falsity}, respectively. The canonical basis
of $\mathcal H^{(n)}$ is the set
$$B^{(n)} =\parg{\ket{x_1}\otimes \ldots
 \otimes \ket{x_n}: \ket{x_1},\ldots, \ket{x_n} \in B^{(1)}}.$$
As usual, we will briefly write $\ket{x_1, \ldots, x_n}$ instead
of $\ket{x_1} \otimes \ldots \otimes \ket{x_n}$. By definition, a
\textit{quregister} is a unit vector of $\mathcal H^{(n)}$; while
a {\em qubit\/} is a quregister of $\mathcal H^{(1)}$. Quregisters
thus correspond to pure states, namely to maximal pieces of
information about the quantum systems that are supposed to store a
given amount of quantum information. We shall also make reference
to \textit{mixtures} of quregisters, represented by density
operators $\rho$ of $\mathcal H^{(n)}$. Of course, any quregister
$\ket{\psi}$  corresponds to a special example of density
operator: the  projection operator $P_{\ket{\psi}}$ that projects
over the closed subspace determined by $\ket{\psi}$. We will
denote by $\mathfrak D(\mathcal H^{(n)})$ the set of all density
operators of $\mathcal H^{(n)}$, while $\mathfrak D=
\bigcup_n\parg{\mathfrak D(\mathcal H^{(n)})}$ will represent the
set of all possible pieces of quantum information, briefly called
{\em qumixes}.

The choice of an orthonormal basis for the space $\C^2$ is,
obviously, a matter of convention. One can consider infinitely
many bases that are determined by the application of a unitary
operator $\mathfrak T$ to the elements of the canonical basis.
From an intuitive point of view, we can think
 that the operator
$\mathfrak T$ gives rise to a change of \textit{truth-perspective}.
While in the classical case, the truth-values \textit{Truth} and
\textit{Falsity} are identified with the two classical bits
$\ket{1}$ and $\ket{0}$, assuming a different basis corresponds to
a different idea of \textit{Truth} and
\textit{Falsity}.\footnote{Truth-perspectives play an important role
in the case of {\em epistemic quantum computational logics}. See,
for instance, \cite{DBGS} and \cite{BDCGLS}.}
 Since
any basis-change in $\C^2$ is determined by a unitary operator, we
can identify a \textit{truth-perspective} with a unitary operator
$\frak T$ of $\C^2$. We will write:
$$ \ket{1_{\frak T}}= \frak T \ket{1};\; \ket{0_{\frak T}}=\frak T\ket{0},$$ and
we will assume that $ \ket{1_{\frak T}}$ and $\ket{0_{\frak T}}$
represent, respectively, the truth-values \textit{Truth}  and
\textit{Falsity} of the truth-perspective $\frak T$. The
\textit{canonical truth-perspective} is, of course, determined by
the identity operator ${\mathtt{I}}$ of $\C^2$. We will indicate by
$B^{(1)}_\frak T$ the orthonormal basis determined by $\frak T$;
while $B^{(1)}_{\mathtt{I}}$ will represent the canonical basis. From a
physical point of view, we can suppose that each truth-perspective
is associated to an apparatus that allows one to measure a given
observable.

Any unitary operator $\frak T$ of $\mathcal H^{(1)}$ can be
naturally extended to a unitary operator $\frak T^{(n)}$ of
$\mathcal H^{(n)}$ (for any $n \geq 1$):
$$\frak T^{(n)}\ket{x_1,\ldots,x_n}=
\frak T\ket{x_1}\otimes \ldots \otimes \frak T\ket{x_n}.$$

Accordingly, any choice of a unitary operator $\frak T$ of
$\mathcal H^{(1)}$ determines an orthonormal basis $B^{(n)}_\frak
T$ for $\mathcal H^{(n)}$ such that:
$$B^{(n)}_\frak T = \parg{\frak T^{(n)}
\ket{x_1,\ldots,x_n}:\ket{x_1,\ldots,x_n} \in B_{\mathtt{I}}^{(n)}}.$$
Instead of $\frak T^{(n)}\ket{x_1,\ldots,x_n}$ we will also write
$\ket{x_{1_{\frak T}},\ldots,x_{n_{\frak T}}}$.

The elements of $B^{(1)}_\frak T$ will be called the $\frak
T$-{\textit{bits}} of $\mathcal H^{(1)}$; while the elements of
$B^{(n)}_\frak T$ will represent the $\frak T$-{\textit{registers}} of
$\mathcal H^{(n)}$. On this ground the notions of \textit{truth},
\textit{falsity} and \textit{probability} with respect to any
truth-perspective $\mathfrak T $ can be defined in a natural way.

\begin{definition} \textit{($\frak T$-true and $\frak T$-false registers)}
\label{def:true}\nl
\begin{itemize}
\item $\ket{x_{1_{\frak T}},\ldots,x_{n_{\frak T}}}$ is a
    \textit{$\frak T$-true register} iff $\ket{x_{n_{\frak T}}}
    =\ket{1_{\frak T}};$
\item $\ket{x_{1_{\frak T}},\ldots,x_{n_{\frak T}}}$ is a
    \textit{$\frak T$-false register} iff $\ket{x_{n_{\frak
    T}}} =\ket{0_{\frak T}}.$
\end{itemize}
\end{definition}

In other words, the \textit{$\frak T$-truth-value} of a $\frak
T$-register (which corresponds to a sequence of $\frak T$-bits) is
determined by its last element.
\begin{definition} \textit{($\mathfrak T$-truth and $\mathfrak T$-falsity)}\nl
\begin{itemize}
\item The $\mathfrak T$-\textit{truth} of $\mathcal H^{(n)}$ is
    the projection operator $^\mathfrak TP_1^{(n)}$ that
    projects over the closed subspace spanned by the set of
    all $\mathfrak T$- true registers;
\item the $\mathfrak T$-\textit{falsity} of $\mathcal H^{(n)}$
    is the projection operator $^\mathfrak TP_0^{(n)}$ that
    projects over the closed subspace spanned by the set of
    all $\mathfrak T$- false registers.
\end{itemize}
\end{definition}

In this way, truth and falsity are dealt with as mathematical
representatives of possible physical properties. Accordingly, by
applying the Born-rule, one can naturally define the
probability-value of any qumix with respect to the
truth-perspective $\mathfrak T$.

\begin{definition} \textit{(}$\mathfrak T$-\textit{Probability)}\nl
For any $\rho \in \mathfrak D(\mathcal H^{(n)})$,
$${\Prob}_\mathfrak T(\rho) := {\Tr}(^\mathfrak TP_1^{(n)} \rho),$$
where ${\Tr}$ is the trace-functional.
\end{definition}
We interpret ${\Prob}_\mathfrak T(\rho)$ as the probability that
the information $\rho$ satisfies the $\mathfrak T$-Truth. In the
particular case of qubits, we will obviously obtain:
$${\Prob}_\mathfrak T(a_0\ket{0_\mathfrak T} + a_1\ket{1_\mathfrak T}) =|a_1|^2.$$

For any choice of a truth-perspective  $\mathfrak T$, the set
$\mathfrak D$ of all density operators can be pre-ordered by a
relation that is defined in  terms of the probability-function
${\Prob}_\mathfrak T$. In Section 4 we will see how this relation
will play an important semantic role.

\begin{definition}\textit{(Preorder)} \label{de:preordine}\nl
$\rho \preceq_\mathfrak T \sigma$ iff ${\Prob}_\mathfrak T(\rho)
\le {\Prob}_\mathfrak T(\sigma)$.
\end{definition}

When $\mathfrak T$ is the canonical truth-perspective $\mathtt{I}$,
we will also write: $P_1^{(n)}$,  $P_0^{(n)}$,  $\mathtt{p}$,
$\preceq$ (instead of $^\mathtt{I}P_1^{(n)}$,  $^\mathtt{I}P_0^{(n)}$,
$\mathtt{p}_\mathtt{I}$, $\preceq_\mathtt{I}$).

As is well known,  {\em entanglement\/} represents one of the most
crucial (and to a certain extent ``mysterious'')  feature of
quantum theory.  Consider a composite system $S= S_1 + \ldots +
S_t$ and its Hilbert space $\mathcal H^{(n)} = \mathcal
H^{(n_1)}\otimes \ldots \otimes \mathcal H^{(n_t)}$. Let $\rho \in
\mathfrak D(\mathcal H^{(n)}$) be a state of $S$ and let $i_1,
\ldots, i_r \in
\parg{1,\ldots,t}$. The quantum theoretic formalism determines the
{\em reduced state\/} of $\rho$ with respect to the subsystem
$S_{i_1} + \ldots + S_{i_r}$. We will indicate this state by
$Red^{(i_1, \ldots, i_r)}_{[n_1,\ldots,n_t]}(\rho)$.

 It is expedient to recall a characteristic property of reduced states
(described by the following Lemma).
\begin{lemma} \label{le:reduced}\nl\nl
 Let $\rho \in \mathfrak
D(\mathcal H^{(m)} \otimes
    \mathcal H^{(p)})$. The reduced state $Red^{(1)}_{[m,p]}$ is the
    unique operation of $\mathfrak D(\mathcal H^{(m)} \otimes
    \mathcal H^{(p)})$ into $\mathfrak D(\mathcal H^{(m)})$
    such that for any self-adjoint operator $A^{(m)}$ of $\mathcal
    H^{(m)}$ and for any $\rho \in \mathfrak D(\mathcal
    H^{(m)} \otimes  \mathcal H^{(p)})$:
$$
\mathtt{tr}((A^{(m)}\otimes \mathtt{I}^{(p)})\rho)=
\mathtt{tr}(A^{(m)} Red^{(1)}_{[m,p]}(\rho)).
$$

A similar relation holds  for the reduced state
$Red^{(2)}_{[m,p]}.$
\end{lemma}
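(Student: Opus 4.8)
The plan is to establish both existence and uniqueness of the reduced-state operation, since the Lemma is a characterization ("the unique operation such that..."). For existence, I would construct $Red^{(1)}_{[m,p]}$ explicitly via the partial trace. Writing a density operator $\rho$ on $\mathcal H^{(m)}\otimes\mathcal H^{(p)}$ in terms of the canonical basis, set $Red^{(1)}_{[m,p]}(\rho) := \sum_{k} (\mathtt{I}^{(m)}\otimes \langle e_k\vert)\,\rho\,(\mathtt{I}^{(m)}\otimes\vert e_k\rangle)$, where $\parg{\vert e_k\rangle}$ is the canonical basis of $\mathcal H^{(p)}$. One checks routinely that this is again a density operator (positive, trace one) and that the defining trace identity $\mathtt{tr}((A^{(m)}\otimes\mathtt{I}^{(p)})\rho)=\mathtt{tr}(A^{(m)}\,Red^{(1)}_{[m,p]}(\rho))$ holds: expand the left-hand trace in the product basis $\parg{\vert f_j\rangle\otimes\vert e_k\rangle}$ of $\mathcal H^{(m)}\otimes\mathcal H^{(p)}$ and regroup the sum over $k$ first.

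For uniqueness, suppose $R$ is any operation from $\mathfrak D(\mathcal H^{(m)}\otimes\mathcal H^{(p)})$ into $\mathfrak D(\mathcal H^{(m)})$ satisfying the same identity for every self-adjoint $A^{(m)}$ and every $\rho$. Fix $\rho$. Then for all self-adjoint $A^{(m)}$ we have $\mathtt{tr}(A^{(m)}(R(\rho)-Red^{(1)}_{[m,p]}(\rho)))=0$. The difference $D:=R(\rho)-Red^{(1)}_{[m,p]}(\rho)$ is itself self-adjoint (a difference of density operators), so taking $A^{(m)}=D$ gives $\mathtt{tr}(D^2)=0$; since $D^2$ is positive this forces $D=0$, i.e. $R(\rho)=Red^{(1)}_{[m,p]}(\rho)$. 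As $\rho$ was arbitrary, $R=Red^{(1)}_{[m,p]}$.

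The treatment of $Red^{(2)}_{[m,p]}$ is entirely symmetric: trace out the first factor instead, using $(\langle f_j\vert\otimes\mathtt{I}^{(p)})\,\rho\,(\vert f_j\rangle\otimes\mathtt{I}^{(p)})$ and the identity $\mathtt{tr}((\mathtt{I}^{(m)}\otimes A^{(p)})\rho)=\mathtt{tr}(A^{(p)}\,Red^{(2)}_{[m,p]}(\rho))$; the same uniqueness argument applies verbatim.

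The only mildly delicate point — the main obstacle, such as it is — is the uniqueness step, where one must exploit the fact that the class of test operators $A^{(m)}$ is all self-adjoint operators, which is rich enough to separate self-adjoint operators via the trace pairing; the key observation is that the difference of two density operators is self-adjoint and can therefore be used as its own test operator. Everything else is a routine unwinding of the definition of the partial trace in coordinates, which I would not spell out in full.
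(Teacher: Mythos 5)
Your proof is correct. Note that the paper itself offers no proof of this Lemma: it is explicitly presented as a recalled standard property of reduced states, stated without argument. Your write-up supplies exactly the standard justification that the paper leaves implicit --- existence via the coordinate definition of the partial trace, $Red^{(1)}_{[m,p]}(\rho)=\sum_k(\mathtt{I}^{(m)}\otimes\langle e_k\vert)\,\rho\,(\mathtt{I}^{(m)}\otimes\vert e_k\rangle)$, and uniqueness by testing the difference $D$ of two candidate images against itself, so that $\mathtt{tr}(D^2)=0$ forces $D=0$ since $D$ is self-adjoint. Both steps are sound, and the observation that the self-adjoint test operators separate points in the trace pairing is indeed the only non-mechanical ingredient; the symmetric treatment of $Red^{(2)}_{[m,p]}$ is fine as stated.
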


 A characteristic situation  that arises in
entanglement-phenomena is the following: while the state of the
global system is pure (a maximal information), the reduced states
of some  subsystems   are  mixtures (non-maximal pieces of
information). Hence our information about the {\em whole\/} cannot
be reconstructed as a function of our pieces of information about
the {\em parts\/}. Although entanglement can be defined both for
pure and for mixed states, in  this article we will be only
concerned with entangled quregisters.

\begin{definition} {\em ($t$-partite entangled quregister)}
\label{de:npartito}\nl A quregister $\ket{\psi}$ of $\mathcal
H^{(n)}=\mathcal H^{(n_1)}\otimes \ldots \otimes \mathcal
H^{(n_t)}$ is called a {\em $t$-partite entangled state\/}
 iff all reduced states
  $Red^{(1)}_{[n_1,\ldots,n_t]}(P_{\ket{\psi}}),\ldots,
  Red^{(t)}_{[n_1,\ldots,n_t]}(P_{\ket{\psi}})$ are proper mixtures.
\end{definition}

As a consequence a  $t$-partite entangled quregister cannot be
represented as
 a tensor product of the reduced states of its parts.
When all reduced states
$Red^{(i)}_{[n_1,\ldots,n_t]}(P_{\ket{\psi}})$ are the qumix
$\frac{1}{2^{n_i}}{\mathtt{I}}^{(n_i)}$ (which represents a perfect
 ambiguous information) one says that $\ket{\psi}$ is a {\em
 $t$-partite
 maximally entangled state\/}.

\begin{definition} {\em (Entangled quregister with respect to some parts)}
\label{de:partentangl}\nl A quregister $\ket{\psi}$ of $\mathcal
H^{(n)} =  \mathcal H^{(n_1)}\otimes \ldots \otimes \mathcal
H^{(n_t)}$ is called {\em entangled\/}
 with respect to its parts labelled by the
indices $i_1, \ldots,i_r$ (with $1 \le i_1,\ldots,i_r \le t$) iff
 the  reduced states $Red^{(i_1)}_{[n_1,\ldots,n_t]}(P_{\ket{\psi}}),\ldots,
    Red^{(i_r)}_{[n_1,\ldots,n_t]}(P_{\ket{\psi}})$ are proper mixtures.

\end{definition}

Since the notion of reduced state is independent of the choice of
a particular basis, it turns out that the status  of {\em
$t$-partite entangled quregister\/},
 {\em maximally entangled quregister\/}  and {\em entangled quregister with
 respect to some parts\/} is invariant under changes of
 truth-perspective.

\begin{example}\nl

\begin{itemize}
\item The quregister
$$ \ket{\psi} = \frac{1}{\sqrt{2}}(\ket{0,0,0} + \ket{1,1,1}) $$
is a 3-partite maximally entangled quregister of $\mathcal
H^{(3)}$;
\item the quregister
$$ \ket{\psi} = \frac{1}{\sqrt{2}}(\ket{0,0,0} + \ket{1,1,0}) $$
is an entangled quregister  of $\mathcal H^{(3)}$  with
respect to its first and second part.
\end{itemize}
\end{example}

\section{Quantum logical gates and the holistic conjunction}
As is well known, quantum information is processed by
\textit{quantum logical gates} (briefly, \textit{gates}): unitary
operators that transform quregisters into quregisters in a
reversible way. Let us recall the definition of  some gates that
play a special role both from the computational and from the
logical point of view.

\begin{definition}\label{de:not}{\em (The negation)}
\nl For any $n\geq 1$, the {\em negation} on $\mathcal H^{(n)}$ is
the linear operator ${\Not}^{(n)}$ such that, for every element
$\ket{x_1,\ldots,x_n}$ of the  canonical basis,
$$
{\Not}^{(n)} \ket{x_1,\ldots,x_n} = \ket{x_1,\ldots, x_{n-1}}\otimes
\ket{1-x_{n}}.
$$
\end{definition}

In particular, we obtain:
$${\Not}^{(1)}\ket{0}= \ket{1}; \quad
{\Not}^{(1)}\ket{1}= \ket{0}, $$ according to the classical
truth-table of negation.

\begin{definition} {\em (The Toffoli-gate)}\label{de:toffoli}\nl
For any $m,n,p\geq 1$, the {\em Toffoli-gate\/} is the linear
operator ${\mathtt{T}}^{(m,n,p)}$ defined on $\mathcal H^{(m+n+p)}$
such that, for every element $\ket{x_1,\ldots, x_m}\otimes
\ket{y_1,\ldots, y_n}\otimes
       \ket{z_1,\ldots, z_p} $
of the  canonical basis,
\begin{multline*}
{\mathtt{T}}^{(m,n,p)} \ket{x_1,\ldots, x_m, y_1,\ldots,
y_n, z_1,\ldots, z_p}\\
=\ket{x_1,\ldots, x_m, y_1,\ldots, y_n,z_1,\ldots,
z_{p-1}}\otimes\ket{x_m y_n \widehat{+} z_p},
\end{multline*}
where $\widehat{+}$ represents the addition modulo $2$.
\end{definition}

The following Lemma  asserts a characteristic property of the
Toffoli-gate (which turns out to be useful from the computational
point of view).

\begin{lemma}\cite{DCGFLS}\label{le:tof}\nl\nl

$\mathtt{T}^{(m,n,1)}= [(\mathtt{I}^{(m+n)}-P_1^{(m)}\otimes
P_1^{(n)})\otimes \mathtt{I}^{(1)}]\, + \,[P_1^{(m)}\otimes
P_1^{(n)}\otimes \mathtt{NOT}^{(1)}]$.
\end{lemma}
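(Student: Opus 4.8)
The plan is to verify the operator identity by checking that both sides agree on every element of the canonical basis of $\mathcal H^{(m+n+1)}$, since a linear operator is determined by its action on a basis. Write a typical basis vector as $\ket{x_1,\ldots,x_m}\otimes\ket{y_1,\ldots,y_n}\otimes\ket{z_1}$. The left-hand side, by Definition \ref{de:toffoli} with $p=1$, sends it to $\ket{x_1,\ldots,x_m,y_1,\ldots,y_n}\otimes\ket{x_m y_n \,\widehat{+}\, z_1}$. So the whole computation reduces to showing the right-hand side produces the same output, and the natural case split is on the value of the product $x_m y_n \in \{0,1\}$, i.e.\ on whether $\ket{x_1,\ldots,x_m}$ and $\ket{y_1,\ldots,y_n}$ are both in the range of $P_1^{(m)}$ and $P_1^{(n)}$ respectively.

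First I would record the elementary fact that $P_1^{(m)}\ket{x_1,\ldots,x_m} = \ket{x_1,\ldots,x_m}$ if $x_m = 1$ and $=0$ otherwise (and likewise for $P_1^{(n)}$), which is immediate from the definition of $P_1^{(m)}$ as the projection onto the span of the true registers together with Definition \ref{def:true}. Case 1: $x_m = y_n = 1$. Then $(P_1^{(m)}\otimes P_1^{(n)})\ket{x_1,\ldots,x_m,y_1,\ldots,y_n} = \ket{x_1,\ldots,x_m,y_1,\ldots,y_n}$, so the first summand $(\mathtt{I}^{(m+n)} - P_1^{(m)}\otimes P_1^{(n)})\otimes\mathtt{I}^{(1)}$ annihilates the vector, while the second summand $P_1^{(m)}\otimes P_1^{(n)}\otimes\mathtt{NOT}^{(1)}$ gives $\ket{x_1,\ldots,x_m,y_1,\ldots,y_n}\otimes\ket{1-z_1}$. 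Since $x_m y_n = 1$, we have $x_m y_n \,\widehat{+}\, z_1 = 1 \,\widehat{+}\, z_1 = 1-z_1$, so this matches the left-hand side. Case 2: $x_m = 0$ or $y_n = 0$. Then $(P_1^{(m)}\otimes P_1^{(n)})$ kills the vector, so the second summand contributes nothing, while the first summand acts as the identity and returns $\ket{x_1,\ldots,x_m,y_1,\ldots,y_n}\otimes\ket{z_1}$. Since $x_m y_n = 0$, we have $x_m y_n \,\widehat{+}\, z_1 = z_1$, again matching the left-hand side.

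Having checked agreement on every basis vector in both cases, linearity gives the operator identity, completing the proof. There is no real obstacle here — the only point requiring a moment's care is bookkeeping the tensor-factor placement (the $P_1$'s act on the first $m+n$ qubits, $\mathtt{NOT}^{(1)}$ on the last one), and confirming that the two projector-based summands have disjoint supports so that no interference between them arises; both are handled transparently by the case split on $x_m y_n$.
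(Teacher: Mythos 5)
Your verification is correct: the case split on $x_m y_n$ cleanly shows both summands agree with the Toffoli action on every canonical basis vector, and the two projector terms indeed have orthogonal supports, so linearity finishes the argument. Note that the paper itself offers no proof of this lemma --- it is simply imported by citation from \cite{DCGFLS} --- so there is no in-paper argument to compare against; your direct basis-vector check is the natural elementary proof and is a correct, self-contained justification of the identity.
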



 \begin{definition} {\em (The ${\mathtt{XOR}}$-gate)} \label{de:xor}\nl
For any $m,n\geq 1$, the {\em ${\mathtt{XOR}}$-gate\/} is the linear
operator ${\mathtt{XOR}}^{(m,n)}$ defined on $\mathcal H^{(m+n)}$ such
that, for every element $\ket{x_1,\ldots, x_m}\otimes
\ket{y_1,\ldots, y_n}$ of the  canonical basis,
$${\mathtt{XOR}}^{(m,n)} \ket{x_1,\ldots, x_m, y_1,\ldots,y_n}
=\ket{x_1,\ldots, x_m, y_1,\ldots, y_{n-1}}\otimes\ket{x_m \widehat{+} y_n}.$$

 \end{definition}

\begin{definition}{\em (The Hadamard-gate)}
\nl For any $n\geq 1$, the {\em Hadamard-gate} on $\mathcal
H^{(n)}$ is the linear operator $\SId^{(n)}$ such that for every
element $\ket{x_1,\ldots,x_n}$ of the  canonical basis:
\begin{equation*}
\SId^{(n)}\ket{x_1,\ldots,x_n}=
 \ket{x_1,\ldots,x_{n-1}}\otimes
  \frac{1}{\sqrt{2}}\left((-1)^{x_{n}}\ket{x_n} +\ket{1-x_{n}}\right).
\end{equation*}

  \end{definition}

In particular we obtain:
$$\SId^{(1)} \ket{0}= \frac{1}{\sqrt{2}}(\ket{0} + \ket{1});
\SId^{(1)} \ket{1}= \frac{1}{\sqrt{2}}(\ket{0} - \ket{1}). $$
Hence, $\SId^{(1)}$ transforms bits into genuine qubits.

\begin{definition}{\em (The square root of $\Not$)}\nl
For any $n\geq 1$, the {\em square root of $\Not$} on $\mathcal
H^{(n)}$ is the linear operator $\sqrt{\Not}^{(n)}$ such that
for every element $\ket{x_1,\ldots,x_n}$ of the canonical basis:
\begin{equation*}
\sqrt{\Not}^{(n)}\ket{x_1,\ldots,x_n}=
\ket{x_1,\ldots,x_{n-1}}\otimes \left(\frac{1-i}{2}\ket{x_n} +\frac{1+i}{2}\ket{1-x_{n}}\right),
\end{equation*}
where $i=\sqrt{-1}$.
\end{definition}

All gates can be naturally transposed from the canonical
truth-perspective to any truth-perspective  $\mathfrak T$. Let
$G^{(n)}$ be any gate defined with respect to the canonical
truth-perspective. The {\em twin-gate\/} $G^{(n)}_{\mathfrak T}$,
defined with respect to the truth-perspective $\mathfrak T$, is
determined as follows:
$$
G^{(n)}_{\mathfrak T}:=
 \mathfrak T^{(n)} G^{(n)}\mathfrak T^{{(n)}^\dagger},   $$
 where
$\mathfrak T^{{(n)}^\dagger}$ is the adjoint of $\mathfrak T$.

All $\mathfrak T$-gates can be canonically extended to the set
$\mathfrak D$ of all qumixes.  Let $G_\mathfrak T$ be any gate
defined on $\mathcal H^{(n)}$. The corresponding {\em qumix
gate\/} (also called {\em unitary quantum operation\/})
$^\mathfrak DG_\mathfrak T$ is defined as follows for any $\rho
\in \mathfrak D(\mathcal H^{(n)})$:
$$^{\mathfrak D}G_\mathfrak T\rho=
 G_\mathfrak T\,\rho\, G_\mathfrak T^\dagger.  $$
 For the sake simplicity, also the qumix gates $^\mathfrak DG_\mathfrak
 T$ will be briefly called  {\em gates}.

 The Toffoli-gate
 $^{\mathfrak D}{\mathtt{T}}^{(m,n,p)}_\mathfrak T$ allows us to define a
 reversible operation ${\mathtt{AND}}_\mathfrak T^{(m,n)}$ that represents
  a {\em
 holistic conjunction\/}.

 \begin{definition} {\em (The holistic
 conjunction)}\label{de:and}\nl\nl
For any $m,n\geq 1$  the {\em holistic conjunction \/} ${\mathtt{AND}}_\mathfrak T^{(m,n)}$  with respect to the truth-perspective
$\mathfrak T$ is defined as follows for any qumix $\rho \in
\mathfrak D(\mathcal H^{(m+n)})$:
$${\mathtt{AND}}_\mathfrak T^{(m,n)}(\rho):=\,^\mathfrak D {\mathtt{T}}^{(m,n,1)}_\mathfrak T
(\rho \otimes \,
^\mathfrak T P_0^{(1)}),  $$

where the $\frak T$-falsity  $^\mathfrak T P_0^{(1)}$  plays the
role of an {\em ancilla}.
\end{definition}

When $\mathfrak T = {\mathtt{I}}$, we will write ${\mathtt{AND}}^{(m,n)}$
(instead of ${\mathtt{AND}}_{{\mathtt{I}}}^{(m,n)}$).

It is worth-while noticing that generally
$$ {\mathtt{AND}}_\mathfrak T^{(m,n)}(\rho)\,\neq\,
{\mathtt{AND}}_\mathfrak T^{(m,n)}(Red^{(1)}_{[m,n]}(\rho)\otimes
Red^{(2)}_{[m,n]}(\rho)).  $$ Roughly, we might say that the holistic
conjunction defined on a global information consisting of two
parts  does not generally  coincide with the conjunction of the
two separate parts. As an example, we can consider the following
qumix (which corresponds to a maximally entangled pure state):

$$\rho = P_{\frac {1}{\sqrt{2}}(\ket{0,0} + \ket{1,1})}.$$
We have:
$$ {\mathtt{AND}}^{(1,1)}(\rho) = \, ^\mathfrak D{\mathtt{T}}^{(1,1,1)}
 (P_{\frac {1}{\sqrt{2}}(\ket{0,0} + \ket{1,1})} \otimes P^{(1)}_0) \,=\,
  P_{\frac {1}{\sqrt{2}}(\ket{0,0,0} + \ket{1,1,1})}\quad, $$ which
also represents a maximally entangled quregister.

At the same time we have:
$${\mathtt{AND}}^{(1,1)}(Red^{(1)}_{[1,1]}(\rho)\otimes  Red^{(2)}_{[1,1]}(\rho)) =
{\mathtt{AND}}^{(1,1)}(\frac {1}{2}{\mathtt{I}}^{(1)} \otimes
 \frac {1}{2}{\mathtt{I}}^{(1)}),  $$
 which is a proper mixture.

 Furthermore, we have:
 $${\mathtt{p}}({\mathtt{AND}}^{(1,1)}(\rho))= \frac {1}{2};\quad
 {\mathtt{p}}({\mathtt{AND}}^{(1,1)}
 (Red^{(1)}_{[1,1]}(\rho)\otimes  Red^{(2)}_{[1,1]}(\rho)))= \frac {1}{4}.  $$


We will now investigate some interesting probabilistic properties
of the holistic conjunction (illustrated by the following Theor.
\ref{th:prand} and Theor. \ref{th:and}).

Let us first recall that the set of all projection operators of a
Hilbert space $\mathcal H^{(n)}$ is partially ordered by the
following relation:
$$
P\le Q\quad\text{iff} \quad PQ=P.$$ We have: $P\le
Q\quad\text{iff} \quad\Tr({P\rho})\le\Tr(Q\rho) \quad \text{for
any} \quad\rho\in\mathfrak D(\mathcal H^{(n)}).
$

\begin{lemma}\nl \label{le:project}
\begin{enumerate}
\item  [(1)]$P_1^{(m)}\otimes P_1^{(n)} \le P_1^{(m)}\otimes
    \mathtt{I}^{(n)}$.
\item  [(2)]$P_1^{(m)}\otimes P_1^{(n)} \le \mathtt{    I}^{(m)}\otimes P_1^{(n)}$.
\item [(3)] For any $\rho\in\mathfrak D( \mathcal
    H^{(m)}\otimes \mathcal H^{(n)})$:
\begin{enumerate}
\item [(a)]$\mathtt{p}(\rho)=\mathtt{    p}(Red^{(2)}_{[m+n-1,1]}(\rho))$;
\item [(b)]$\mathtt{tr}((P_1^{(m)}\otimes \mathtt{    I}^{(n)})\rho)= \mathtt{    tr}(P_1^{(1)}Red^{(2)}_{[m-1,1,n]}(\rho))$.

\end{enumerate}
\end{enumerate}
\end{lemma}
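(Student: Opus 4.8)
The four claims are all routine consequences of the definitions of reduced state and of the projection operators $P_1^{(n)}$, together with Lemma \ref{le:reduced}. I would treat them in the order given, since (3)(a) and (3)(b) will lean on the partial-trace characterization.

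For (1) and (2): by the criterion $P \le Q$ iff $PQ = P$ recalled just before the lemma, it suffices to compute the relevant products on the canonical basis. For (1), note that $(P_1^{(m)} \otimes P_1^{(n)})(P_1^{(m)} \otimes \mathtt{I}^{(n)}) = (P_1^{(m)} P_1^{(m)}) \otimes (P_1^{(n)} \mathtt{I}^{(n)}) = P_1^{(m)} \otimes P_1^{(n)}$, using that $P_1^{(m)}$ is idempotent and $\mathtt{I}^{(n)}$ is the identity. Claim (2) is symmetric, with the roles of the two tensor factors swapped. Alternatively — and perhaps more in the spirit of the paper — one can observe that a canonical-basis register $\ket{x_1,\dots,x_{m+n}}$ lies in the range of $P_1^{(m)} \otimes P_1^{(n)}$ iff $x_m = 1$ and $x_{m+n} = 1$; this forces membership in the range of $P_1^{(m)} \otimes \mathtt{I}^{(n)}$ (which only requires $x_m = 1$) and in the range of $\mathtt{I}^{(m)} \otimes P_1^{(n)}$ (which only requires $x_{m+n} = 1$), giving the two subspace inclusions at once. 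Either route is a one-line verification.

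For (3)(a): here $\mathtt{p}(\rho) = \Tr(P_1^{(m+n)} \rho)$, and the key point is that the $\mathtt{I}$-truth $P_1^{(m+n)}$ of $\mathcal H^{(m+n)}$ — the projection onto registers whose \emph{last} bit is $1$ — factors as $\mathtt{I}^{(m+n-1)} \otimes P_1^{(1)}$, since being a true register depends only on the final qubit (Definition \ref{def:true}). Then, applying Lemma \ref{le:reduced} with $A^{(1)} := P_1^{(1)}$ (a self-adjoint operator on $\mathcal H^{(1)}$), $m \rightsquigarrow m+n-1$, $p \rightsquigarrow 1$, and viewing $\rho \in \mathfrak D(\mathcal H^{(m+n-1)} \otimes \mathcal H^{(1)})$, we get
$$\Tr\big((\mathtt{I}^{(m+n-1)} \otimes P_1^{(1)})\rho\big) = \Tr\big(P_1^{(1)}\, Red^{(2)}_{[m+n-1,1]}(\rho)\big) = \mathtt{p}\big(Red^{(2)}_{[m+n-1,1]}(\rho)\big),$$
the last equality because on $\mathcal H^{(1)}$ the $\mathtt{I}$-truth is exactly $P_1^{(1)}$. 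Chaining the identities yields (3)(a).

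For (3)(b): the argument is structurally identical, but now we keep two groups of coordinates. Write $P_1^{(m)} \otimes \mathtt{I}^{(n)}$ acting on $\mathcal H^{(m)} \otimes \mathcal H^{(n)} = \mathcal H^{(m-1)} \otimes \mathcal H^{(1)} \otimes \mathcal H^{(n)}$ as $\mathtt{I}^{(m-1)} \otimes P_1^{(1)} \otimes \mathtt{I}^{(n)}$ (the $m$-th bit is the one being tested). Regard $\rho \in \mathfrak D(\mathcal H^{(m-1)} \otimes \mathcal H^{(1)} \otimes \mathcal H^{(n)})$ and apply the defining property of $Red^{(2)}_{[m-1,1,n]}$ — i.e. Lemma \ref{le:reduced} with the grouping that isolates the second block — to the self-adjoint operator $P_1^{(1)}$ on that middle factor:
$$\mathtt{tr}\big((\mathtt{I}^{(m-1)} \otimes P_1^{(1)} \otimes \mathtt{I}^{(n)})\rho\big) = \mathtt{tr}\big(P_1^{(1)}\, Red^{(2)}_{[m-1,1,n]}(\rho)\big),$$
which is exactly the claimed equality once one recognizes the left side as $\mathtt{tr}((P_1^{(m)} \otimes \mathtt{I}^{(n)})\rho)$.

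The only mild obstacle is bookkeeping: making sure the tensor-factor regrouping is stated cleanly (that $P_1^{(k)}$ on $\mathcal H^{(k)}$ equals $\mathtt{I}^{(k-1)} \otimes P_1^{(1)}$, which is immediate from the definition of a true register depending only on the last coordinate), and that Lemma \ref{le:reduced} is invoked with the correct identification of the ``$\mathcal H^{(m)}$'' and ``$\mathcal H^{(p)}$'' blocks in each case — in (3)(a) the second block is a single qubit at the end, in (3)(b) the retained-then-traced structure requires the three-fold grouping with the reduced state keeping the middle qubit. No genuine difficulty arises; these are all linear-algebraic identities on finite-dimensional spaces.
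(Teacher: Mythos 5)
Your proposal is correct and follows essentially the same route as the paper's own proof: parts (1)--(2) via the idempotency computation $PQ=P$, and parts (3)(a)--(b) by rewriting $P_1^{(k)}$ as an identity tensored with $P_1^{(1)}$ on the relevant qubit and invoking the defining trace property of the reduced state (Lemma \ref{le:reduced}). The only cosmetic difference is that in (3)(b) the paper passes through the intermediate reduced state $Red^{(2,3)}_{[m-1,1,n]}(\rho)$ before reducing further to $Red^{(2)}_{[m-1,1,n]}(\rho)$, whereas you apply the multi-block version of the trace identity in a single step; both are sound.
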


\begin{proof}\
\begin{enumerate}
\item [(1)] $(P_1^{(m)}\otimes P_1^{(n)})(P_1^{(m)}\otimes
    {\mathtt{I}}^{(n)})=(P_1^{(m)}P_1^{(m)})\otimes
    (P_1^{(n)}{\mathtt{I}}^{(n)})=P_1^{(m)}\otimes P_1^{(n)} $.
\item  [(2)]Similar to $(1)$.
\item [(3)]
\begin{enumerate}
\item [(a)] ${\mathtt{p}}(\rho)=\Tr(P_1^{(m+n)}\rho) =\Tr(({\mathtt{I}}
    ^{(m+n-1)}\otimes P_1^{(1)})\rho)$ \nl
    $=\Tr(P_1^{(1)}Red^{(2)}_{[m+n-1,1]}(\rho))={\mathtt{p}}(Red^{(2)}_{[m+n-1,1]}(\rho))$.

\item [(b)] $\Tr((P_1^{(m)}\otimes {\mathtt{I}}^{(n)})\rho)=\Tr(({\mathtt{I}}^{(m-1)}\otimes
    P_1^{(1)}\otimes {\mathtt{I}}^{(n)})\rho)$\nl
    $=\Tr((P_1^{(1)}\otimes {\mathtt{I}}^{(n)})Red^{(2,3)}_{[m-1,1,n]}(\rho))=\Tr(P_1^{(1)}
    Red^{(2)}_{[m-1,1,n]}(\rho))$.
\end{enumerate}
\end{enumerate}
\end{proof}

\begin{theorem}\label{th:prand}\nl \nl
For any $\rho   \in \mathfrak D(\mathcal H^{(m+n)}),
 \mathtt{p}(\mathtt{AND}^{(m,n)}(\rho))= \mathtt{tr}((P_1^{(m)}\otimes
P_1^{(n)})\rho).$
\begin{proof}\nl \nl
 By definition of ${\mathtt{AND}}^{(m,n)}$ and by Lemma \ref{le:tof}:
\begin{align*}
  &{\mathtt{AND}}^{(m,n)}(\rho) =
{\mathtt{T}}^{(m,n,1)}(\rho\otimes P_0^{(1)}){\mathtt{T}}^{(m,n,1)} \\
  &= [({\mathtt{I}}^{(m+n)}-P_1^{(m)}\otimes P_1^{(n)})\otimes {\mathtt{I}}^{(1)}]
(\rho\otimes P_0^{(1)})[({\mathtt{I}}^{(m+n)}-P_1^{(m)}\otimes P_1^{(n)})
\otimes {\mathtt{I}}^{(1)}]+\\
  &+[P_1^{(m)}\otimes P_1^{(n)}\otimes {\Not}^{(1)}]
  (\rho\otimes P_0^{(1)})[P_1^{(m)}\otimes P_1^{(n)}\otimes {\Not}^{(1)}].\\
 & \text{One can easily see that}\\
 & P_1^{(m+n+1)}({\mathtt{I}}^{(m+n)}-P_1^{(m)}\otimes P_1^{(n)})\otimes
  {\mathtt{I}}^{(1)}
(\rho\otimes P_0^{(1)})({\mathtt{I}}^{(m+n)}-P_1^{(m)}\otimes
P_1^{(n)})\otimes {\mathtt{I}}^{(1)}\\
& \text{is the null projection operator. Consequently:}\\
&{\mathtt{p}}({\mathtt{AND}}^{(m,n)}(\rho))=\\
&=\Tr(P_1^{(m+n+1)}(P_1^{(m)}
\otimes P_1^{(n)}\otimes {\Not}^{(1)})(\rho
\otimes P_0^{(1)})(P_1^{(m)}\otimes P_1^{(n)}\otimes {\Not}^{(1)}))=\\
&=\Tr(P_1^{(m+n+1)}((P_1^{(m)}\otimes P_1^{(n)})\rho (P_1^{(m)}
\otimes P_1^{(n)}))\otimes {\Not}^{(1)}P_0^{(1)}{\Not}^{(1)}))=\\
& = \Tr ((P_1^{(m)}\otimes P_1^{(n)}) \rho (P_1^{(m)}\otimes P_1^{(n)})) \Tr
(P_1^{(1)}P_1^{(1)})\\
&=\Tr((P_1^{(m)}\otimes P_1^{(n)})\rho).
\end{align*}
\end{proof}
\end{theorem}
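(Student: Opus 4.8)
The plan is to compute $\mathtt{p}(\mathtt{AND}^{(m,n)}(\rho))$ directly from the definition, using Lemma~\ref{le:tof} to rewrite the Toffoli-gate as a sum of two orthogonal ``branches'' and then exploiting the fact that only the second branch contributes to the $\mathtt{I}$-truth. Recall that $\mathtt{AND}^{(m,n)}(\rho) = \mathtt{T}^{(m,n,1)}(\rho \otimes P_0^{(1)})\mathtt{T}^{(m,n,1)}$, and that by Lemma~\ref{le:tof} we may write $\mathtt{T}^{(m,n,1)} = A + B$ where $A = ({\mathtt{I}}^{(m+n)} - P_1^{(m)} \otimes P_1^{(n)}) \otimes {\mathtt{I}}^{(1)}$ and $B = P_1^{(m)} \otimes P_1^{(n)} \otimes {\Not}^{(1)}$. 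Expanding, $\mathtt{AND}^{(m,n)}(\rho) = A(\rho \otimes P_0^{(1)})A + A(\rho \otimes P_0^{(1)})B + B(\rho \otimes P_0^{(1)})A + B(\rho \otimes P_0^{(1)})B$; the cross-terms vanish because $A$ and $B$ have orthogonal ranges and initial spaces (in particular $AB^\dagger = 0 = A^\dagger B$ on the relevant factor). So $\mathtt{AND}^{(m,n)}(\rho) = A(\rho \otimes P_0^{(1)})A + B(\rho \otimes P_0^{(1)})B$.

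Next I would observe that $P_1^{(m+n+1)}$ annihilates the first summand: the range of $A(\rho \otimes P_0^{(1)})A$ lives in the subspace $\mathcal{H}^{(m+n)} \otimes \ket{0}$ (the last tensor factor of the ancilla is untouched by $A$ and starts as $P_0^{(1)}$), whereas $P_1^{(m+n+1)}$ projects onto registers whose last $\mathtt{I}$-bit is $\ket{1}$; hence $P_1^{(m+n+1)} \cdot A(\rho\otimes P_0^{(1)})A = 0$. Therefore
$$\mathtt{p}(\mathtt{AND}^{(m,n)}(\rho)) = \Tr\!\left(P_1^{(m+n+1)}\, B(\rho \otimes P_0^{(1)}) B\right).$$
Then I would factor $B(\rho \otimes P_0^{(1)})B$ as a tensor product over the first $m+n$ qubits and the last one: since $B = (P_1^{(m)} \otimes P_1^{(n)}) \otimes {\Not}^{(1)}$ acts as $P_1^{(m)} \otimes P_1^{(n)}$ on the first block and ${\Not}^{(1)}$ on the ancilla, and $\rho \otimes P_0^{(1)}$ is itself a tensor product across this cut, we get $B(\rho \otimes P_0^{(1)})B = \big[(P_1^{(m)}\otimes P_1^{(n)})\rho(P_1^{(m)}\otimes P_1^{(n)})\big] \otimes \big[{\Not}^{(1)} P_0^{(1)} {\Not}^{(1)}\big]$, and ${\Not}^{(1)} P_0^{(1)} {\Not}^{(1)} = P_1^{(1)}$.

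Finally I would use $P_1^{(m+n+1)} = {\mathtt{I}}^{(m+n)} \otimes P_1^{(1)}$ to split the trace as a product of traces over the two factors:
$$\mathtt{p}(\mathtt{AND}^{(m,n)}(\rho)) = \Tr\!\big((P_1^{(m)}\otimes P_1^{(n)})\rho(P_1^{(m)}\otimes P_1^{(n)})\big)\cdot\Tr\!\big(P_1^{(1)}P_1^{(1)}\big).$$
Since $P_1^{(1)}$ is a rank-one projection, $\Tr(P_1^{(1)}P_1^{(1)}) = 1$, and since $P_1^{(m)}\otimes P_1^{(n)}$ is a projection and the trace is cyclic, $\Tr\big((P_1^{(m)}\otimes P_1^{(n)})\rho(P_1^{(m)}\otimes P_1^{(n)})\big) = \Tr\big((P_1^{(m)}\otimes P_1^{(n)})\rho\big)$, which is the claimed value. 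The main obstacle is the bookkeeping in justifying that the cross-terms and the $A$-branch drop out of the trace against $P_1^{(m+n+1)}$; this is where one must be careful about which tensor factor each projector acts on, but it is ultimately a routine orthogonality argument once the ranges are identified. (As stated, the displayed identity $P = Q$ iff $PQ = P$ and its trace-monotonicity consequence, recalled just before Lemma~\ref{le:project}, are exactly the facts needed to see $\mathtt{T}^{(m,n,1)}$ is self-adjoint-friendly here and to handle the final trace simplification.)
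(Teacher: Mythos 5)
Your proposal follows exactly the paper's route: expand $\mathtt{T}^{(m,n,1)}$ via Lemma \ref{le:tof} into the two branches $A=(\mathtt{I}^{(m+n)}-P_1^{(m)}\otimes P_1^{(n)})\otimes\mathtt{I}^{(1)}$ and $B=P_1^{(m)}\otimes P_1^{(n)}\otimes\Not^{(1)}$, kill the $A$-branch under $P_1^{(m+n+1)}$, factor the $B$-branch across the ancilla cut using $\Not^{(1)}P_0^{(1)}\Not^{(1)}=P_1^{(1)}$, and finish by cyclicity of the trace. The one point to fix is your justification for discarding the cross-terms: $A(\rho\otimes P_0^{(1)})B$ and $B(\rho\otimes P_0^{(1)})A$ do \emph{not} vanish as operators, and $AB=0$ buys you nothing once $\rho\otimes P_0^{(1)}$ sits between the two factors. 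Writing $P=P_1^{(m)}\otimes P_1^{(n)}$, one has $A(\rho\otimes P_0^{(1)})B=[(\mathtt{I}^{(m+n)}-P)\rho P]\otimes[P_0^{(1)}\Not^{(1)}]$, which is generally nonzero. What is true, and all you need, is that these terms contribute nothing to $\Tr(P_1^{(m+n+1)}\,\cdot\,)$: their ancilla factors are $P_0^{(1)}\Not^{(1)}$ and $\Not^{(1)}P_0^{(1)}$, and $P_1^{(1)}P_0^{(1)}\Not^{(1)}=0$ while $\Tr(P_1^{(1)}\Not^{(1)}P_0^{(1)})=\Tr(P_0^{(1)}P_1^{(1)}\Not^{(1)})=0$. (The paper silently drops the cross-terms as well, so its displayed operator identity for $\mathtt{AND}^{(m,n)}(\rho)$ has the same gap; the trace argument just given is the honest repair for both.) Everything else in your write-up, including the annihilation of the $A$-branch and the final simplification $\Tr(P\rho P)=\Tr(P\rho)$, is correct and matches the paper.
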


\begin{theorem}\label{th:and}\nl
For any $\rho \in \mathfrak D(\mathcal H^{(m)} \otimes  \mathcal
H^{(n)})$:
$$\mathtt{p}(\mathtt{AND}^{(m,n)}(\rho))
\le \mathtt{p}(Red^{(1)}_{[m,n]}(\rho))\quad\text{and}\quad
\mathtt{p}({\mathtt{AND}}^{(m,n)}(\rho))
\le \mathtt{p}(Red^{(2)}_{[m,n]}(\rho)).$$

\begin{proof} \nl \nl
 Let $\rho   \in \mathfrak D(\mathcal
H^{(m+n)})$. By Lemma \ref{le:project} (1) we have:
$$P_1^{(m)}\otimes P_1^{(n)}\le P_1^{(m)}\otimes {\mathtt{I}}^{(n)}.$$
 Hence, by Lemma \ref{le:project} (3):
$$\Tr((P_1^{(m)}\otimes P_1^{(n)})\rho)\le \Tr((P_1^{(m)}\otimes {\mathtt{I}}^{(n)})\rho)=
\Tr(P_1^{(1)}Red_{[m-1,1,n]}^{(2)}(\rho))= {\mathtt{p}}(Red^{(1)}_{[m,n]}(\rho)).$$  Since ${\mathtt{p}}({\mathtt{AND}}^{(m,n)}(\rho))=\Tr((P_1^{(m)}\otimes P_1^{(n)})\rho)$ (by
Theorem \ref{th:prand}), we obtain:
$${\mathtt{p}}({\mathtt{AND}}^{(m,n)}(\rho))\le {\mathtt{p}}(Red^{(1)}_{[m,n]}(\rho)).$$

 In a similar way one can prove
that:
$${\mathtt{p}}({\mathtt{AND}}^{(m,n)}(\rho))\le {\mathtt{p}}(Red^{(2)}_{[m,n]}(\rho)).$$

\end{proof}

\end{theorem}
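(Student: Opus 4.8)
The plan is to reduce everything to the identity established in Theorem \ref{th:prand}, namely $\mathtt{p}(\mathtt{AND}^{(m,n)}(\rho)) = \mathtt{tr}((P_1^{(m)}\otimes P_1^{(n)})\rho)$, and then to exploit that replacing one of the two factors $P_1$ by the identity can only increase the trace. Concretely, I would use the two projection inequalities of Lemma \ref{le:project}, the fact recalled just before that lemma that $P\le Q$ is equivalent to $\mathtt{tr}(P\sigma)\le\mathtt{tr}(Q\sigma)$ for every $\sigma\in\mathfrak D(\mathcal H^{(m+n)})$, and the defining property of reduced states from Lemma \ref{le:reduced} to rewrite the resulting expressions as the probabilities of the reduced states.

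For the first inequality: by Theorem \ref{th:prand} and Lemma \ref{le:project}(1),
\[
\mathtt{p}(\mathtt{AND}^{(m,n)}(\rho)) = \mathtt{tr}((P_1^{(m)}\otimes P_1^{(n)})\rho) \le \mathtt{tr}((P_1^{(m)}\otimes \mathtt{I}^{(n)})\rho).
\]
By Lemma \ref{le:reduced}, applied with the self-adjoint operator $P_1^{(m)}$ on the first $m$ factors, the last trace equals $\mathtt{tr}(P_1^{(m)}\,Red^{(1)}_{[m,n]}(\rho))$, which by the very definition of $\mathtt{p}$ on $\mathcal H^{(m)}$ is $\mathtt{p}(Red^{(1)}_{[m,n]}(\rho))$. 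This gives $\mathtt{p}(\mathtt{AND}^{(m,n)}(\rho)) \le \mathtt{p}(Red^{(1)}_{[m,n]}(\rho))$. Alternatively one can route through Lemma \ref{le:project}(3)(b), writing $P_1^{(m)} = \mathtt{I}^{(m-1)}\otimes P_1^{(1)}$ and using Lemma \ref{le:project}(3)(a), which reaches the same conclusion.

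The second inequality is entirely symmetric: now use Lemma \ref{le:project}(2) to get $\mathtt{tr}((P_1^{(m)}\otimes P_1^{(n)})\rho) \le \mathtt{tr}((\mathtt{I}^{(m)}\otimes P_1^{(n)})\rho)$, and then the ``$Red^{(2)}$ version'' of Lemma \ref{le:reduced} to identify the right-hand side with $\mathtt{tr}(P_1^{(n)}\,Red^{(2)}_{[m,n]}(\rho)) = \mathtt{p}(Red^{(2)}_{[m,n]}(\rho))$. Combining with Theorem \ref{th:prand} finishes the proof.

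I do not expect a genuine obstacle here: once Theorem \ref{th:prand} and Lemma \ref{le:project} are in hand, the argument is pure bookkeeping. The only point requiring a moment of care is keeping straight which tensor factor survives the partial trace — that tracing out the last $n$ factors of $\rho$ yields $Red^{(1)}_{[m,n]}(\rho)$, and that $\mathtt{p}$ on $\mathcal H^{(m)}$ is exactly the $P_1^{(m)}$-trace — so that the two inequalities land on $\mathtt{p}(Red^{(1)}_{[m,n]}(\rho))$ and $\mathtt{p}(Red^{(2)}_{[m,n]}(\rho))$ rather than on some other reduction.
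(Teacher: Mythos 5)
Your proposal is correct and follows essentially the same route as the paper: both reduce to Theorem \ref{th:prand}, apply the projection inequalities of Lemma \ref{le:project}(1)--(2) together with the monotonicity of the trace, and then identify $\mathtt{tr}((P_1^{(m)}\otimes\mathtt{I}^{(n)})\rho)$ with $\mathtt{p}(Red^{(1)}_{[m,n]}(\rho))$. The only cosmetic difference is that you invoke the defining property of reduced states (Lemma \ref{le:reduced}) directly for that last identification, whereas the paper routes through Lemma \ref{le:project}(3)(b) --- a variant you yourself note as the alternative.
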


Theorems \ref{th:prand} and  \ref{th:and} (which have been proved
for the canonical holistic conjunctions ${\mathtt{AND}}^{(m,n)}$) can
be easily generalized to any ${\mathtt{AND}}_\mathfrak T^{(m,n)}$
(where $\mathfrak T$ is any truth-perspective).

\section{A Holistic quantum computational semantics}
 Let us first present the syntactical basis for our  semantics. The linguistic framework
   is a
    {\em quantum computational
  language\/} $\mathcal L$, whose alphabet
   contains atomic formulas (say, ``the spin-value
in the $x$-direction is up''), including two privileged formulas
$\bf t$ and $\bf f$ that represent the truth-values {\em Truth\/}
and {\em Falsity\/}, respectively. The connectives of $\mathcal L$
correspond to some gates
  that have a special logical and computational interest:
 the negation $\lnot$ (corresponding to the gate {\em negation\/}),
  a ternary connective
$\intercal$ (corresponding to the {\em Toffoli-gate\/}), the
exclusive disjunction $\uplus$ (corresponding to ${\mathtt{XOR}}$), the
square root of the identity $\sqrt{id}$ (corresponding to the
 {\em Hadamard-gate\/}), the square root of negation
$\sqrt{\lnot}$ (corresponding to the gate {\em square root of
${\mathtt{NOT}}$\/}). The notion of {\em formula\/} (or {\em
sentence\/}) of $\mathcal L$ is inductively defined (in the
expected way).
 Accordingly, if
$\alpha$, $\beta$, $\gamma$ are formulas, then  the expressions
$\lnot \alpha$, $\sqrt{id}\, \alpha$, $\sqrt{\lnot}\, \alpha$,
$\intercal(\alpha, \beta,\, \gamma)$, $\alpha\uplus \beta$ are
formulas.

Recalling the definition of the holistic conjunction ${\mathtt{AND}}^{(m,n)}$, it is useful  to introduce  a binary logical
conjunction $\land$ by means of the following metalinguistic
definition:
$$ \alpha \land \beta := \intercal(\alpha, \beta, \mathbf f)$$
(where the  false formula $\mathbf f$ plays  the role of a {\em
syntactical ancilla\/}).

On this basis, a (binary) inclusive disjunction is
(metalinguistically) defined via de Morgan-law:
$$ \alpha \lor \beta := \lnot(\lnot \alpha \land \lnot \beta).$$

The connectives $\lnot$, $\land$, $\lor$ and $\uplus$  will be
also termed {\em quantum computational Boolean connectives}; while
$\sqrt{id}$  and $\sqrt{\lnot}$  represent {\em genuine quantum
computational connectives}. A formula that contains at most
Boolean connectives  is  called a {\em Boolean formula\/} of
$\mathcal L$.

In the following we will use $\mathbf q,\mathbf q_1, \ldots$ as
metavariables for atomic formulas, while $\alpha,\beta,
\gamma,\ldots $ will represent generic formulas.

 \begin{definition} {\em (The atomic complexity of a formula)}
  \label{de:atcomp}\nl
 The atomic complexity $At(\alpha)$ of a formula $\alpha$ is
  the number of occurrences of atomic formulas in
 $\alpha$.
 \end{definition}

 For instance, $At(\intercal(\mathbf q, \mathbf q,\mathbf f)) = 3$.
 The notion of atomic complexity plays an
 important semantic  role. As we will see, the meaning of any formula whose
 atomic complexity is $n$ shall live in the domain
 $\mathfrak D(\mathcal H^{(n)})$. For this reason,
 $\mathcal H^{(At(\alpha))}$ (briefly indicated by $\mathcal
 H^\alpha$) will be also called the {\em semantic space\/} of $\alpha$.

 Any formula
$\alpha$ can be naturally decomposed into its parts, giving rise
to a special configuration called the {\em syntactical tree\/} of
$\alpha$ (indicated by {$STree^{\alpha}$}). Roughly,
$STree^{\alpha}$ can be represented as a finite sequence of {\em
levels\/}:
\begin{align*}
&Level_h^\alpha\\
&\ldots \ldots \\
&Level_1^\alpha
\end{align*} where:

\begin{itemize}
\item each $Level_i^\alpha$ (with $1 \le i \le h$) is a
    sequence   $(\beta_1, \ldots,  \beta_m)$  of subformulas
    of $\alpha$;
    \item the {\em bottom level\/} $Level_1^\alpha$ is
        $(\alpha)$; \item the {\em top level\/ }
        $Level_h^\alpha$ is the sequence $(\mathbf q_1,
        \ldots, \mathbf q_r)$, where $\mathbf q_1, \ldots,
        \mathbf q_r$ are the atomic occurrences in $\alpha$;
        \item for any $i$ (with $1 \le i < h$),
            $Level_{i+1}^\alpha$ is the  sequence obtained by
            dropping the {\em principal connective\/} in all
            molecular formulas  occurring at $Level_i^\alpha$,
            and by repeating all the atomic sentences that
            occur at $Level_i^\alpha$.
\end{itemize}

By {\em Height\/} of $\alpha$ (indicated by ${Height(\alpha)}$) we
mean the number $h$ of levels of the syntactical tree of $\alpha$.

As an example, consider the following formula: $$\alpha=
\lnot\intercal(\mathbf q,\lnot \mathbf q,\mathbf f) =\lnot(\mathbf
q \land \lnot \mathbf q),$$ which represents an instance of the non-contradiction principle.

The syntactical tree of $\alpha$ is the following sequence of
levels:
\begin{align*}
Level_4^\alpha &= (\mathbf q, \mathbf q, \mathbf f)\\
Level_3^\alpha &= (\mathbf q,\lnot \mathbf q;\mathbf f)\\
Level_2^\alpha &=
(\intercal(\mathbf q,\lnot \mathbf q,\mathbf f))\\
Level_1^\alpha &=
(\lnot\intercal(\mathbf q,\lnot \mathbf q,\mathbf f))
\end{align*}
Clearly, $Height(\alpha)= 4.$

For any choice of a truth-perspective $\mathfrak T$, the
syntactical tree of any formula $\alpha$   uniquely determines a
sequence of gates, all defined on the semantic space of $\alpha$.
As an example, consider again the formula $\alpha= \lnot
\intercal(\mathbf q,\lnot \mathbf q,\mathbf f)$. In the
syntactical tree of $\alpha$ the third level has been obtained
from the fourth level by repeating the first occurrence of
$\mathbf q$, by negating the second occurrence of $\mathbf q$ and
by repeating $\mathbf f$, while the second and the first level
have been obtained  by applying, respectively,
 the connectives $\intercal$ and $\lnot$ to formulas occurring at
 the levels
immediately above.

Accordingly, one can say that, for any choice of a
truth-perspective $\mathfrak T$, the syntactical tree of $\alpha$
uniquely determines the following sequence consisting of three
gates, all defined on the semantic space of $\alpha$:
$$\left(^\mathfrak D{\mathtt{I}}_\mathfrak T^{(1)}
\otimes\, ^\mathfrak D{\mathtt{NOT}}_\mathfrak T^{(1)}
\otimes \,^\mathfrak D{\mathtt{I}}_\mathfrak T^{(1)},\quad
^\mathfrak D{\mathtt{T}}^{(1,1,1)}_\mathfrak T, \quad ^\mathfrak D{\mathtt{NOT}}^{(3)}_\mathfrak T\right).$$ Such a sequence is called the
$\mathfrak T$-{\em gate tree\/} of $\alpha$. This procedure can be
naturally generalized to any formula $\alpha$. The general form of
the $\mathfrak T$- gate tree of  $\alpha$ will be:
  $$(^{\mathfrak D}G^{\alpha}_{\mathfrak T_{(h-1)}},
\ldots, ^{\mathfrak D}G^{\alpha}_{\mathfrak T_{(1)}}),$$
 where $h$ is the Height of
$\alpha$.

From an intuitive point of view, any formula $\alpha$ of $\mathcal
L$ can be regarded as a synthetic logical description of a quantum
circuit that may assume as inputs qumixes living in the semantic
space of $\alpha$. For instance, the circuit described by $\alpha
= \lnot \intercal(\mathbf q, \lnot \mathbf q, \, \mathbf f)$ can
be represented as follows:

\begin{figure}[h]
\centering
\includegraphics[scale=0.4]{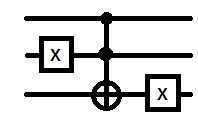}
\end{figure}\nl
Thus, $\mathcal L$-formulas turn out to have a characteristic {\em
dynamic\/} character, representing  systems of {\em computational
actions\/}.

Now the holistic semantics comes into play.\footnote{See
\cite{DFGLLS} and \cite{DBGS}.} The intuitive idea can be sketched
as follows. For any choice of a truth-perspective, a {\em holistic
model\/}  of the language $\mathcal L$ assigns to any formula
$\alpha$ a {\em global informational meaning\/} that lives in
$\mathcal H^\alpha$ (the semantic space of $\alpha$). This meaning
determines the {\em contextual meanings\/} of all subformulas of
$\alpha$ (from the whole to the parts!). It may happen that one
and the same model assigns to a given formula $\alpha$ different
contextual meanings in different contexts.

Before defining the concept of model, it is expedient to introduce
the weaker notion of {\em holistic map\/}  for the language
$\mathcal L$.

\begin{definition} {\em (Holistic map)} \label{de:holmap}\nl
 A {\em holistic map} for $\mathcal L$ (associated to a
 truth-perspective $\mathfrak T$) is a
map ${\mathtt{Hol}}_\mathfrak T$ that assigns a
 meaning ${\mathtt{Hol}}_\mathfrak T(Level_i^\alpha)$ to each level of the
syntactical tree of $\alpha$, for any formula $\alpha$. This
meaning is a qumix living in the semantic space of $\alpha$.

\end{definition}

Given a formula $\gamma$, any holistic map ${\mathtt{Hol}}_\mathfrak T$
determines the {\em contextual
 meaning\/},
 with respect to the context ${\mathtt{Hol}}_\mathfrak T(\gamma)$,
 of any occurrence of a subformula $\beta$
in $\gamma$. This contextual meaning can be defined, in a natural
way, by using the notion of {\em reduced  state\/}.

\begin{definition}{\em (Contextual meaning)} \label{de:cont} \nl
Consider a formula $\gamma$ such that $Level_i^\gamma
=(\beta_{i_1},\ldots, \beta_{i_r})$. We have: $\mathcal H^\gamma=
\mathcal H^{\beta_{i_1}}\otimes \ldots \otimes \mathcal
H^{\beta_{i_r}} $. Let ${\mathtt{Hol}}_\mathfrak T$ be a holistic map.
The {\em contextual meaning\/} of the occurrence $\beta_{i_j}$
with respect to the context ${\mathtt{Hol}}_\mathfrak T(\gamma)$ is
defined as follows:
$${\mathtt{Hol}}_\mathfrak T^{\gamma}(\beta_{i_j}):=
 Red^j_{[At(\beta_{i_1}),\ldots, At(\beta_{i_r})]}
 ({\mathtt{Hol}}_\mathfrak T(Level_i(\gamma))).$$ \end{definition}
Of course, we obtain:
$${\mathtt{Hol}}_\mathfrak T^{\gamma}(\gamma)= {\mathtt{Hol}}_\mathfrak T(\gamma).$$

A holistic map ${\mathtt{Hol}}_\mathfrak T$ is called {\em normal
for a formula $\gamma$\/} iff for any subformula $\beta$ of
$\gamma$, ${\mathtt{Hol}}_\mathfrak T$ assigns the same contextual
meaning to all occurrences of $\beta$ in the syntactical tree of
$\gamma$. In other words:
$${\mathtt{Hol}}_\mathfrak T^\gamma(\beta_{i_j})=
{\mathtt{Hol}}_\mathfrak T^\gamma(\beta_{u_v}),$$ where
$\beta_{i_j}$ and $\beta_{u_v}$  are two occurrences of $\beta$ in
$STree^\gamma$.\nl A {\em normal holistic map\/} is a holistic map
${\mathtt{Hol}}_\mathfrak T$ that is normal for all formulas
$\gamma$.

\begin{definition}{\em (Compositional holistic map)}\label{de:compos} \nl\nl
Consider  a formula $\alpha$ such that $Level_h^\alpha = (\mathbf
q_1, \ldots, \mathbf q_r)$, while the $\mathfrak T$-gate tree of
$\alpha$ is $(^\mathfrak D G_{\mathfrak T_{(h-1)}},
\ldots,\,^\mathfrak D G_{\mathfrak T_{(1)}})$. A holistic map
${\mathtt{Hol}}_\mathfrak T$ is called {\em compositional with respect
to $\alpha$} iff  the following conditions are satisfied:
\begin{enumerate}
\item[(1)] ${\mathtt{Hol}}_\mathfrak T(Level_h^\alpha) = {\mathtt{Hol}}_\mathfrak T^\alpha(\mathbf q_1) \otimes \ldots
    \otimes  {\mathtt{Hol}}_\mathfrak T^\alpha(\mathbf q_r)$.
\item[(2)] ${\mathtt{Hol}}_\mathfrak T(Level_i^\alpha) =\,
    ^\mathfrak D G_{\mathfrak T_{(i)}}({\mathtt{Hol}}_\mathfrak
    T(Level_{i+1}^\alpha)) $, for any $i$ \nl (with $1 \le i <
    h$).

\end{enumerate}
\end{definition}

\begin{lemma}\label{le:compos} \nl \nl
Any holistic map $\mathtt{Hol}_\mathfrak T$ that is compositional
with respect to the formula $\alpha$ satisfies the following
conditions:
\begin{enumerate}
\item[(1)] If $Level_i^\alpha = (\beta_{i_1}, \ldots,
    \beta_{i_r})$, then $$\mathtt{Hol}_\mathfrak
    T(Level_i^\alpha)= \mathtt{Hol}^\alpha_\mathfrak
    T(\beta_{i_1}) \otimes \ldots \otimes \mathtt{    Hol}^\alpha_\mathfrak T(\beta_{i_r}),
    $$ for any $i$ such that $1 \le i \le Height(\alpha)$.
\item[(2)] $\mathtt{Hol}_\mathfrak T$ is a normal holistic map for $\alpha$.
\begin{proof}\nl \nl
\item[(1)] By definition of compositional holistic map and
by induction on $i$.
\item[(2)] By definition of compositional holistic map and
by (1).

\end{proof}

\end{enumerate}

\end{lemma}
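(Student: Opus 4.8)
The plan is to prove part (1) by induction on the level index $i$, working downward from the top level $Level_h^\alpha$, and then to derive part (2) as an immediate consequence of part (1) together with the fact that reduced states behave well on tensor products.

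For part (1), the base case is $i = h$: by clause (1) of Definition \ref{de:compos} we already have $\mathtt{Hol}_\mathfrak T(Level_h^\alpha) = \mathtt{Hol}_\mathfrak T^\alpha(\mathbf q_1) \otimes \cdots \otimes \mathtt{Hol}_\mathfrak T^\alpha(\mathbf q_r)$, which is exactly the claimed tensor decomposition at the top level. For the inductive step, I assume the decomposition holds at $Level_{i+1}^\alpha = (\gamma_1, \ldots, \gamma_s)$, i.e.\ $\mathtt{Hol}_\mathfrak T(Level_{i+1}^\alpha) = \mathtt{Hol}_\mathfrak T^\alpha(\gamma_1) \otimes \cdots \otimes \mathtt{Hol}_\mathfrak T^\alpha(\gamma_s)$, and I must show the analogous statement at $Level_i^\alpha = (\beta_{i_1}, \ldots, \beta_{i_r})$. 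By clause (2) of Definition \ref{de:compos}, $\mathtt{Hol}_\mathfrak T(Level_i^\alpha) = {}^\mathfrak D G_{\mathfrak T_{(i)}}(\mathtt{Hol}_\mathfrak T(Level_{i+1}^\alpha))$. The key observation is that, by construction of the syntactical tree, $Level_i^\alpha$ is obtained from $Level_{i+1}^\alpha$ by grouping consecutive entries and applying, to each group, either a one-place identity/negation/square-root gate (the atomic repetitions and the unary connectives) or a genuine multi-place connective such as $\intercal$ or $\uplus$; correspondingly the gate ${}^\mathfrak D G_{\mathfrak T_{(i)}}$ is itself a tensor product ${}^\mathfrak D G^{(1)} \otimes \cdots \otimes {}^\mathfrak D G^{(r)}$ of the individual qumix gates associated with the entries of $Level_i^\alpha$, each $G^{(j)}$ acting exactly on the tensor factors corresponding to the immediate subformulas of $\beta_{i_j}$. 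Since a tensor product of gates applied to a tensor product of qumixes factors through the individual tensor slots, $\mathtt{Hol}_\mathfrak T(Level_i^\alpha) = \bigotimes_{j=1}^{r} {}^\mathfrak D G^{(j)}\bigl(\text{the corresponding block of } \mathtt{Hol}_\mathfrak T(Level_{i+1}^\alpha)\bigr)$. Finally, because the $j$-th tensor factor of this expression depends only on the $j$-th block and is, by Lemma \ref{le:reduced}, exactly the reduced state $Red^j$ of $\mathtt{Hol}_\mathfrak T(Level_i^\alpha)$ onto the factor $\mathcal H^{\beta_{i_j}}$, it coincides with $\mathtt{Hol}_\mathfrak T^\alpha(\beta_{i_j})$ by Definition \ref{de:cont}. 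This gives the desired decomposition and closes the induction.

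For part (2), I must show that $\mathtt{Hol}_\mathfrak T$ is normal for $\alpha$, i.e.\ that any two occurrences of the same subformula $\beta$ in $STree^\alpha$ receive the same contextual meaning. Let $\beta_{i_j}$ and $\beta_{u_v}$ be two such occurrences, lying at levels $i$ and $u$ respectively. By part (1), at level $i$ we have $\mathtt{Hol}_\mathfrak T(Level_i^\alpha) = \bigotimes_k \mathtt{Hol}_\mathfrak T^\alpha(\beta_{i_k})$, so taking the reduced state onto the $j$-th factor returns precisely $\mathtt{Hol}_\mathfrak T^\alpha(\beta_{i_j})$, and likewise at level $u$ the $v$-th reduced state returns $\mathtt{Hol}_\mathfrak T^\alpha(\beta_{u_v})$. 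It then remains to argue that these two qumixes are in fact equal: this follows because in the syntactical tree the occurrences of $\beta$ at successive levels are linked by repetition (an identity gate on the corresponding slot), so the contextual meaning of $\beta$ is propagated unchanged from one level to the next — applying the identity qumix gate leaves that tensor factor untouched, and taking a reduced state of a tensor product onto the unchanged factor gives back the same operator. Chaining these identifications along the branch of $STree^\alpha$ connecting the two occurrences yields $\mathtt{Hol}_\mathfrak T^\alpha(\beta_{i_j}) = \mathtt{Hol}_\mathfrak T^\alpha(\beta_{u_v})$, which is the definition of normality for $\alpha$.

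The main obstacle is the bookkeeping in the inductive step of part (1): one has to be careful that the gate ${}^\mathfrak D G_{\mathfrak T_{(i)}}$ genuinely splits as a tensor product whose factors are aligned with the block decomposition of $Level_{i+1}^\alpha$ induced by the subformula structure of $Level_i^\alpha$. This is exactly what the construction of the syntactical tree guarantees (each level is built by acting on disjoint consecutive blocks of tensor factors), but making the indexing precise — matching $At(\beta_{i_j})$ tensor slots to the $j$-th block — is the delicate point; once this alignment is pinned down, the rest is the elementary algebra of tensor products together with Lemma \ref{le:reduced}.
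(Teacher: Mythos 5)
Your part (1) is correct and is exactly the argument the paper leaves implicit (``by definition of compositional holistic map and by induction on $i$''): downward induction from the top level, using that each gate-tree entry factors as a tensor product aligned with the entries of the level below it, together with the fact that the reduced state of a tensor product of density operators onto one factor is that factor.

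Part (2), however, has a genuine gap. Normality demands that \emph{any} two occurrences of the same subformula $\beta$ in $STree^\alpha$ receive the same contextual meaning, and the substantive case is two occurrences sitting at different positions of the tree --- e.g.\ the two occurrences of $\mathbf q$ in $\mathbf q\land\mathbf q=\intercal(\mathbf q,\mathbf q,\mathbf f)$, or the several ancilla occurrences of $\mathbf f$ in an iterated conjunction. Such occurrences are \emph{not} joined by a chain of repetition links: your chaining argument only identifies an occurrence with its own repeated copies at higher levels (which part (1) already yields, since the corresponding slot is acted on by an identity gate), whereas the two occurrences of $\mathbf q$ above are connected only through the Toffoli node, through which contextual meanings do not propagate identically. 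Indeed, if clause (1) of Definition \ref{de:compos} is read as merely saying that the top-level meaning equals the tensor product of its own reduced states, then the map sending the top level of $\mathbf q\land\mathbf q$ to $P_0^{(1)}\otimes P_1^{(1)}\otimes P_0^{(1)}$ and the bottom level to its Toffoli image satisfies both clauses yet assigns different contextual meanings to the two occurrences of $\mathbf q$. What the paper's one-line proof of (2) is tacitly using is the stronger reading of clause (1) on which $\mathtt{Hol}^\alpha_\mathfrak T(\mathbf q_j)$ depends only on the atomic formula and not on the occurrence, so that equal atoms at the top level receive equal tensor factors; granting that, normality follows from your part (1) by an induction on the structure of the repeated subformula (equal inputs fed to equal gates give equal outputs), not by chaining along branches. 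You should either make that reading explicit or add it as a hypothesis; as written, your argument for (2) does not reach the cases that normality is actually about.
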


We  can now define the concept of {\em holistic
 model\/} of the language $\mathcal L$.

\begin{definition} {\em (Holistic model)} \label{de:model}\nl
 A {\em holistic model} of  $\mathcal L$ is a normal
 holistic map ${\mathtt{Hol}}_\mathfrak T$  that satisfies the following
conditions for any formula $\alpha$.
\begin{enumerate}
  \item[(1)] Let $(^\mathfrak D G^\alpha_{\mathfrak
      T_{(h-1)}}, \ldots,\, ^\mathfrak D G^\alpha_{\mathfrak
      T_{(1)}}) $ be the $\mathfrak T$-gate tree of $\alpha$
      and let $1 \le i < h$. Then,
$$ {\mathtt{Hol}}_\mathfrak T(Level_{i}^\alpha)= \,
^\mathfrak D G^\alpha_{\mathfrak T_{(i)}}
 ({\mathtt{Hol}}_\mathfrak T(Level_{i+1}^\alpha)).$$
 In other words the  meaning of each level (different from the
 top level) is obtained by applying the corresponding gate to
 the meaning of the level that occurs immediately above.

 \item [(2)] Let $Level_i^\alpha = (\beta_{i_1}, \ldots,
     \beta_{i_r}).$ Then,\nl $\beta_{i_j} = \mathbf f
     \Rightarrow {\mathtt{Hol}}_\mathfrak T^\alpha(\mathbf f)=
     Red^j_{[At(\beta_{i_1}),\ldots, At(\beta_{i_r})]}({\mathtt{Hol}}_\mathfrak T(Level_{i}^\alpha))=\, ^\mathfrak T
     P_0^{(1)};$ \nl $\beta_{i_j} = \mathbf t \Rightarrow
{\mathtt{Hol}}_\mathfrak T^\alpha(\mathbf t)=
Red^j_{[At(\beta_{i_1}),\ldots,
At(\beta_{i_r})]}({\mathtt{Hol}}_\mathfrak
T(Level_{i}^\alpha))= \, ^\mathfrak T P_1^{(1)}$, for any $j$
$(1 \le j \le r)$.\nl In other words, the contextual meanings
of $\mathbf f$ and of $\mathbf t$ are always the $\mathfrak
T$-{\em falsity\/} and the $\mathfrak T$-{\em truth\/},
respectively.

\end{enumerate}

 \end{definition}

 On this basis, we put:
$${\mathtt{Hol}}_\mathfrak T(\alpha):=
 {\mathtt{Hol}}_\mathfrak T(Level_1^\alpha), $$
 for any formula $\alpha$.

 Since all gates are reversible, assigning a value ${\mathtt{Hol}}_\mathfrak T(Level_i^\alpha)$ to a particular
 $Level_i^\alpha$ of $STree^\alpha$ determines  the value
 ${\mathtt{Hol}}_\mathfrak T(Level_j^\alpha)$ for any other level $Level_j^\alpha$.
 Consequently, ${\mathtt{Hol}}_\mathfrak T(Level_i^\alpha)$ determines the contextual
 meaning ${\mathtt{Hol}}_\mathfrak T^\alpha(\beta)$ for any subformula $\beta$ of
 $\alpha$.

Notice that any ${\mathtt{Hol}}_\mathfrak T(\alpha)$ represents a kind
of autonomous semantic context that is not necessarily correlated
with the meanings of other formulas. Generally we have:
 $$
 {\mathtt{Hol}}_\mathfrak
T^{\gamma}(\beta) \neq
  {\mathtt{Hol}}_\mathfrak
T^{\delta}(\beta).
 $$
 Thus, one and the same formula may receive different contextual
 meanings in different contexts (as, in fact, happens in the case of
 our normal use of natural languages).

 \begin{definition}{\em (Compositional holistic model)} \label{compmod}\nl
 \nl
 A holistic model ${\mathtt{Hol}}_\mathfrak T$ is called
 \begin{itemize}
 \item {\em compositional\/} iff ${\mathtt{Hol}}_\mathfrak T$ is a
     holistic map that is compositional with respect to all
     formulas $\alpha$;
\item {\em perfectly compositional\/} iff
    ${\mathtt{Hol}}_\mathfrak T$ is a compositional model that
    satisfies the following condition for any formulas
    $\alpha$, $\beta$ and for any atomic formula $\mathbf q$
    (occurring in $\alpha$ and in $\beta$):
    $${\mathtt{Hol}}^\alpha_\mathfrak T(\mathbf q)=
    {\mathtt{Hol}}^\beta_\mathfrak T(\mathbf q).   $$

 \end{itemize}

 \end{definition}
 Accordingly, models that are  perfectly compositional  are
 context-independent; while compositional models may be
 context-dependent. As expected, the {\em compositional quantum
 computational semantics\/}, that  only refers to compositional models
 (or to perfectly compositional models), represents a special case
 of the holistic quantum computational semantics.

Consider now a formula $\alpha$  whose atomic complexity is $n$.
By definition of model we have: ${\mathtt{Hol}}_\mathfrak T(\alpha) \in
\mathfrak D(\mathcal H^{(n)})$. From an intuitive point of view,
the qumix $Red^n_{[1,\ldots,n]}({\mathtt{Hol}}_\mathfrak T(\alpha))$
(which lives the space $\C^2$) can be regarded as a {\em
generalized truth-value\/} of $\alpha$ (determined by the model
${\mathtt{Hol}}_\mathfrak T$). At the same time, the number ${\mathtt{p}}_\mathfrak T({\mathtt{Hol}}(\alpha))$ represents the
probability-value of $\alpha$ with respect to the
truth-perspective $\mathfrak T$ (determined by the model ${\mathtt{Hol}}_\mathfrak T$). Accordingly, our semantics can be described as
a {\em two-level many valued semantics\/}, where for any choice of
a model ${\mathtt{Hol}}_\mathfrak T$, any formula receives two
correlated {\em semantic values\/}: a generalized truth-value
(represented by a density operator of $\C^2$) and a
probability-value (a real number in the interval $[0,1]$).

To what extent do contextual meanings and gates (associated to the
logical connectives) commute? In this respect the 1-ary
connectives ($\lnot$, $\sqrt{id}$ and $\sqrt{\lnot}$) behave
differently from the binary and the ternary connectives ($\uplus$
and $\intercal$).
\begin{theorem} \label{th:commut} \nl \nl
Consider a model $\mathtt{Hol}_\mathfrak T$.
\begin{enumerate}
\item[(1)] Let $\lnot \beta$ be a subformula of $\gamma$.
    Then, $$\mathtt{Hol}^\gamma_\mathfrak T(\lnot \beta) = \,
    ^\mathfrak D\mathtt{NOT}_\mathfrak T^{(At(\beta))}(\mathtt{    Hol}^\gamma_\mathfrak T(\beta) ).$$
\item[(2)] Let $\sqrt{id} \beta$ be a subformula of $\gamma$.
    Then,
    $$\mathtt{Hol}^\gamma_\mathfrak T(\sqrt{id} \beta) = \,
    ^\mathfrak D\sqrt{\mathtt{I}}_\mathfrak T^{(At(\beta))}
    (\mathtt{Hol}^\gamma_\mathfrak T(\beta) ).$$
\item[(3)] Let $\sqrt{\lnot} \beta$ be a subformula of
    $\gamma$. Then,
    $$\mathtt{Hol}^\gamma_\mathfrak T(\sqrt{\lnot} \beta) = \,
    ^\mathfrak D\sqrt{\mathtt{NOT}}_\mathfrak T^{(At(\beta))}
    (\mathtt{Hol}^\gamma_\mathfrak T(\beta) ).$$
\end{enumerate}
In other words, the contextual meaning of the negation of a
formula $\beta$ can be obtained by applying the appropriate
negation-gate to the contextual meaning of $\beta$. In a  similar
way for the connectives $\sqrt{id}$ and  $\sqrt{\lnot}$.
\end{theorem}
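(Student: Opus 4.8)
The three statements have parallel proofs; I argue (1) in detail, statements (2) and (3) being obtained by the same argument with $\sqrt{id},\sqrt{\mathtt{I}}$ (resp. $\sqrt{\lnot},\sqrt{\mathtt{NOT}}$) in place of $\lnot,\mathtt{NOT}$. So fix a model $\mathtt{Hol}_\mathfrak T$ and suppose $\lnot\beta$ is a subformula of $\gamma$. Since $\lnot\beta$ is molecular it cannot occur at the top level of $STree^\gamma$, so it occurs at some $Level_i^\gamma = (\beta_{i_1},\ldots,\beta_{i_r})$ with $1 \le i < Height(\gamma)$, say in the $j$-th place, $\beta_{i_j} = \lnot\beta$. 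When $Level_{i+1}^\gamma$ is formed, the principal connective of each molecular entry is dropped and the atomic entries are repeated; in particular the entry $\beta_{i_j}=\lnot\beta$ is replaced by the single entry $\beta$, while the entries on its left and on its right are each replaced by a block of entries of the same total atomic complexity. Consequently the semantic space factors, compatibly at both levels, as $\mathcal H^\gamma = \mathcal A \otimes \mathcal H^\beta \otimes \mathcal C$, where $\mathcal A = \mathcal H^{\beta_{i_1}}\otimes\cdots\otimes\mathcal H^{\beta_{i_{j-1}}}$ and $\mathcal C = \mathcal H^{\beta_{i_{j+1}}}\otimes\cdots\otimes\mathcal H^{\beta_{i_r}}$ (each possibly trivial) and $\mathcal H^\beta = \mathcal H^{\lnot\beta}$ because $At(\lnot\beta)=At(\beta)$; moreover the $i$-th gate of the $\mathfrak T$-gate tree of $\gamma$ is the tensor product of the gates attached to the entries of $Level_i^\gamma$, hence has the form
\[
{}^\mathfrak D G^\gamma_{\mathfrak T_{(i)}} = {}^\mathfrak D\!\left( U_\mathcal A \otimes \mathtt{NOT}_\mathfrak T^{(At(\beta))} \otimes U_\mathcal C \right)
\]
for suitable unitary operators $U_\mathcal A$ of $\mathcal A$ and $U_\mathcal C$ of $\mathcal C$.

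The heart of the matter is the following commutation property: for every $\sigma \in \mathfrak D(\mathcal A \otimes \mathcal H^\beta \otimes \mathcal C)$ and all unitaries $U_\mathcal A$, $U_\mathcal C$, $V$ of $\mathcal A$, $\mathcal C$, $\mathcal H^\beta$, the reduction $Red$ onto the middle factor satisfies
\[
Red\!\left( (U_\mathcal A \otimes V \otimes U_\mathcal C)\,\sigma\,(U_\mathcal A \otimes V \otimes U_\mathcal C)^\dagger \right) = V\,Red(\sigma)\,V^\dagger .
\]
I would derive this from Lemma \ref{le:reduced} (used in the obvious multipartite form, obtained by iterating the bipartite statement along the grouping $\mathcal H^\gamma = \mathcal A \otimes (\mathcal H^\beta \otimes \mathcal C)$): for an arbitrary self-adjoint operator $M$ of $\mathcal H^\beta$, cyclicity of the trace and Lemma \ref{le:reduced} give
\[
\Tr\!\left( M\cdot Red\bigl((U_\mathcal A \otimes V \otimes U_\mathcal C)\sigma(U_\mathcal A \otimes V \otimes U_\mathcal C)^\dagger\bigr) \right) = \Tr\!\left( (\mathtt{I}_\mathcal A \otimes V^\dagger M V \otimes \mathtt{I}_\mathcal C)\,\sigma \right) = \Tr\!\left( (V^\dagger M V)\,Red(\sigma) \right) = \Tr\!\left( M\cdot V Red(\sigma) V^\dagger \right),
\]
and since self-adjoint operators separate density operators, the two states coincide.

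Finally, combining these ingredients: by clause (1) in the definition of holistic model, $\mathtt{Hol}_\mathfrak T(Level_i^\gamma) = {}^\mathfrak D G^\gamma_{\mathfrak T_{(i)}}(\mathtt{Hol}_\mathfrak T(Level_{i+1}^\gamma))$, which by the factorisation above equals $(U_\mathcal A \otimes \mathtt{NOT}_\mathfrak T^{(At(\beta))} \otimes U_\mathcal C)\,\mathtt{Hol}_\mathfrak T(Level_{i+1}^\gamma)\,(U_\mathcal A \otimes \mathtt{NOT}_\mathfrak T^{(At(\beta))} \otimes U_\mathcal C)^\dagger$. Reducing both sides onto the factor $\mathcal H^\beta$ and applying the commutation property with $V = \mathtt{NOT}_\mathfrak T^{(At(\beta))}$, the left-hand side becomes $\mathtt{Hol}_\mathfrak T^\gamma(\lnot\beta)$ (Definition \ref{de:cont}, evaluated at the chosen occurrence), while the right-hand side becomes ${}^\mathfrak D\mathtt{NOT}_\mathfrak T^{(At(\beta))}$ applied to the reduction of $\mathtt{Hol}_\mathfrak T(Level_{i+1}^\gamma)$ onto the same factor, i.e. to the contextual meaning of the corresponding occurrence of $\beta$ at $Level_{i+1}^\gamma$. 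Since $\mathtt{Hol}_\mathfrak T$ is normal, these occurrence-dependent contextual meanings coincide with $\mathtt{Hol}_\mathfrak T^\gamma(\lnot\beta)$ and $\mathtt{Hol}_\mathfrak T^\gamma(\beta)$, whence $\mathtt{Hol}_\mathfrak T^\gamma(\lnot\beta) = {}^\mathfrak D\mathtt{NOT}_\mathfrak T^{(At(\beta))}(\mathtt{Hol}_\mathfrak T^\gamma(\beta))$. I expect the one genuinely delicate point to be the structural bookkeeping of the first paragraph — verifying that stripping a $1$-ary connective leaves the relevant tensor factor $\mathcal H^\beta$ untouched, with the gate of the tree carrying exactly $\mathtt{NOT}_\mathfrak T^{(At(\beta))}$ on that factor and arbitrary unitaries elsewhere, and that $\beta$ occupies that same factor one level up; the operator-algebraic commutation lemma itself is routine once Lemma \ref{le:reduced} is in hand.
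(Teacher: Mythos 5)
Your proof is correct and follows the same route the paper intends: the paper's own ``proof'' is a one-line appeal to the definitions of syntactical tree, $\mathfrak T$-gate tree, holistic model and contextual meaning, and your argument is exactly the honest unpacking of that appeal, with the commutation of local unitaries with reduced states (derived from Lemma~\ref{le:reduced}) correctly identified as the operative fact and normality correctly invoked to pass from occurrence-dependent to occurrence-independent contextual meanings. Nothing to object to; you have simply supplied the details the authors omitted.
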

\begin{proof} By definition of syntactical tree, of $\mathfrak T$-gate
tree, of holistic model and of contextual meaning.

\end{proof}

Such a commutativity-situation breaks down in the case of the
binary and ternary connectives ($\uplus$, $\intercal$). As we have
seen, the conjunction ${\mathtt{AND}}_\mathfrak T^{(m,n)}$ has a
characteristic holistic behavior. Generally, we have:\nl
${\mathtt{AND}}_\mathfrak T^{(m,n)}(\rho) = \,^\mathfrak
D{\mathtt{T}}_\mathfrak T^{(m,n,1)} (\rho \otimes\, ^\mathfrak T
P^{(1)}_0)\,\neq$ \nl $ ^\mathfrak D{\mathtt{T}}_\mathfrak
T^{(m,n,1)} (Red^{(1)}_{[m,n,1]}(\rho) \otimes
Red^{(2)}_{[m,n,1]}(\rho)\otimes
 \, ^\mathfrak TP^{(1)}_0).  $

 Consequently, from a semantic point of view, we will generally
 obtain:
$${\mathtt{Hol}}_\mathfrak T^\gamma
(\intercal (\alpha, \beta, \mathbf f)) \neq
\, ^\mathfrak D{\mathtt{T}}_\mathfrak T^{(At(\alpha),
 At(\beta), At(\mathbf f))}
({\mathtt{Hol}}_\mathfrak T^\gamma(\alpha)
\otimes {\mathtt{Hol}}_\mathfrak T^\gamma(\beta)\otimes
{\mathtt{Hol}}_\mathfrak T^\gamma(\mathbf f)).$$

A similar situation holds for the binary connective $\uplus$.

The connectives $\intercal$ and $\uplus$ satisfy a weaker
relation, described by the following theorem.

\begin{theorem} \label{th:weakcomm}  \nl \nl
Consider a  model ${\mathtt{Hol}}_\mathfrak T$.
\begin{enumerate}

\item[(1)] Let $\intercal (\alpha_1,\alpha_2,\alpha_3) $ be a
    subformula of $\gamma$. Thus, the syntactical tree of
    $\gamma$  contains two levels having the following
    form:\nl
    \begin{itemize}
     \item $Level^\gamma_{(i+1)}=\,
    (\beta_{(i+1)_1},\ldots, \beta_{(i+1)_{k_1}},
    \beta_{(i+1)_{k_2}}, \beta_{(i+1)_{k_3}},\ldots,
    \beta_{(i+1)_{r}}),$ where
 $\alpha_1= \beta_{(i+1)_{k_1}}$, $\alpha_2=
    \beta_{(i+1)_{k_2}}$, $\alpha_3= \beta_{(i+1)_{k_3}}$.
    \item $Level^\gamma_{i}=\, (\beta_{i_1},\ldots,
        \beta_{i_{j}}, \ldots, \beta_{i_{s}}),$ where
 $\intercal (\alpha_1,\alpha_2,\alpha_3) = \beta_{i_{j}}$.

    \end{itemize}

     We have:\nl $\mathtt{Hol}_\mathfrak T^\gamma(\intercal
    (\alpha_1,\alpha_2,\alpha_3)) = $ \nl $^\mathfrak D \mathtt{    T}_\mathfrak T^{(At(\alpha_1),At(\alpha_2), At(\alpha_3))}
    (Red^{(k_1, k_2,k_3)}_{[At(\beta_{(i+1)_{1}}),\ldots,
    At(\beta_{(i+1)_{r}})]}(\mathtt{Hol}_\mathfrak
    T(Level_{(i+1)}(\gamma))))$.
\item[(2)] A similar relation holds  when $\alpha_1 \uplus
    \alpha_2 $ is a subformula of $\gamma$.

\end{enumerate}

\end{theorem}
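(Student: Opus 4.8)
The plan is to peel off the definitions one layer at a time until the statement collapses to a single elementary fact about reduced states: a reduced state is unaffected — up to applying the gate itself — by a unitary evolution acting only on the subsystem being retained. \textbf{Step 1 (unwinding).} Put $\beta_{i_j}=\intercal(\alpha_1,\alpha_2,\alpha_3)$. By Definition~\ref{de:cont},
$$\mathtt{Hol}^\gamma_\mathfrak T(\intercal(\alpha_1,\alpha_2,\alpha_3))=Red^{j}_{[At(\beta_{i_1}),\ldots,At(\beta_{i_s})]}(\mathtt{Hol}_\mathfrak T(Level_i^\gamma)),$$
and by clause~(1) of Definition~\ref{de:model}, $\mathtt{Hol}_\mathfrak T(Level_i^\gamma)={}^{\mathfrak D}G^\gamma_{\mathfrak T_{(i)}}(\mathtt{Hol}_\mathfrak T(Level_{i+1}^\gamma))$, where $G^\gamma_{\mathfrak T_{(i)}}$ is the $i$-th entry of the $\mathfrak T$-gate tree of $\gamma$. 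By the very construction of that tree, the passage from $Level_{i+1}^\gamma$ to $Level_i^\gamma$ applies the Toffoli-gate to the consecutive block $(\alpha_1,\alpha_2,\alpha_3)$ sitting at positions $k_1,k_2,k_3$, and on every other position applies either the principal connective of another molecular subformula of $Level_i^\gamma$ or the identity. Hence $G^\gamma_{\mathfrak T_{(i)}}$ factorises as $G_A\otimes\mathtt{T}^{(At(\alpha_1),At(\alpha_2),At(\alpha_3))}_\mathfrak T\otimes G_C$, where the unitary $G_A$ acts on $\mathcal H^{\beta_{(i+1)_1}}\otimes\cdots\otimes\mathcal H^{\beta_{(i+1)_{k_1-1}}}$ and the unitary $G_C$ on $\mathcal H^{\beta_{(i+1)_{k_3+1}}}\otimes\cdots\otimes\mathcal H^{\beta_{(i+1)_r}}$.

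\textbf{Step 2 (a commutation lemma).} I would then prove: if $U=G_A\otimes V\otimes G_C$ with $G_A,V,G_C$ unitary, $\rho$ a qumix on the corresponding triple tensor product and $B$ the middle factor, then $Red_B({}^{\mathfrak D}U(\rho))={}^{\mathfrak D}V(Red_B(\rho))$. This is immediate from the characterising property of reduced states in Lemma~\ref{le:reduced} (read off for the retained subsystem $B$): for every self-adjoint $M$ on $\mathcal H_B$, cyclicity of the trace together with $G_A^\dagger G_A=\mathtt{I}$ and $G_C^\dagger G_C=\mathtt{I}$ gives
\begin{multline*}
\mathtt{tr}\big((\mathtt{I}_A\otimes M\otimes \mathtt{I}_C)\,U\rho\,U^\dagger\big)
=\mathtt{tr}\big((\mathtt{I}_A\otimes(V^\dagger M V)\otimes \mathtt{I}_C)\,\rho\big)\\
=\mathtt{tr}\big((V^\dagger M V)\,Red_B(\rho)\big)
=\mathtt{tr}\big(M\,{}^{\mathfrak D}V(Red_B(\rho))\big),
\end{multline*}
and the claim follows by the uniqueness clause of Lemma~\ref{le:reduced}.

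\textbf{Step 3 (matching the two partial traces and concluding).} Applying Step~2 with $\rho=\mathtt{Hol}_\mathfrak T(Level_{i+1}^\gamma)$, $V=\mathtt{T}^{(At(\alpha_1),At(\alpha_2),At(\alpha_3))}_\mathfrak T$ and $B$ the block of elementary tensor factors of $\mathcal H^\gamma$ occupied by the atomic occurrences inside $\intercal(\alpha_1,\alpha_2,\alpha_3)$, one obtains
$$\mathtt{Hol}^\gamma_\mathfrak T(\intercal(\alpha_1,\alpha_2,\alpha_3))={}^{\mathfrak D}\mathtt{T}^{(At(\alpha_1),At(\alpha_2),At(\alpha_3))}_\mathfrak T\big(Red_B(\mathtt{Hol}_\mathfrak T(Level_{i+1}^\gamma))\big).$$
It then remains to recognise $Red_B$ here as $Red^{(k_1,k_2,k_3)}_{[At(\beta_{(i+1)_1}),\ldots,At(\beta_{(i+1)_r})]}$, the reduced state for the partition of $\mathcal H^\gamma$ induced by $Level_{i+1}^\gamma$, and to note that the same $B$ also underlies $Red^{j}_{[At(\beta_{i_1}),\ldots,At(\beta_{i_s})]}$ for the coarser partition induced by $Level_i^\gamma$: both are the partial trace of a state on the fixed space $\mathcal H^\gamma$ onto one and the same block of factors, since restoring a principal connective only coalesces adjacent blocks without displacing atomic occurrences, while $\mathcal H^{\beta_{i_j}}=\mathcal H^{\alpha_1}\otimes\mathcal H^{\alpha_2}\otimes\mathcal H^{\alpha_3}$ because $At(\intercal(\alpha_1,\alpha_2,\alpha_3))=At(\alpha_1)+At(\alpha_2)+At(\alpha_3)$. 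This proves clause~(1); clause~(2) for $\alpha_1\uplus\alpha_2$ is the same argument verbatim, with $\mathtt{T}^{(\cdots)}_\mathfrak T$ replaced by $\mathtt{XOR}^{(At(\alpha_1),At(\alpha_2))}_\mathfrak T$ and the triple $(k_1,k_2,k_3)$ by the pair $(k_1,k_2)$.

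\textbf{Expected main obstacle.} The genuinely quantum ingredient is the small lemma of Step~2, which is routine. The step that actually demands care is the bookkeeping of Step~3: one must be sure that the reduced-state operation attached to the coarser partition of $\mathcal H^\gamma$ coming from $Level_i^\gamma$ and the one attached to the finer partition coming from $Level_{i+1}^\gamma$ are literally the same map on $\mathfrak D(\mathcal H^\gamma)$. This rests on the elementary but easily overlooked fact that $STree^\gamma$ never permutes or relocates atomic occurrences; once that is made precise, the theorem is a direct consequence of Lemma~\ref{le:reduced} and the definitions of syntactical tree, $\mathfrak T$-gate tree and holistic model.
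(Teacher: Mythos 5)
Your proof is correct and follows the same route as the paper, which disposes of this theorem with the single line ``By definition of syntactical tree, of $\mathfrak T$-gate tree, of holistic model and of contextual meaning''; your Steps 1--3 are precisely the unpacking of those definitions. The commutation lemma of your Step 2 (that $Red_B$ intertwines ${}^{\mathfrak D}(G_A\otimes V\otimes G_C)$ with ${}^{\mathfrak D}V$, via the trace characterisation in Lemma~\ref{le:reduced}) and the observation that the coarser and finer partitions induce the same partial trace are exactly the facts the paper leaves implicit, so your write-up is a faithful, fully detailed version of the intended argument.
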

\begin{proof} By definition of syntactical tree, of $\mathfrak T$-gate
tree, of holistic model and of contextual meaning.

\end{proof}

The holistic behavior of the connectives $\intercal$ and $\uplus$
seem to reflect pretty well (at a semantic level) the holistic
behavior of quantum circuits. As is well known, trying to separate
the different branches ``inside the box'' of a given quantum
computation generally has the effect of destroying the
characteristic parallelism (and hence the efficiency) of the
computation in question.

The following Lemma will play an important role in the development
of the holistic semantics.

\begin{lemma}\label{le:noia}\nl \nl
Consider a formula $\gamma$ and  let $\eta$ be a subformula of
$\gamma$. For any model $\mathtt{Hol}_\mathfrak T$  and for any
formula $\beta$ there exists a model $^*\mathtt{Hol}_\mathfrak T$
such that,
$$^*\mathtt{Hol}_\mathfrak T^{\gamma \land \beta}(\eta) =
\mathtt{Hol}_\mathfrak T^\gamma(\eta).$$
\end{lemma}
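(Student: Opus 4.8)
The plan is to build the desired model $^*\mathtt{Hol}_\mathfrak T$ explicitly by specifying what it does on the syntactical tree of each formula, and then to check that it is indeed a holistic model (i.e.\ that it is normal and satisfies conditions (1) and (2) of Definition~\ref{de:model}) while also satisfying the stated identity on $\eta$. The key observation is that $STree^{\gamma \land \beta}$ contains $STree^\gamma$ as a sub-configuration: since $\gamma \land \beta := \intercal(\gamma, \beta, \mathbf f)$, the syntactical tree of $\gamma \land \beta$ is obtained by placing $\intercal(\gamma,\beta,\mathbf f)$ at the bottom and then, moving upward, running the trees of $\gamma$, of $\beta$, and the (trivial) tree of $\mathbf f$ ``in parallel'' until they are all exhausted, repeating shorter branches. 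Because all gates are reversible, a holistic model of $\gamma \land \beta$ is completely determined by the value it assigns to a single level; and conversely, any legitimate choice at one level propagates to a full model.

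The main step is therefore to define $^*\mathtt{Hol}_\mathfrak T(\gamma \land \beta)$ — equivalently, $^*\mathtt{Hol}_\mathfrak T(Level_1^{\gamma \land \beta})$ — as the qumix
$$\,^\mathfrak D{\mathtt{T}}_\mathfrak T^{(At(\gamma),At(\beta),1)}\bigl(\mathtt{Hol}_\mathfrak T(\gamma) \otimes \sigma \otimes \,^\mathfrak T P_0^{(1)}\bigr),$$
where $\sigma \in \mathfrak D(\mathcal H^\beta)$ is chosen so that its reduced states on the atomic components of $\beta$ are the $\mathfrak T$-truth or $\mathfrak T$-falsity whenever the corresponding atom is $\mathbf t$ or $\mathbf f$ (for instance $\sigma$ can be taken to be a product over the atomic occurrences of $\beta$, using $\,^\mathfrak T P_1^{(1)}$, $\,^\mathfrak T P_0^{(1)}$ for the privileged atoms and any fixed $\mathfrak T$-bit, say $\,^\mathfrak T P_0^{(1)}$, for genuine atoms). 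Then one lets $^*\mathtt{Hol}_\mathfrak T$ agree with $\mathtt{Hol}_\mathfrak T$ on every formula not containing $\gamma \land \beta$ as a subformula, and on formulas that do contain it, one propagates the above value through the gate tree in both directions, using condition~(1) of Definition~\ref{de:model}. By Lemma~\ref{le:reduced}, taking the reduced state of $\,^\mathfrak D{\mathtt{T}}_\mathfrak T^{(At(\gamma),At(\beta),1)}(\mathtt{Hol}_\mathfrak T(\gamma)\otimes\sigma\otimes\,^\mathfrak T P_0^{(1)})$ onto the first $At(\gamma)$ components recovers exactly $\mathtt{Hol}_\mathfrak T(\gamma)$ (the Toffoli-gate only acts nontrivially on the $\beta$-block through its last qubit and on the ancilla, so the $\gamma$-block is unaffected after tracing out the rest); climbing up the $\gamma$-branch of $STree^{\gamma\land\beta}$ with the same gates used in $STree^\gamma$ then yields $^*\mathtt{Hol}_\mathfrak T^{\gamma\land\beta}(\eta) = \mathtt{Hol}_\mathfrak T^\gamma(\eta)$ for the chosen occurrence of $\eta$.

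Finally I would verify the model conditions: condition~(1) holds by construction since every level of $STree^{\gamma\land\beta}$ is obtained from an adjacent one by the prescribed gate; condition~(2) (the contextual meanings of $\mathbf t$ and $\mathbf f$ are always $\,^\mathfrak T P_1^{(1)}$ and $\,^\mathfrak T P_0^{(1)}$) holds for the new ancilla $\mathbf f$ because $\,^\mathfrak T P_0^{(1)}$ was explicitly inserted, for the privileged atoms inside $\beta$ because $\sigma$ was chosen with the right reduced states, and for privileged atoms inside $\gamma$ because they already had the correct contextual meanings under $\mathtt{Hol}_\mathfrak T$ and these are preserved by the reduction; normality then follows because assigning contextual meanings via reduced states of one fixed global value on each tree automatically makes all occurrences of a given subformula receive the same meaning. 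The main obstacle, and the point requiring care, is the bookkeeping at levels where the $\gamma$-branch, the $\beta$-branch and the $\mathbf f$-branch have different heights: one must make sure the ``repeat shorter branches'' convention is compatible with the identity gates inserted in the gate tree so that the reduced state onto the $\gamma$-block is genuinely stable all the way up, and that a subformula shared between $\gamma$ and $\beta$ does not get two incompatible contextual meanings — which is why the construction keeps $^*\mathtt{Hol}_\mathfrak T$ equal to $\mathtt{Hol}_\mathfrak T$ off the subtree over $\gamma\land\beta$ and only modifies what is forced by the new global value.
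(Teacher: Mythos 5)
Your overall strategy matches the paper's: keep the $\gamma$-block of every level of $STree^{\gamma\land\beta}$ equal to the corresponding level-meaning of $STree^{\gamma}$ under $\mathtt{Hol}_\mathfrak T$, tensor it with some assignment for the $\beta$-block and the ancilla $^\mathfrak T P_0^{(1)}$, and apply the Toffoli only at the bottom. But there is a genuine gap in how you populate the $\beta$-block. You take $\sigma$ to be a fixed default product over the atomic occurrences of $\beta$ (e.g.\ $^\mathfrak T P_0^{(1)}$ for every genuine atom). A holistic model must be a \emph{normal} map, so every occurrence of a subformula $\theta$ in $STree^{\gamma\land\beta}$ must receive the same contextual meaning --- including occurrences of $\theta$ sitting in the $\beta$-branch when $\theta$ is shared between $\gamma$ and $\beta$. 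With your $\sigma$, an atom $\mathbf q$ occurring in both $\gamma$ and $\beta$ gets contextual meaning $^\mathfrak T P_0^{(1)}$ from the $\beta$-block but $Red(\mathtt{Hol}_\mathfrak T(\gamma))$-whatever from the $\gamma$-block, so $^*\mathtt{Hol}_\mathfrak T$ is not normal and hence not a model. You flag this danger in your last sentence but do not repair it; the paper's construction repairs it explicitly by replacing, in the $\beta$-branch of the tree, every subformula $\theta$ shared with $\gamma$ by the qumix $\mathtt{Hol}^\gamma_\mathfrak T(\theta)$ (propagated through inverse gates), and only the genuinely new atoms by default qumixes. This case cannot be waved away: in the transitivity proof where the lemma is used, $\beta$ will typically share atoms with $\gamma$.

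A second, more local error: your parenthetical claim that the reduced state of $^\mathfrak D\mathtt{T}_\mathfrak T^{(At(\gamma),At(\beta),1)}(\mathtt{Hol}_\mathfrak T(\gamma)\otimes\sigma\otimes{}^\mathfrak T P_0^{(1)})$ onto the first $At(\gamma)$ qubits recovers $\mathtt{Hol}_\mathfrak T(\gamma)$ is false. A controlled gate generally decoheres the reduced state of its control: for instance
\begin{equation*}
\mathtt{T}^{(1,1,1)}\bigl(\tfrac{1}{\sqrt2}(\ket{0}+\ket{1})\otimes\ket{1}\otimes\ket{0}\bigr)=\tfrac{1}{\sqrt2}(\ket{0,1,0}+\ket{1,1,1}),
\end{equation*}
whose first-qubit reduced state is $\tfrac12\mathtt{I}^{(1)}$, not $P_{\frac{1}{\sqrt2}(\ket{0}+\ket{1})}$. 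Fortunately this claim is not needed: the contextual meaning of any subformula $\eta$ of $\gamma$ is computed from the level of $STree^{\gamma\land\beta}$ at which that occurrence of $\eta$ sits, and all such occurrences sit at levels $\ge 2$, where (in your construction as in the paper's) the level-meaning is still the tensor product $\mathtt{Hol}_\mathfrak T(Level^{\gamma}_{i-1})\otimes(\cdots)$, so the desired identity follows from the product structure there rather than from any property of the Toffoli output at level $1$. You should replace the reduced-state-of-the-output argument by this observation, and rework the $\beta$-block assignment to respect shared subformulas as above.
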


\begin{proof}({\em Sketch})
 Consider two formulas $\gamma$ and $\beta$
 and let ${\mathtt{Hol}}_\mathfrak T$ be a model. If $\beta$ is a
 subformula of $\gamma$ the proof is trivial (since it is
 sufficient to take
 $^*\mathtt{Hol}_\mathfrak T$ equal to
$\mathtt{Hol}_\mathfrak T$). Suppose that $\beta$ is not a
subformula of $\gamma$
 (while $\gamma$ and $\beta$  may have some
common subformulas).  Consider the syntactical tree of $\gamma
\land \beta$, which includes (in its left part) the syntactical
tree of $\gamma$ (where $Level_1^\gamma$ appears at
$Level_2^{\gamma \land \beta}$, while the top level of
$STree^{\gamma}$ is supposed to be repeated until the Height $h$
of $STree^{\gamma \land \beta}$ is reached). The model
${\mathtt{Hol}}_\mathfrak T$ assigns a qumix
${\mathtt{Hol}}_\mathfrak T(Level_i^\gamma)$ to each level of
$STree^\gamma$ (represented as a part of $STree^{\gamma \land
\beta}$). Let us briefly write: $^\gamma \rho_{i+1} = \,
{\mathtt{Hol}}_\mathfrak T(Level_i^\gamma)$. We transform $STree
^{\gamma \land \beta}$ into a ``hybrid'' object $Hybr$ that is a
sequence of sequences $Hybr_i$. Each $Hybr_i$ corresponds to
$Level_i^{\gamma \land \beta}$ and is a sequence of objects that
are either formulas or qumixes. Taking into account the fact the
$\intercal(\gamma, \beta, \mathbf f)$ and $\beta$ are not
subformulas of $\gamma$, we define the first two elements of
$Hybr$ as follows:\nl $Hybr_1= (\intercal(\gamma, \beta, \mathbf
f));\quad Hybr_2= (^\gamma \rho_2,\beta, \, ^\mathfrak T
P_0^{(1)}).$ Then, we proceed (step by step) by replacing the
first occurrence in $STree^{\gamma \land \beta}$ of each formula
$\theta$ that is also a subformula of $\gamma$ with the qumix
${\mathtt{Hol}}^\gamma_\mathfrak T(\theta)$. Suppose, for
instance, that $\theta$ occurs for the first time at
$Level_i^{\gamma \land \beta}$, and suppose that $\theta=
\intercal(\xi_1, \xi_2, \xi_3)$. Then (by definition of
syntactical tree), $\xi_1$, $\xi_2$ and $\xi_3$ shall occur at
$Level_{i+1}^{\gamma \land \beta}$. We define $Hybr_i$ and
$Hybr_{i+1}$ in such a way that the following conditions are
satisfied: a) in $Hybr_i$  the qumix ${\mathtt{Hol}}_\mathfrak
T^\gamma(\theta)$ occurs in place of the formula $\theta$
(occurring at $Level_{i}^{\gamma \land \beta}$); b) in
$Hybr_{i+1}$ the qumix $[^\mathfrak D{\mathtt{T}}_\mathfrak
T^{(At(\xi_1), At(\xi_2), At(\xi_3))}]^{-1}
({\mathtt{Hol}}^{\gamma}_\mathfrak T(\theta)) $ occurs in place of
the subsequence $(\xi_1, \xi_2, \xi_3)$ (occurring at
$Level_{i+1}^{\gamma \land \beta}$). We proceed in a similar way
for all possible linguistic forms of $\theta$. When we finally
reach the top level $Level_h^{\gamma \land \beta}$, the
corresponding $Hybr_h$ will have the following form:
 $$Hybr_h = (^\gamma \rho_h, Ob_1, \ldots, Ob_t, \,^\mathfrak T P_0^{(1)}),
    $$
 where each $Ob_j$ is either a qumix or an atomic formula
 $\mathbf q$ that does not occur in $\gamma$. Now, we replace in
 $Hybr_h$
 each ``surviving'' formula $\mathbf q$ with the qumix
 ${\mathtt{Hol}}_\mathfrak T(\mathbf q)$
 (which lives in $\C^2$).
This operation destroys   the  ``hybrid'' form of $Hybr_h$, which
is now transformed into a  homogeneous sequence
 of qumixes:
 $$
 ^\mathfrak D
 Hybr_h = (^\gamma \rho_h, \, ^\mathfrak D Ob_1, \ldots, \,
 ^\mathfrak D Ob_t, \,^\mathfrak T P_0^{(1)}), \quad\text{where}:$$
$$
^\mathfrak D Ob_j = \begin{cases}
 Ob_j, \,\text{if}\, Ob_j \, \text{is a qumix}; \\
 {\mathtt{Hol}}_\mathfrak T(\mathbf q),\, \text{if}\,
  Ob_j = \mathbf q.
\end{cases}
$$
On this basis, we transform the whole  $Hybr$ into a sequence of
qumix-sequences $ ^\mathfrak D Hybr_i$. Let us first refer to
$Hybr_{h-1}$, which may contain
 formulas that are not subformulas of $\gamma$. Suppose,
for instance, that the first formula occurring in $ Hybr_{h-1}$ is
$$
\beta_{{(h-1)}_j}=\intercal(\mathbf q_1, \mathbf q_2, \mathbf q_3).
$$
Since $\beta_{{(h-1)}_j}$ is not a subformula of $\gamma$, $
^\mathfrak D Hybr_{h}$ shall contain three separate qumixes
$^{\mathbf q_1}\!\rho,\,^{\mathbf q_2}\!\rho,\,^{\mathbf
q_3}\!\rho$ (corresponding to the atom-sequence $(\mathbf q_1,
\mathbf q_2, \mathbf q_3)$ occurring in the right part of
$STree^{\gamma \land \beta}$). On this basis, we replace the
formula $\intercal(\mathbf q_1, \mathbf q_2, \mathbf q_3)$ with
the qumix $^\mathfrak D{\mathtt{T}}_\mathfrak T^{(1,1,1)}
(^{\mathbf q_1}\!\rho\,\otimes\, ^{\mathbf
q_2}\!\rho\otimes\,^{\mathbf q_3}\!\rho) $ in $ Hybr_{h-1}$ and in
all other $ Hybr_i$ where $\intercal(\mathbf q_1, \mathbf q_2,
\mathbf q_3)$ possibly appears.

Then, we proceed step by step by applying the same procedure to
all formulas $\beta_{i_j}$ occurring in $ Hybr_i$, for  any $i$
($1\le i < h$). At the end of the procedure, each $Hybr_i$ ($1 <
i\le h$) has been transformed into a sequence of qumixes
$$
^\mathfrak D Hybr_i=
(^\gamma\!\rho_i,\rho_{i_1},\ldots,\rho_{i_r},^\mathfrak T\!\!P_0^{(1)}),
$$
where any qumix is naturally associated to a segment of
$Level_i^{\gamma\land\beta}$.

We define now  the  map    $^*{\mathtt{Hol}}_\mathfrak T$ in the following way:
\begin{itemize}
\item $^*{\mathtt{Hol}}_\mathfrak T(Level_i^{\gamma\land\beta})=
 \,^\gamma\!\rho_i\otimes\rho_{i_1}\otimes\ldots\otimes\rho_{i_r}\otimes\,^\mathfrak T\!P_0^{(1)},
 \, \text{if}\,\,\, 1< i\le h$;
\item $^*{\mathtt{Hol}}_\mathfrak
    T(Level_i^{\gamma\land\beta}) =\,\, ^\mathfrak
    D{\mathtt{T}}_\mathfrak T^{(At(\gamma), At(\beta), 1)}
(^*{\mathtt{Hol}}_\mathfrak T(Level_2^{\gamma\land\beta} ))$,
    if $i=1$.
\end{itemize}
We have:
 \begin{enumerate}
 \item[(I)] by construction, $^*{\mathtt{Hol}}_\mathfrak T(Level_i^{\gamma\land\beta})$
is a qumix of $\mathcal H^{\gamma\land\beta}$. Hence, $^*{\mathtt{Hol}}_\mathfrak T$ is
a holistic map for $\gamma\land\beta$;
\item[(II)] $^*{\mathtt{Hol}}_\mathfrak T$ is normal for
    $\gamma\land\beta$, by the normality of
    ${\mathtt{Hol}}_\mathfrak T$ and because different
    occurrences in $Hybr$ of a formula that is not a
    subformula of $\gamma$ have been replaced by the same
    qumix;
\item[(III)] by construction, $^*{\mathtt{Hol}}_\mathfrak T$ preserves the logical form of all
subformulas of $\gamma\land\beta$. Accordingly,
$^*{\mathtt{Hol}}_\mathfrak T(Level_i^{\gamma \land \beta}) =
 \,^\mathfrak D G_{\mathfrak T_{(i)}}
 (^*{\mathtt{Hol}}_\mathfrak T
 (Level_{i+1}^{\gamma \land \beta}))$, for any $i$ such that
 $1 \le i < h$, where  $(^\mathfrak D G_{\mathfrak T_{(h-1)}},
\ldots,\,  ^\mathfrak D G_{\mathfrak T_{(1)}})$ is the
 $\mathfrak T$-gate tree of $\gamma \land \beta$. Furthermore,
 the sentences $\mathbf f$ and $\mathbf t$ have (trivially)
 the ``right'' contextual meanings. Hence,
 $^*{\mathtt{Hol}}_\mathfrak T$ is a model for
 $\gamma\land\beta$;
\item[(IV)] by construction, for any $\eta$ that is a subformula of $\gamma$:
$$^*\mathtt{Hol}_\mathfrak T^{\gamma \land \beta}(\eta) =
\mathtt{Hol}_\mathfrak T^\gamma(\eta).$$
\end{enumerate}
\end{proof}

Now the concepts of {\em truth\/}, {\em validity\/}, {\em logical
consequence\/} and {\em logical equivalence\/} can be  defined in
terms of the probability-function ${\mathtt{p}}_\mathfrak T$ and of the
preorder $\preceq_\mathfrak T$.

\begin{definition} {\em (Truth)} \label{de:truth}\nl
\nl A formula $\alpha$ is called {\em true}  with respect to a
model ${\mathtt{Hol}}_\mathfrak T$  (abbreviated as $\models_{{\mathtt{Hol}}_\mathfrak T} \alpha$)  iff  ${\texttt p}_\mathfrak T({\mathtt{Hol}}_\mathfrak T(\alpha)) = 1.$ \end{definition}

\begin{definition} {\em (Validity)} \label{de:val}\nl
\begin{enumerate}
\item[1)] $\alpha$ is called $\mathfrak T$-{\em valid}
    ($\models_\mathfrak T \alpha$) iff for any model ${\mathtt{Hol}}_\mathfrak T$, $\models_{{\mathtt{Hol}}_\mathfrak T}
    \alpha$.
\item[2)]$\alpha$ is called {\em valid} ($\models \alpha$) iff
    for any truth-perspective $\mathfrak T$,
    $\models_\mathfrak T \alpha$.
    \end{enumerate}

\end{definition}

\begin{definition} {\em (Logical consequence)} \label{de:cons}\nl
\nl \begin{enumerate} \item[1)] $\beta$ is called a $\mathfrak
T$-{\em logical consequence\/}
      of $\alpha $ ($\alpha \vDash_\mathfrak T \beta$)
       iff for any
      formula $\gamma$ such that $\alpha$ and $\beta$ are
      subformulas of $\gamma$ and
      for any
     model ${\mathtt{Hol}}_\mathfrak T$,
  $${\mathtt{Hol}}_\mathfrak T^\gamma(\alpha) \preceq_\mathfrak T
   {\mathtt{Hol}}_\mathfrak T^\gamma (\beta).$$
  \item[2)] $\beta$ is called a {\em logical consequence\/}
      of $\alpha $ ($\alpha \vDash\beta$)
       iff for  any truth-perspective $\mathfrak T$,
       $\alpha \vDash_\mathfrak T \beta$.

   \end{enumerate}\end{definition}

   When
       $\alpha \vDash_{\mathtt{I}} \beta$, we  say that $\beta$ is a
       {\em canonical logical consequence\/} of $\alpha$.

 \begin{definition} {\em (Logical equivalence)} \nl
\nl $\alpha$ and $\beta$ are logically equivalent ($\alpha \equiv
\beta$) iff $\alpha \vDash \beta$ and $\beta \vDash \alpha$.
  \end{definition}

  The concept of logical consequence turns out to be  invariant
  with respect  to truth-perspective changes.

  \begin{lemma}\cite{DBGS}\label{le:invariant}\nl\nl
  $\alpha \vDash \beta$ iff  $\alpha \vDash_{\mathtt{I}} \beta$ iff
  there is a truth-perspective $\mathfrak T$ such that
  $\alpha \vDash_{\frak T} \beta$.

  \end{lemma}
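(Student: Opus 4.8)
The plan is to show that passing from the canonical truth-perspective $\mathtt I$ to an arbitrary truth-perspective $\mathfrak T$ acts on the whole holistic semantics by \emph{local unitary conjugation}, and that this conjugation is an order-isomorphism for $\preceq$; the stated chain of equivalences then drops out. First observe that two of the implications are immediate from the definitions: $\alpha\vDash\beta$ means $\alpha\vDash_{\mathfrak T}\beta$ for \emph{every} $\mathfrak T$, hence in particular $\alpha\vDash_{\mathtt I}\beta$; and $\alpha\vDash_{\mathtt I}\beta$ trivially witnesses ``there is a truth-perspective $\mathfrak T$ with $\alpha\vDash_{\mathfrak T}\beta$''. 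So it is enough to prove, for every $\mathfrak T$, the equivalence $\alpha\vDash_{\mathtt I}\beta\iff\alpha\vDash_{\mathfrak T}\beta$: granting it, from $\alpha\vDash_{\mathfrak T}\beta$ for some $\mathfrak T$ one gets $\alpha\vDash_{\mathtt I}\beta$, and then $\alpha\vDash_{\mathfrak S}\beta$ for every $\mathfrak S$, i.e.\ $\alpha\vDash\beta$.

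The core construction is a bijection $\Phi_{\mathfrak T}$ from the holistic models associated with $\mathtt I$ onto those associated with $\mathfrak T$. Given a model $\mathtt{Hol}_{\mathtt I}$, define $\mathtt{Hol}_{\mathfrak T}$ by conjugating, level by level, with the canonical extension $\mathfrak T^{(n)}$: for a formula $\delta$ with $At(\delta)=n$, set $\mathtt{Hol}_{\mathfrak T}(Level_i^\delta):=\mathfrak T^{(n)}\,\mathtt{Hol}_{\mathtt I}(Level_i^\delta)\,\mathfrak T^{(n)\dagger}$. One checks that $\mathtt{Hol}_{\mathfrak T}$ satisfies Definition~\ref{de:model}: it is a holistic map, since unitary conjugation maps $\mathfrak D(\mathcal H^\delta)$ onto itself; clause~(1) (the gate-tree condition) is preserved because the qumix twin-gate satisfies $^{\mathfrak D}G_{\mathfrak T}(\mathfrak T^{(n)}\sigma\,\mathfrak T^{(n)\dagger})=\mathfrak T^{(n)}\,(^{\mathfrak D}G(\sigma))\,\mathfrak T^{(n)\dagger}$, so conjugation intertwines the canonical gate-tree of $\delta$ with its $\mathfrak T$-gate-tree; and clause~(2) (the ancilla conditions for $\mathbf f,\mathbf t$) uses $^\mathfrak T P_0^{(1)}=\mathfrak T^{(1)}P_0^{(1)}\mathfrak T^{(1)\dagger}$ and $^\mathfrak T P_1^{(1)}=\mathfrak T^{(1)}P_1^{(1)}\mathfrak T^{(1)\dagger}$.

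The one point requiring real care is that reduced states commute with this conjugation. Since $\mathfrak T^{(n)}\ket{x_1,\ldots,x_n}=\mathfrak T\ket{x_1}\otimes\ldots\otimes\mathfrak T\ket{x_n}$, the operator $\mathfrak T^{(n)}$ restricts to $\mathfrak T^{(n_j)}$ on each block of a factorization $\mathcal H^{(n)}=\mathcal H^{(n_1)}\otimes\ldots\otimes\mathcal H^{(n_t)}$; applying Lemma~\ref{le:reduced} to the operators $\mathfrak T^{(n_j)\dagger}A^{(n_j)}\mathfrak T^{(n_j)}$ and using the cyclicity of the trace then gives $Red^{j}(\mathfrak T^{(n)}\rho\,\mathfrak T^{(n)\dagger})=\mathfrak T^{(n_j)}\,Red^{j}(\rho)\,\mathfrak T^{(n_j)\dagger}$. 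Consequently contextual meanings transform covariantly, $\mathtt{Hol}_{\mathfrak T}^\gamma(\beta)=\mathfrak T^{(At(\beta))}\,\mathtt{Hol}_{\mathtt I}^\gamma(\beta)\,\mathfrak T^{(At(\beta))\dagger}$, from which normality of $\mathtt{Hol}_{\mathfrak T}$ follows from that of $\mathtt{Hol}_{\mathtt I}$ and clause~(2) above is confirmed. The inverse of $\Phi_{\mathfrak T}$ is conjugation by $\mathfrak T^{(n)\dagger}$, which by the same reasoning carries $\mathfrak T$-models back to $\mathtt I$-models; so $\Phi_{\mathfrak T}$ is a bijection onto the $\mathfrak T$-models.

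Finally, conjugation preserves probabilities: from $^\mathfrak T P_1^{(n)}=\mathfrak T^{(n)}P_1^{(n)}\mathfrak T^{(n)\dagger}$ and cyclicity of the trace, $\mathtt{p}_{\mathfrak T}(\mathfrak T^{(n)}\rho\,\mathfrak T^{(n)\dagger})=\Tr(\mathfrak T^{(n)}P_1^{(n)}\rho\,\mathfrak T^{(n)\dagger})=\Tr(P_1^{(n)}\rho)=\mathtt{p}(\rho)$, so $\rho\preceq\sigma$ iff $\mathfrak T^{(n)}\rho\,\mathfrak T^{(n)\dagger}\preceq_{\mathfrak T}\mathfrak T^{(n)}\sigma\,\mathfrak T^{(n)\dagger}$. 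Hence, for any $\gamma$ having $\alpha$ and $\beta$ as subformulas and any model $\mathtt{Hol}_{\mathtt I}$, writing $\mathtt{Hol}_{\mathfrak T}:=\Phi_{\mathfrak T}(\mathtt{Hol}_{\mathtt I})$, covariance of contextual meanings together with the order-isomorphism gives $\mathtt{Hol}_{\mathtt I}^\gamma(\alpha)\preceq\mathtt{Hol}_{\mathtt I}^\gamma(\beta)\iff\mathtt{Hol}_{\mathfrak T}^\gamma(\alpha)\preceq_{\mathfrak T}\mathtt{Hol}_{\mathfrak T}^\gamma(\beta)$; since $\Phi_{\mathfrak T}$ is onto the $\mathfrak T$-models, quantifying over all models and all such $\gamma$ on both sides yields $\alpha\vDash_{\mathtt I}\beta\iff\alpha\vDash_{\mathfrak T}\beta$, which with the trivial implications closes the proof. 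The main obstacle is the verification that $\Phi_{\mathfrak T}$ genuinely lands among the models of $\mathcal L$ — essentially the commutation of reduced states with the block-wise action of $\mathfrak T^{(n)}$, which is where Lemma~\ref{le:reduced} does the work.
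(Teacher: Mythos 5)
Your argument is correct, but note that the paper itself does not prove Lemma~\ref{le:invariant}: it is imported from \cite{DBGS} with a bare citation, so there is no in-text proof to compare against. What you have supplied is a self-contained justification, and it is the natural one: the two easy implications are dispatched correctly, and the substantive content is the covariance of the entire semantic apparatus under the level-by-level conjugation $\rho\mapsto\mathfrak T^{(n)}\rho\,\mathfrak T^{(n)\dagger}$. All the key verifications are in place: the intertwining $^{\mathfrak D}G_{\mathfrak T}(\mathfrak T^{(n)}\sigma\,\mathfrak T^{(n)\dagger})=\mathfrak T^{(n)}\,({}^{\mathfrak D}G(\sigma))\,\mathfrak T^{(n)\dagger}$ (which also covers the intermediate levels, where the level-gate is a tensor product of twin gates over blocks, precisely because $\mathfrak T^{(n)}$ factorizes as $\mathfrak T^{(n_1)}\otimes\ldots\otimes\mathfrak T^{(n_t)}$ over any blocking); the commutation of reduced states with local conjugation via the trace characterization of Lemma~\ref{le:reduced}, which gives covariance of contextual meanings and hence preserves normality and the $\mathbf f,\mathbf t$ clauses; the identity $^\mathfrak T P_1^{(n)}=\mathfrak T^{(n)}P_1^{(n)}\mathfrak T^{(n)\dagger}$ yielding $\mathtt{p}_{\mathfrak T}(\mathfrak T^{(n)}\rho\,\mathfrak T^{(n)\dagger})=\mathtt{p}(\rho)$ and thus an order-isomorphism between $\preceq$ and $\preceq_{\mathfrak T}$; and the surjectivity of $\Phi_{\mathfrak T}$ via the inverse conjugation, which is what licenses quantifying over all models on both sides. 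Since the quantification over context formulas $\gamma$ is purely syntactic and unaffected by the truth-perspective, the equivalence $\alpha\vDash_{\mathtt I}\beta\iff\alpha\vDash_{\mathfrak T}\beta$ follows, and with it the lemma. I see no gap; you have effectively reconstructed the proof that the authors delegate to the cited reference.
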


 Although the holistic semantics is strongly context-dependent,
 one can prove that  the logical consequence-relation is reflexive
  and transitive.

\begin{theorem}\label{th:trans}\nl \nl
\begin{enumerate}
\item[(1)] $\alpha \vDash \alpha$;
\item[(2)] $\alpha \vDash \beta \quad \text{and}\quad \beta
    \vDash \delta \quad \Rightarrow \quad \alpha \vDash
    \delta $.\end{enumerate}

\end{theorem}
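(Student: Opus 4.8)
The plan is to prove reflexivity and transitivity of $\vDash$ by first reducing to the canonical truth-perspective $\mathtt I$ via Lemma~\ref{le:invariant}, so that throughout we only need to argue about $\vDash_{\mathtt I}$ and the preorder $\preceq = \preceq_{\mathtt I}$ on $\mathfrak D$.

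For part (1), I would take an arbitrary formula $\gamma$ having $\alpha$ as a subformula (for instance $\gamma = \alpha$ itself), and an arbitrary model $\mathtt{Hol}_{\mathtt I}$. Since $\mathtt{Hol}_{\mathtt I}$ is in particular a normal holistic map, any two occurrences of $\alpha$ in $STree^\gamma$ receive the same contextual meaning, so $\mathtt{Hol}_{\mathtt I}^\gamma(\alpha)$ is well-defined, and trivially $\mathtt{Hol}_{\mathtt I}^\gamma(\alpha) \preceq \mathtt{Hol}_{\mathtt I}^\gamma(\alpha)$ because $\preceq$ is reflexive (it is defined by $\le$ on real probability values). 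Hence $\alpha \vDash_{\mathtt I} \alpha$, and by Lemma~\ref{le:invariant}, $\alpha \vDash \alpha$.

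For part (2), assume $\alpha \vDash \beta$ and $\beta \vDash \delta$; by Lemma~\ref{le:invariant} this gives $\alpha \vDash_{\mathtt I} \beta$ and $\beta \vDash_{\mathtt I} \delta$, and it suffices to show $\alpha \vDash_{\mathtt I} \delta$. Fix a formula $\gamma$ in which both $\alpha$ and $\delta$ occur as subformulas, and fix a model $\mathtt{Hol}_{\mathtt I}$; we must show $\mathtt{Hol}_{\mathtt I}^\gamma(\alpha) \preceq \mathtt{Hol}_{\mathtt I}^\gamma(\delta)$. The difficulty is that the definition of $\alpha \vDash_{\mathtt I}\beta$ only speaks about formulas $\gamma'$ in which $\alpha$ and $\beta$ both occur; here $\beta$ need not occur in our given $\gamma$. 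The remedy is to pass to the auxiliary formula $\gamma \land \beta = \intercal(\gamma,\beta,\mathbf f)$, in which $\alpha$, $\beta$ and $\delta$ all occur as subformulas. By Lemma~\ref{le:noia} (applied with $\eta$ ranging over $\alpha$ and $\delta$, which are subformulas of $\gamma$), there is a model $^*\mathtt{Hol}_{\mathtt I}$ with $^*\mathtt{Hol}_{\mathtt I}^{\gamma\land\beta}(\alpha) = \mathtt{Hol}_{\mathtt I}^\gamma(\alpha)$ and $^*\mathtt{Hol}_{\mathtt I}^{\gamma\land\beta}(\delta) = \mathtt{Hol}_{\mathtt I}^\gamma(\delta)$. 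Since $\alpha$ and $\beta$ are subformulas of $\gamma\land\beta$, the hypothesis $\alpha \vDash_{\mathtt I}\beta$ gives $^*\mathtt{Hol}_{\mathtt I}^{\gamma\land\beta}(\alpha) \preceq {}^*\mathtt{Hol}_{\mathtt I}^{\gamma\land\beta}(\beta)$; since $\beta$ and $\delta$ are subformulas of $\gamma\land\beta$, the hypothesis $\beta \vDash_{\mathtt I}\delta$ gives $^*\mathtt{Hol}_{\mathtt I}^{\gamma\land\beta}(\beta) \preceq {}^*\mathtt{Hol}_{\mathtt I}^{\gamma\land\beta}(\delta)$. Chaining these through the transitivity of $\preceq$ (inherited from $\le$ on $[0,1]$), and using the two identities from Lemma~\ref{le:noia}, we obtain $\mathtt{Hol}_{\mathtt I}^\gamma(\alpha) \preceq \mathtt{Hol}_{\mathtt I}^\gamma(\delta)$, as required.

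The one point needing a little care — and the main obstacle — is that Lemma~\ref{le:noia} as stated produces, for each subformula $\eta$ of $\gamma$, a model $^*\mathtt{Hol}_{\mathtt I}$ preserving the contextual meaning of \emph{that} $\eta$; one should check that the construction in its proof in fact yields a \emph{single} model simultaneously preserving the contextual meanings of all subformulas of $\gamma$ (which is clear from item (IV) of that proof, phrased there "for any $\eta$ that is a subformula of $\gamma$"). With that observation, the same $^*\mathtt{Hol}_{\mathtt I}$ works for both $\alpha$ and $\delta$, and the argument closes. Everything else is a routine application of the reflexivity and transitivity of the ordering $\le$ on probability values together with Lemmas~\ref{le:invariant} and~\ref{le:noia}.
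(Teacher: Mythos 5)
Your proof is correct and follows essentially the same route as the paper's: both reduce transitivity to the auxiliary formula $\gamma \land \beta$ and invoke Lemma~\ref{le:noia} to transfer the contextual meanings of $\alpha$ and $\delta$ from $\gamma$ to $\gamma\land\beta$, then chain the two hypotheses through the transitivity of $\preceq$ (the paper merely wraps this in a superfluous proof by contradiction, whereas you argue directly). Your remark that Lemma~\ref{le:noia} must be read as producing a single model $^*\mathtt{Hol}$ working simultaneously for all subformulas $\eta$ of $\gamma$ is exactly the reading the paper itself uses in its proof of this theorem.
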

\begin{proof}\nl

\begin{enumerate}
\item[(1)] Straightforward.
\item[(2)]  Assume the hypothesis and suppose, by
    contradiction, that there exist a model ${\mathtt{Hol}}_\mathfrak T$ and a formula $\gamma$, where $\alpha$
    and $\delta$ occur as subformulas, such that: ${\mathtt{Hol}}^\gamma_\mathfrak T(\alpha)\, \not\preceq_\mathfrak
    T\, {\mathtt{Hol}}^\gamma_\mathfrak T(\delta)$. Consider the
    formula $\gamma \land \beta$. By Lemma \ref{le:noia} there
    exists a model $^*{\mathtt{Hol}}_\mathfrak T$ such that for any $\eta$ that
    is a subformula of $\gamma$: $^*{\mathtt{Hol}}_\mathfrak T^{\gamma \land \beta}(\eta) = {\mathtt{Hol}}_\mathfrak T^{\gamma}(\eta)$. Thus, we have: \nl $^*{\mathtt{Hol}}_\mathfrak T^{\gamma \land \beta}(\alpha) = {\mathtt{Hol}}_\mathfrak T^{\gamma }(\alpha)$ and $^*{\mathtt{Hol}}_\mathfrak T^{\gamma \land \beta}(\delta) = {\mathtt{Hol}}_\mathfrak T^{\gamma }(\delta)$ .\nl Since we have
    assumed (by contradiction) that ${\mathtt{Hol}}_\mathfrak
    T^{\gamma }(\alpha) \not\preceq_\mathfrak T {\mathtt{Hol}}_\mathfrak T^{\gamma }(\delta)$, we obtain: $^*{\mathtt{Hol}}_\mathfrak T^{\gamma \land \beta }(\alpha)
    \not\preceq_\mathfrak T \,^*{\mathtt{Hol}}_\mathfrak
    T^{\gamma \land \beta}(\delta)$, against the hypothesis
    and the transitivity of $\preceq_\mathfrak T$, which
    imply:\nl $^*{\mathtt{Hol}}_\mathfrak T^{\gamma \land
    \beta}(\alpha)
     \preceq_\mathfrak T \, ^*{\mathtt{Hol}}_\mathfrak T^{\gamma
    \land \beta }(\beta)$; $^*{\mathtt{Hol}}_\mathfrak T^{\gamma
    \land \beta}(\beta) \preceq_\mathfrak T
     \,^*{\mathtt{Hol}}_\mathfrak T^{\gamma \land \beta
     }(\delta)$; \nl $^*{\mathtt{Hol}}_\mathfrak T^{\gamma
     \land \beta}(\alpha)
    \preceq_\mathfrak T \,^*{\mathtt{Hol}}_\mathfrak T^{\gamma
    \land \beta }(\delta)$.

\end{enumerate}

  \end{proof}

The concept of logical consequence, defined in this semantics,
characterizes a special form of quantum computational logic
(formalized in the language $\mathcal L$) that is termed {\em
holistic quantum computational logic\/} ({\bf HQCL}). One can
easily show that {\bf HQCL} includes, as a particular fragment,
classical sentential logic (representing also an adequate
description of classical circuits).

 Consider the
sublanguage $\mathcal L^C$ of $\mathcal L$, whose formulas are the
Boolean formulas of $\mathcal L$.
\begin{definition}{\em (Classical quantum computational model)}
 \label{de:classmod} \nl\nl
A {\em classical quantum computational model\/}
is a model ${\mathtt{Hol}}_\mathfrak T$ that satisfies the following conditions:
\begin{enumerate}
\item[(1)] $\mathfrak T$ is the canonical truth-perspective
    ${\mathtt{I}}$.
\item [(2)] ${\mathtt{Hol}}_\mathfrak T$ is only defined for
    $\mathcal L^C$-formulas.
\item[(3)] For any formula $\alpha$ of $\mathcal L^C$,
    ${\mathtt{Hol}}_\mathfrak T$  assigns to the top level of
    the syntactical tree of $\alpha$ a (canonical) register
    (living in the semantic space of $\alpha$).

\end{enumerate}

\end{definition}

One immediately obtains that any classical quantum computational
model assigns to any Boolean formula of $\mathcal L$ a (canonical)
register (living in its semantic space).

 We can now define a
  consequence-relation that concerns  the Boolean language
$\mathcal L^C$.

\begin{definition} {\em (Classical quantum computational consequence)}
 \label{de:classcons}\nl
\nl A formula $\beta$ of $\mathcal L^C$ is called a {\em classical
quantum computational consequence\/}
      of a formula $\alpha $ of $\mathcal L^C$ ($\alpha \vDash_{CQC} \beta$)
       iff for any
      formula $\gamma$ of $\mathcal L^C$ such that $\alpha$ and $\beta$ are
      subformulas of $\gamma$ and
      for any classical quantum computational
    model ${\mathtt{Hol}}_{\mathtt{I}}$,
  $${\mathtt{Hol}}_{\mathtt{I}}^\gamma(\alpha) \preceq_{\mathtt{I}}
   {\mathtt{Hol}}_{\mathtt{I}}^\gamma (\beta).$$

   \end{definition}

   \begin{lemma}\label{le:coincid} \nl\nl
   For any  formulas $\alpha$ and $\beta$ of $\mathcal L^C$,
    $\alpha \vDash_{CQC} \beta$  iff $\beta$ is a logical
   consequence of $\alpha$ according to classical sentential logic.

   \end{lemma}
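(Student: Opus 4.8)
The plan is to show that, restricted to Boolean formulas and to classical quantum computational models, the holistic gate-tree dynamics of Section~4 degenerates into ordinary (bottom-up) truth-table evaluation; both directions of the equivalence are then immediate. The first step is a structural fact: for any classical quantum computational model $\mathtt{Hol}_{\mathtt{I}}$ (Definition~\ref{de:classmod}) and any $\mathcal L^C$-formula $\delta$, the meaning $\mathtt{Hol}_{\mathtt{I}}(Level_i^\delta)$ of every level of $STree^\delta$ is a canonical register of $\mathcal H^{(At(\delta))}$. Indeed, the top level is a register by clause~(3) of Definition~\ref{de:classmod}, and each gate occurring in the $\mathtt{I}$-gate tree of a Boolean formula is a tensor product of identities, negations $\mathtt{NOT}^{(k)}$, Toffoli gates $\mathtt{T}^{(\cdot,\cdot,1)}$ (recall $\alpha\land\beta:=\intercal(\alpha,\beta,\mathbf f)$) and $\mathtt{XOR}$ gates, each of which maps canonical basis vectors to canonical basis vectors by Definitions~\ref{de:not},~\ref{de:toffoli},~\ref{de:xor}; so, applying clause~(1) of Definition~\ref{de:model} level by level, we never leave $B^{(At(\delta))}$. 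Since the reduced state of a product register with respect to any block of indices is again the tensor of the corresponding bits, each contextual meaning $\mathtt{Hol}_{\mathtt{I}}^\delta(\eta)$ of a subformula $\eta$ is a canonical register, and $\mathtt{Hol}_{\mathtt{I}}(Level_i^\delta)=\mathtt{Hol}_{\mathtt{I}}^\delta(\eta_1)\otimes\cdots\otimes\mathtt{Hol}_{\mathtt{I}}^\delta(\eta_k)$ whenever $Level_i^\delta=(\eta_1,\ldots,\eta_k)$. (This factorisation, false for holistic models in general, is exactly what will make everything classical.)

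Next I would prove the key \emph{coincidence lemma}. Fix a classical quantum computational model $\mathtt{Hol}_{\mathtt{I}}$ and a Boolean formula $\gamma$. By normality (a classical quantum computational model is in particular a holistic model, hence a normal holistic map) every atom $\mathbf q$ of $\gamma$ has a well-defined contextual bit $\mathtt{Hol}_{\mathtt{I}}^\gamma(\mathbf q)=\ket{v(\mathbf q)}$, and clause~(2) of Definition~\ref{de:model} forces $v(\mathbf f)=0$, $v(\mathbf t)=1$; thus $\mathtt{Hol}_{\mathtt{I}}$ induces a classical valuation $v$. Writing $v(\alpha)\in\{0,1\}$ for the ordinary truth value of $\alpha$ under $v$, I claim that for every subformula $\alpha$ of $\gamma$ the last bit of the register $\mathtt{Hol}_{\mathtt{I}}^\gamma(\alpha)$ equals $v(\alpha)$, so that $\mathtt{p}(\mathtt{Hol}_{\mathtt{I}}^\gamma(\alpha))=v(\alpha)$ (by Definition~\ref{def:true} and $\mathtt{p}(\rho)=\Tr(P_1^{(n)}\rho)$, or via Lemma~\ref{le:project}(3)(a)). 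This is an induction on the structure of $\alpha$: the atomic cases and $\mathbf t$, $\mathbf f$ are immediate; for $\lnot\beta$ apply Theorem~\ref{th:commut}(1) and note that $^{\mathfrak D}\mathtt{NOT}^{(k)}$ flips the last bit; for $\alpha_1\land\alpha_2=\intercal(\alpha_1,\alpha_2,\mathbf f)$ apply Theorem~\ref{th:weakcomm}(1), observing that by the structural fact the reduced state occurring there equals $\mathtt{Hol}_{\mathtt{I}}^\gamma(\alpha_1)\otimes\mathtt{Hol}_{\mathtt{I}}^\gamma(\alpha_2)\otimes P_0^{(1)}$, so $^{\mathfrak D}\mathtt{T}^{(\cdot,\cdot,1)}$ writes the product (i.e.\ the conjunction) of $v(\alpha_1)$ and $v(\alpha_2)$ into the last coordinate; the case of $\lor$ reduces to the previous two via $\alpha_1\lor\alpha_2:=\lnot(\lnot\alpha_1\land\lnot\alpha_2)$, and the case of $\uplus$ is the analogue using Theorem~\ref{th:weakcomm}(2) and $\mathtt{XOR}$.

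With the coincidence lemma both directions are short. For the ``if'' direction, assume $\beta$ is a classical consequence of $\alpha$; given any Boolean $\gamma$ having $\alpha,\beta$ as subformulas and any classical quantum computational model $\mathtt{Hol}_{\mathtt{I}}$ with induced valuation $v$: if $v(\alpha)=0$ then trivially $\mathtt{p}(\mathtt{Hol}_{\mathtt{I}}^\gamma(\alpha))=0\le\mathtt{p}(\mathtt{Hol}_{\mathtt{I}}^\gamma(\beta))$, while if $v(\alpha)=1$ then $v(\beta)=1$, so $\mathtt{p}(\mathtt{Hol}_{\mathtt{I}}^\gamma(\alpha))=1=\mathtt{p}(\mathtt{Hol}_{\mathtt{I}}^\gamma(\beta))$; in either case $\mathtt{Hol}_{\mathtt{I}}^\gamma(\alpha)\preceq_{\mathtt{I}}\mathtt{Hol}_{\mathtt{I}}^\gamma(\beta)$, i.e.\ $\alpha\vDash_{CQC}\beta$. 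For the ``only if'' direction, assume $\alpha\vDash_{CQC}\beta$ and let $v$ be an arbitrary classical valuation with $v(\mathbf f)=0$, $v(\mathbf t)=1$. Build $\mathtt{Hol}_{\mathtt{I}}^v$ by assigning to the top level of each $\mathcal L^C$-formula $\delta$ the register whose $j$-th bit is $v$ of the $j$-th atomic occurrence, and propagating downward through the $\mathtt{I}$-gate tree as in clause~(1) of Definition~\ref{de:model}; using the structural fact one checks routinely that $\mathtt{Hol}_{\mathtt{I}}^v$ is normal and satisfies clause~(2) of Definition~\ref{de:model}, hence is a classical quantum computational model whose induced valuation on any $\gamma$ is the restriction of $v$. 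Applying $\alpha\vDash_{CQC}\beta$ to $\gamma:=\alpha\land\beta$ (a Boolean formula with $\alpha,\beta$ as subformulas): if $v(\alpha)=1$ then $\mathtt{p}(\mathtt{Hol}_{\mathtt{I}}^{v,\gamma}(\alpha))=1$, so $\mathtt{p}(\mathtt{Hol}_{\mathtt{I}}^{v,\gamma}(\beta))=1$ and hence $v(\beta)=1$; as $v$ was arbitrary, $\beta$ is a classical consequence of $\alpha$.

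I expect the only genuine obstacle to be the inductive coincidence lemma, where the holistic steps of Theorem~\ref{th:weakcomm} must be shown to collapse — precisely because classical quantum computational models keep every level a tensor product of bits — into plain classical $\mathtt{AND}$/$\mathtt{XOR}$ gates acting on single truth-bits. The remaining points (unfolding the metalinguistic abbreviations $\land$ and $\lor$ and checking that the constructed $\mathtt{Hol}_{\mathtt{I}}^v$ satisfies every clause of Definitions~\ref{de:model} and~\ref{de:classmod}) are routine but should be carried out with a little care.
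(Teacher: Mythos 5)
Your proposal is correct; the paper itself dismisses this lemma with the single word ``Straightforward,'' and what you have written out — that classical quantum computational models keep every level of the syntactical tree a canonical product register, so that contextual meanings reduce to the bits of an induced classical valuation and the gate dynamics collapses to truth-table evaluation — is exactly the argument the authors intend the reader to supply. No gaps; the inductive coincidence step and the construction of $\mathtt{Hol}_{\mathtt{I}}^v$ from an arbitrary valuation are both handled correctly.
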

   \begin{proof} Straightforward. \end{proof}

\section{Logical arguments}
Which logical arguments are valid or are possibly violated in the
logic {\bf HQCL}? The following theorems give some answers to this
question. By Lemma \ref{le:invariant} it will be sufficient to
refer to the canonical logical consequence relation and to
canonical models. Accordingly, we will write: ${\mathtt{p}}$,
$\preceq$, ${\mathtt{Hol}}$ and $\vDash$ (instead of
${\mathtt{p}}_{\mathtt{I}}$, $\preceq_{\mathtt{I}}$,
${\mathtt{Hol}}_{\mathtt{I}}$ and $\vDash_{\mathtt{I}}$).

  Theorem \ref{th:valbool} sums up some basic arguments that
  hold for the quantum computational Boolean connectives.

  \begin{theorem}\label{th:valbool}\nl
  \begin{enumerate}
  \item [(1)] $\alpha \land \beta \vDash \alpha$; $\alpha
      \land \beta \vDash \beta$
  \item [(2)] $\alpha \vDash \beta \, \Rightarrow \,
      \alpha \land \delta \vDash \beta$
  \item [(3)] $\lnot \lnot \alpha \equiv \alpha$
  \item [(4)] $ \alpha \vDash \beta \, \Rightarrow \,\lnot
      \beta \vDash \lnot \alpha $
  \item [(5)] $\mathbf f \vDash \beta$;  $\beta \vDash \mathbf
      t$

  \end{enumerate}
  \end{theorem}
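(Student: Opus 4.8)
The plan is to reduce each statement to a probabilistic inequality about reduced states, using the machinery already developed: the definition of $\vDash$ via the preorder $\preceq$, Lemma~\ref{le:noia} (to embed $\alpha$ and $\beta$ as subformulas of a common $\gamma$ whenever needed), Theorem~\ref{th:and} (the key submultiplicativity of $\mathtt{AND}$), Theorem~\ref{th:commut} (commutation of $\lnot$ with contextual meaning), and condition~(2) of the definition of holistic model (the contextual meanings of $\mathbf t$ and $\mathbf f$). Throughout, recall that $\alpha \land \beta$ abbreviates $\intercal(\alpha,\beta,\mathbf f)$ and $\mathtt{Hol}^\gamma(\alpha\land\beta)$ is obtained from $\mathtt{Hol}^\gamma(\alpha)$, $\mathtt{Hol}^\gamma(\beta)$ and $^{\mathtt I}P_0^{(1)}$ via Theorem~\ref{th:weakcomm}, i.e.\ as $\mathtt{AND}^{(At(\alpha),At(\beta))}$ applied to the appropriate reduced state at the level above.

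For (1), fix $\gamma$ containing $\alpha\land\beta$ and a model $\mathtt{Hol}$. By Theorem~\ref{th:weakcomm}(1), $\mathtt{Hol}^\gamma(\alpha\land\beta) = \mathtt{AND}^{(m,n)}(\rho)$ where $\rho$ is the relevant reduced state on $\mathcal H^{(m)}\otimes\mathcal H^{(n)}$ with $\mathtt{Hol}^\gamma(\alpha) = Red^{(1)}_{[m,n]}(\rho)$ and $\mathtt{Hol}^\gamma(\beta) = Red^{(2)}_{[m,n]}(\rho)$ (using that $\mathbf f$ contributes the ancilla and is dropped under the reduction). Then Theorem~\ref{th:and} gives exactly $\mathtt{p}(\mathtt{Hol}^\gamma(\alpha\land\beta)) \le \mathtt{p}(\mathtt{Hol}^\gamma(\alpha))$ and likewise for $\beta$, i.e.\ $\mathtt{Hol}^\gamma(\alpha\land\beta) \preceq \mathtt{Hol}^\gamma(\alpha)$, $\mathtt{Hol}^\gamma(\alpha\land\beta)\preceq\mathtt{Hol}^\gamma(\beta)$; since $\gamma$ and $\mathtt{Hol}$ were arbitrary this is $\alpha\land\beta\vDash\alpha$ and $\alpha\land\beta\vDash\beta$. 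Statement (2) is then immediate: $\alpha\land\delta\vDash\alpha$ by (1), $\alpha\vDash\beta$ by hypothesis, so $\alpha\land\delta\vDash\beta$ by transitivity (Theorem~\ref{th:trans}(2)). For (3), $\lnot\lnot\alpha\equiv\alpha$: by Theorem~\ref{th:commut}(1) applied twice, $\mathtt{Hol}^\gamma(\lnot\lnot\alpha) = {}^{\mathcal D}\mathtt{NOT}^{(At(\alpha))}({}^{\mathcal D}\mathtt{NOT}^{(At(\alpha))}(\mathtt{Hol}^\gamma(\alpha)))$, and since $\mathtt{NOT}^{(n)}$ is an involution this equals $\mathtt{Hol}^\gamma(\alpha)$; so the two formulas have identical contextual meanings in every context, giving $\preceq$ both ways, hence $\equiv$. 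For (4), suppose $\alpha\vDash\beta$; fix $\gamma$ containing $\lnot\beta$ and $\lnot\alpha$ (hence $\alpha,\beta$) and a model. By Theorem~\ref{th:commut}(1), $\mathtt{Hol}^\gamma(\lnot\alpha) = {}^{\mathcal D}\mathtt{NOT}^{(At(\alpha))}(\mathtt{Hol}^\gamma(\alpha))$ and similarly for $\lnot\beta$; since $\mathtt{p}({}^{\mathcal D}\mathtt{NOT}(\sigma)) = 1 - \mathtt{p}(\sigma)$ for a single-qubit reduct (the last component gets flipped, and $\mathtt{p}$ depends only on that component via Lemma~\ref{le:project}(3a)), the hypothesis $\mathtt{p}(\mathtt{Hol}^\gamma(\alpha))\le\mathtt{p}(\mathtt{Hol}^\gamma(\beta))$ yields $1-\mathtt{p}(\mathtt{Hol}^\gamma(\beta))\le 1-\mathtt{p}(\mathtt{Hol}^\gamma(\alpha))$, i.e.\ $\mathtt{Hol}^\gamma(\lnot\beta)\preceq\mathtt{Hol}^\gamma(\lnot\alpha)$. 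Finally (5): for any $\gamma$ containing $\mathbf f$ and $\beta$ and any model, $\mathtt{Hol}^\gamma(\mathbf f) = {}^{\mathtt I}P_0^{(1)}$ by the model condition, so $\mathtt{p}(\mathtt{Hol}^\gamma(\mathbf f)) = \Tr(P_1^{(1)}P_0^{(1)}) = 0 \le \mathtt{p}(\mathtt{Hol}^\gamma(\beta))$; dually $\mathtt{Hol}^\gamma(\mathbf t) = {}^{\mathtt I}P_1^{(1)}$ has $\mathtt{p} = 1 \ge \mathtt{p}(\mathtt{Hol}^\gamma(\beta))$.

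The one point needing a little care is the bookkeeping in (1): I must make sure the reduced state $\rho$ produced by Theorem~\ref{th:weakcomm} really has $\mathtt{Hol}^\gamma(\alpha)$ and $\mathtt{Hol}^\gamma(\beta)$ as its two marginals and that the ancilla slot is exactly the one consumed by $\mathtt{AND}$'s definition — this is where the convention $\alpha\land\beta := \intercal(\alpha,\beta,\mathbf f)$ together with the model condition forcing the $\mathbf f$-reduct to be $^{\mathtt I}P_0^{(1)}$ does the work, and Definition~\ref{de:and} then identifies the composite with $\mathtt{AND}^{(At(\alpha),At(\beta))}(\rho)$ restricted to the $\alpha,\beta$ factors. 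Everything else is a direct application of the cited results; no genuinely new estimate is required.
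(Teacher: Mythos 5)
Your proof is correct and follows essentially the same route as the paper's: Theorem~\ref{th:weakcomm}(1) combined with Theorem~\ref{th:and} for (1), Theorem~\ref{th:commut}(1) together with the involutive and probability-reversing properties of $^{\mathfrak D}\mathtt{NOT}$ for (3) and (4), and the model conditions on $\mathbf f$ and $\mathbf t$ for (5), with the same bookkeeping identification of the contextual meaning of $\intercal(\alpha,\beta,\mathbf f)$ with $\mathtt{AND}^{(At(\alpha),At(\beta))}$ of the relevant reduced state. The only cosmetic difference is in (2), where you invoke transitivity of $\vDash$ (Theorem~\ref{th:trans}(2)) whereas the paper stays inside a fixed context $\gamma$ and uses transitivity of the preorder $\preceq$ directly; both are sound.
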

\begin{proof}\nl \nl
\begin{enumerate}
\item [(1)] $\alpha\land\beta\vDash\alpha$;\,
    $\alpha\land\beta\vDash\beta$.\nl Let $\alpha$ and $\alpha
    \land \beta$ (= $\intercal (\alpha,\beta, \mathbf f)$)  be
    subformulas of $\gamma$. Suppose that $\alpha$, $\beta$,
    $\mathbf f$ occur respectively at the positions $k_1$,
    $k_2$, $k_3$ of $Level_{i+1}^\gamma$ (in the syntactical
    tree of $\gamma$), while  $\intercal (\alpha,\beta,
    \mathbf f)$ occurs at $Level_{i}^\gamma$. By Theorem
    \ref{th:weakcomm}(1),  for any ${\mathtt{Hol}}$ we have:
    $${\mathtt{Hol}}^\gamma(\intercal (\alpha,\beta,\mathbf f))= \,
    ^\mathfrak D{\mathtt{T}}^{(At(\alpha),At(\beta),
    At(\mathbf f))}(Red_{[1,\ldots, r]}^{(k_1,k_2,k_3)}
    ({\mathtt{Hol}}(Level^\gamma_{i+1})))   $$
    (where $r$ is the number of formulas occurring at
    $Level_{i+1}^\gamma$). Hence, by definition of contextual
    meaning and by Theorem \ref{th:and}:
    $${\mathtt{Hol}}^\gamma(\intercal (\alpha,\beta,\mathbf f)) \preceq
    {\mathtt{Hol}}^\gamma (\alpha).   $$
    In a similar way one can prove that $\alpha \land \beta
      \vDash \beta$.

\item [(2)] $\alpha\vDash\beta\;\Rightarrow \;\alpha\land
    \delta \vDash \beta$.\nl Assume the hypothesis and let
    $\alpha\land \delta$, $\beta$ be subformulas of $\gamma$.
Then $\alpha$ and $\delta$  also are subformulas of $\gamma$.
By hypothesis, for any ${\mathtt{Hol}}$: ${\mathtt{Hol}}^\gamma(\alpha)
\preceq  {\mathtt{Hol}}^\gamma(\beta)$. By (1): ${\mathtt{Hol}}^\gamma(\alpha \land \delta) \preceq {\mathtt{Hol}}^\gamma(\alpha)$. Hence, by transitivity of $\preceq$:
${\mathtt{Hol}}^\gamma(\alpha \land \delta) \preceq  {\mathtt{Hol}}^\gamma(\beta)$.

\item [(3)] $\lnot\lnot\alpha\equiv\alpha$.\nl Let $\lnot
    \lnot \alpha$ and $\alpha$ be subformulas of $\gamma$. By
    Theorem \ref{th:commut} (1) and by the double-negation
    principle for the gate $^\mathfrak D {\mathtt{NOT}}^{(n)}$, we
    obtain for any ${\mathtt{Hol}}$:
    $${\mathtt{Hol}}^\gamma(\lnot\lnot\alpha) =
    \,^\mathfrak D {\mathtt{NOT}}^{(At(\alpha))} \,
    ^\mathfrak D {\mathtt{NOT}}^{(At(\alpha))}{\mathtt{Hol}}^\gamma(\alpha) =
    {\mathtt{Hol}}^\gamma(\alpha).$$

\item [(4)]
    $\alpha\vDash\beta\;\Rightarrow\;\lnot\beta\vDash\lnot\alpha$.\nl
Assume the hypothesis and let $\lnot \beta$, $\lnot \alpha$ be
subformulas of $\gamma$. Then $\alpha$ and $\beta$ also are
subformulas of $\gamma$. By hypothesis, for any ${\mathtt{Hol}}$,
${\mathtt{p}}({\mathtt{Hol}}^\gamma(\alpha))\leq {\mathtt{p}}({\mathtt{Hol}}^\gamma (\beta))$. Hence, $1-{\mathtt{p}}({\mathtt{Hol}}^\gamma(\beta))\leq 1- {\mathtt{p}}({\mathtt{Hol}}^\gamma(\alpha))$.
Since for any $\rho \in \mathfrak D(\mathcal H^{(n)}),\, {\mathtt{p}}(^\mathfrak D{\mathtt{NOT}}^{(n)}\rho) = 1 - {\mathtt{p}}(\rho)$, we
obtain:\nl $^\mathfrak D{\mathtt{NOT}}^{(At(\beta))}{\mathtt{Hol}}^\gamma(\beta)\, \preceq \,^\mathfrak D{\mathtt{NOT}}^{(At(\alpha))}{\mathtt{Hol}}^\gamma(\alpha)$. Whence, by
Theorem \ref{th:commut} (1), ${\mathtt{Hol}}^{\gamma}(\lnot \beta)
\preceq {\mathtt{Hol}}^{\gamma}(\lnot \alpha)$.\nl
\item [(5)]$\mathbf f \vDash\beta$; $\beta\vDash\mathbf t$.\nl
    Let $\beta$ and $\mathbf f$  be subformulas of $\gamma$.
    By definition of holistic model we have: $ {\mathtt{p}}({\mathtt{Hol}}^\gamma(\mathbf f)) = {\mathtt{p}}(P^{(1)}_0) = 0$, for any
    ${\mathtt{Hol}}$. Hence, ${\mathtt{Hol}}^\gamma(\mathbf f) \preceq
    {\mathtt{Hol}}^\gamma(\beta)$. In a similar way one proves that
    $\beta\vDash\mathbf t$.
    \end{enumerate}

  \end{proof}

The dual forms of 5.1(1) and of  5.1(2) hold for the connective
$\lor$.

The following theorem sums up some significant classical arguments
that are not valid for the quantum computational Boolean
connectives.

\begin{theorem}\label{th:nval}\nl \nl
\begin{enumerate}
\item [(1)]$\alpha \nvDash \alpha \land \alpha$
\item  [(2)]$\alpha \land \beta \nvDash \beta \land \alpha$
\item [(3)] $\alpha \land (\beta \land \delta) \nvDash (\alpha
    \land \beta) \land \delta$
\item [(4)] $(\alpha \land \beta) \land \delta \nvDash \alpha
    \land (\beta \land \delta)$
\item [(5)] $\alpha \land (\beta \lor \delta) \nvDash (\alpha
    \land \beta) \lor (\alpha \land \delta)$
\item [(6)] $(\alpha \land \beta) \lor (\alpha \land \delta)
    \nvDash \alpha \land (\beta \lor \delta)$
\item [(7)] $\delta \vDash \alpha  \,\text{and}\, \delta
    \vDash \beta \,\nRightarrow \, \delta \vDash \alpha
    \land \beta $
\item [(8)] $\alpha \land \lnot \alpha \nvDash \beta$
\item [(9)] $\alpha \uplus \beta \nvDash \beta \uplus \alpha$
\item [(10)] $\alpha \uplus \beta \nvDash \alpha \lor \beta;
    \quad\alpha \uplus \beta \nvDash \lnot \alpha \lor \lnot
    \beta$

\end{enumerate}

\end{theorem}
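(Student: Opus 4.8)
The plan is to prove each of the ten non-consequences (together with the failed implication (7)) by producing, for each item, an explicit \emph{witnessing context} $\gamma$ and an explicit model; by Lemma~\ref{le:invariant} it suffices to use canonical models. For a non-consequence ``$A\nvDash B$'' we choose $\gamma$ having both $A$ and $B$ as subformulas --- for the ``two-sided'' items (2)--(6), (9), (10) simply $\gamma=A\land B$ --- we replace distinct schematic letters by distinct atoms $\mathbf q,\mathbf q_1,\dots$, and we define the model by assigning a carefully chosen \emph{entangled} qumix to the top level $Level_h^\gamma$ of $STree^\gamma$; by Definition~\ref{de:model} all lower levels are then forced. The only things to verify for such an assignment to be (the restriction to $STree^\gamma$ of) a genuine model are (i) that every occurrence of $\mathbf f$, resp.\ $\mathbf t$, reduces to $P_0^{(1)}$, resp.\ $P_1^{(1)}$, and (ii) normality inside $STree^\gamma$, i.e.\ that occurrences of a common subformula get equal reduced states; away from $STree^\gamma$ the model is extended in any routine (e.g.\ compositional) way. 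One then computes $\mathtt{p}(\mathtt{Hol}^\gamma(A))$ and $\mathtt{p}(\mathtt{Hol}^\gamma(B))$ from Theorem~\ref{th:prand} --- which evaluates the probability of a conjunction $\intercal(\cdot,\cdot,\mathbf f)$ as $\mathtt{tr}((P_1\otimes P_1)\rho)$ on the relevant two-block reduced state $\rho$ --- together with Lemma~\ref{le:project}, the identity $\mathtt{p}({}^{\mathfrak D}\mathtt{NOT}^{(n)}\rho)=1-\mathtt{p}(\rho)$ (whence, via the de Morgan definition, $\mathtt{p}$ of a disjunction of two parts with joint state $\tau$ equals $1-\mathtt{tr}((P_0\otimes P_0)\tau)$), and the action of $\mathtt{XOR}$ and $\mathtt{NOT}$ on the relevant vectors; one concludes $\mathtt{p}(\mathtt{Hol}^\gamma(A))>\mathtt{p}(\mathtt{Hol}^\gamma(B))$, which refutes $\mathtt{Hol}^\gamma(A)\preceq\mathtt{Hol}^\gamma(B)$.

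The single mechanism behind all items is that entanglement lets a part carry the perfectly ambiguous contextual meaning $\tfrac12\mathtt{I}^{(1)}$ (probability $\tfrac12$) while the whole has probability $0$ or $1$, and lets two ``parallel'' copies of the same atoms be glued into states with identical one-qubit marginals (hence compatible with normality) but different conjunction/disjunction/$\mathtt{XOR}$ outputs. For (1) take $\gamma=\mathbf q\land\mathbf q$ and assign $Level_2^\gamma$ the qumix $P_{\frac1{\sqrt2}(\ket{0,1}+\ket{1,0})}\otimes P_0^{(1)}$: then $\mathtt{Hol}^\gamma(\mathbf q)=\tfrac12\mathtt{I}^{(1)}$ but $\mathtt{p}(\mathtt{Hol}^\gamma(\mathbf q\land\mathbf q))=\mathtt{tr}((P_1^{(1)}\otimes P_1^{(1)})P_{\frac1{\sqrt2}(\ket{0,1}+\ket{1,0})})=0$; item (7) then follows from (1) and reflexivity (Theorem~\ref{th:trans}(1)) by taking $\alpha=\beta=\delta=\mathbf q$. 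For (8), $\gamma=\mathbf q\land\lnot\mathbf q$ (which already contains $\mathbf f$) with the same top level works: after the gate $\mathtt{I}^{(1)}\otimes\mathtt{NOT}^{(1)}\otimes\mathtt{I}^{(1)}$ the Toffoli input becomes $P_{\frac1{\sqrt2}(\ket{0,0}+\ket{1,1})}\otimes P_0^{(1)}$, so $\mathtt{p}(\mathtt{Hol}^\gamma(\mathbf q\land\lnot\mathbf q))=\tfrac12>0=\mathtt{p}(\mathtt{Hol}^\gamma(\mathbf f))$, i.e.\ $\mathbf q\land\lnot\mathbf q\nvDash\mathbf f$. For (2) take $\gamma=(\mathbf q_1\land\mathbf q_2)\land(\mathbf q_2\land\mathbf q_1)$ and assign the top level so that the $(\mathbf q_1,\mathbf q_2)$-block feeding the first conjunct is $P_{\frac1{\sqrt2}(\ket{0,0}+\ket{1,1})}$ (conjunction-probability $\tfrac12$) and the $(\mathbf q_2,\mathbf q_1)$-block feeding the second is $P_{\frac1{\sqrt2}(\ket{0,1}+\ket{1,0})}$ (conjunction-probability $0$), all one-qubit marginals being $\tfrac12\mathtt{I}^{(1)}$; analogously, for (9) put $P_{\frac1{\sqrt2}(\ket{0,1}+\ket{1,0})}$ on the first $\uplus$-block ($\mathtt{XOR}$-truth-probability $1$) and $P_{\frac1{\sqrt2}(\ket{0,0}+\ket{1,1})}$ on the second ($\mathtt{XOR}$-truth-probability $0$). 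For (3)--(6) and (10) the same recipe applies, using the formulas above for the probability of a (possibly iterated) conjunction and of a disjunction: one engineers entangled assignments over the two sides whose one-qubit marginals agree but whose iterated-conjunction / disjunction / $\mathtt{XOR}$ probabilities are separated in the required direction.

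The substance of the proof is therefore bookkeeping, not a new idea. For each $\gamma$ one must write out $STree^\gamma$ explicitly, track \emph{every} occurrence of the repeated atoms and of $\mathbf f$ (and $\mathbf t$), and check that the proposed top-level qumix makes all of these reduce consistently --- which is exactly the constraint governing where the entangled blocks may be placed and which forces their one-qubit marginals to match --- and then carry out two trace computations (pushing the extra negations through in (5), (6), (10) and the $\mathtt{XOR}$ action through in (9), (10)). The one genuinely delicate point is, for the two-sided items, simultaneously meeting the normality constraint across the two copies and separating the two probabilities; choosing all one-qubit marginals equal to $\tfrac12\mathtt{I}^{(1)}$ resolves this uniformly, since the relevant gates (conjunction, $\mathtt{XOR}$) act symmetrically on classical mixtures, so only the coherences --- which remain free once the marginals are fixed --- affect the outputs.
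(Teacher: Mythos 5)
Your strategy is the same as the paper's: for each item, exhibit an explicit context $\gamma$ containing both formulas, assign an entangled qumix to one level of $STree^\gamma$ (you use the top level, the paper uses $\mathtt{Hol}(\gamma)$ at the bottom level --- equivalent by reversibility of the gates), arrange all one-qubit marginals to be $\tfrac12\mathtt{I}^{(1)}$ so that normality and the $\mathbf f$-constraint are automatic, and then separate the two probabilities via Theorem \ref{th:prand}. Where you carry this out --- items (1), (2), (8), (9) --- your witnesses are correct (your Bell states $\ket{0,0}+\ket{1,1}$, $\ket{0,1}+\ket{1,0}$ versus the paper's singlet and products of $\tfrac12\mathtt{I}$ are interchangeable for this purpose), and deriving (7) from (1) plus reflexivity with $\alpha=\beta=\delta=\mathbf q$ is a genuine small improvement over the paper, which builds a separate five-qubit model for (7). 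Likewise instantiating $\beta=\mathbf f$ in (8) is legitimate and shorter than the paper's construction.

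The weak point is that for (3)--(6) and (10) you only assert that ``the same recipe applies,'' and these are precisely the items where the recipe does not execute itself. The paper's witnessing states for the distributivity failures (5) and (6) live in $\mathcal H^{(13)}$, involve pushing $\mathtt{NOT}^{(3)}$ and $\mathtt{NOT}^{(7)}$ through nested Toffoli blocks, and yield separations ($\tfrac12$ vs.\ $\tfrac{7}{16}$, and $\tfrac{7}{16}$ vs.\ $\tfrac38$) that are in no way predictable from the marginals alone; moreover one must check that a single choice of coherences can be made to separate the probabilities \emph{in the required direction} while keeping every occurrence of each atom at marginal $\tfrac12\mathtt{I}^{(1)}$ across both sides of $\gamma$. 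Since the theorem is nothing but a list of ten counterexample claims, the unexecuted half is a real gap in the write-up, even though the method is sound and the paper confirms that suitable states exist.
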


\begin{proof} In the following counterexamples
 $\alpha$, $\beta$ and $\delta$ will always represent atomic formulas.\nl \nl

\begin{enumerate}
\item [(1)] $\alpha \nvDash \alpha \land \alpha$ \nl Take
    $\gamma=\alpha\land\alpha$ and consider a model ${\mathtt{Hol}}$ such that  ${\mathtt{Hol}}(\gamma)=\, ^\mathfrak D {\textrm{
 T}}^{(1,1,1)} (\frac{1}{2} {\mathtt{I}}\otimes\frac{1}{2} {\mathtt{I}}\otimes P^{(1)}_0)$.
We have: $\Prob(\Hol{\alpha})=\frac{1}{2} >
\Prob(\Hol{\alpha\land\alpha})=\frac{1}{4}$.

\item [(2)] $\alpha \land \beta \nvDash \beta \land
    \alpha$.\nl Take $\gamma=
    (\alpha\land\beta)\land(\beta\land\alpha)$. Consider a
    model ${\mathtt{Hol}}$ such that  ${\mathtt{Hol}}(\gamma)=
    \,^\mathfrak D {\textrm{T}}^{(3,3,1)}[^\mathfrak D {\textrm{
    T}}^{(1,1,1)} (\frac{1}{2} {\mathtt{I}}\otimes\frac{1}{2} {\mathtt{I}}\otimes P^{(1)}_0) \otimes \, ^\mathfrak D {\textrm{
 T}}^{(1,1,1)} (P_{\frac{1}{\sqrt{2}}(\ket{01}-\ket{10})}
 \otimes P^{(1)}_0)\otimes P^{(1)}_0].$ We have:
 $\Prob(\Hol{\alpha\land\beta})= \frac{1}{4} >
\Prob(\Hol{\beta\land\alpha})=0.$

\item [(3)] $\alpha \land (\beta \land \delta) \nvDash (\alpha
    \land \beta) \land \delta$. \nl Take
    $\gamma=(\alpha\land(\beta\land\delta))\land
((\alpha\land\beta)\land\delta)$ and consider a model ${\mathtt{Hol}}$ such that ${\mathtt{Hol}}(\gamma)= \, ^\mathfrak D {\textrm{
T}}^{(5,5,1)} [\,^\mathfrak D {\textrm{T}}^{(1,3,1)} (\frac{1}{2}
 {\mathtt{I}}\otimes\,^\mathfrak D {\textrm{T}}^{(1,1,1)} (\frac{1}{2}
 {\mathtt{I}}\otimes\frac{1}{2} {\mathtt{I}}\otimes P^{(1)}_0) \otimes
 P_0^{(1)}) \otimes \, ^\mathfrak D {\textrm{T}}^{(3,1,1)}(\,
  ^\mathfrak D {\textrm{T}}^{(1,1,1)} \otimes {\mathtt{I}}^{(2)}
(P_{\frac{1}{\sqrt{2}}(\ket{01010}-\ket{10000})})) \otimes
P_0^{(1)}]$. \\
We have:
$\Prob(\Hol{\alpha\land(\beta\land\delta)})=\frac{1}{8}
> \Prob(\Hol{(\alpha\land\beta)\land\delta})=0$.

\item [(4)] $ (\alpha \land \beta) \land \delta \nvDash \alpha
    \land (\beta \land \delta)$. \nl Similar to (3).

\item [(5)] $\alpha \land (\beta \lor \delta) \nvDash (\alpha
    \land \beta) \lor (\alpha \land \delta)$.\nl Take
    $\gamma=(\alpha \land (\beta \lor \delta)) \land ((\alpha
\land \beta) \lor (\alpha \land \delta))$.\nl Consider a model
${\mathtt{Hol}}$ such that ${\mathtt{Hol}}(\gamma)=$ \nl $ ^\mathfrak D
{\textrm{
 T}}^{(5,7,1)} [^\mathfrak D {\textrm{T}}^{(1,3,1)} (\frac{1}{2}
 {\mathtt{I}} \otimes \, ^\mathfrak D \Not^{(3)}\, ^\mathfrak D{\textrm{
 T}}^{(1,1,1)} (P_{\frac{1}{\sqrt{2}}(\ket{01}-\ket{10})}
\otimes P^{(1)}_0) \otimes P^{(1)}_0)\otimes \, ^\mathfrak D
\Not^{(7)}\, ^\mathfrak D{\textrm{T}}^{(3,3,1)} ( \, ^\mathfrak D
\Not^{(3)}\, ^\mathfrak D{\textrm{T}}^{(1,1,1)} (\frac{1}{2} {\mathtt{I}}\otimes\frac{1}{2} {\mathtt{I}}\otimes P^{(1)}_0)\otimes \,
 ^\mathfrak D \Not^{(3)}\, ^\mathfrak D{\textrm{T}}^{(1,1,1)}
 (\frac{1}{2} {\mathtt{I}}\otimes\frac{1}{2} {\mathtt{I}}\otimes
P^{(1)}_0)\otimes P^{(1)}_0) \otimes P^{(1)}_0] $. \nl We
have: $\Prob(\Hol{\alpha \land (\beta \lor \delta)})=
\frac{1}{2} > \Prob(\Hol{(\alpha \land \beta) \lor (\alpha
\land \delta)}) =\frac{7}{16}$.

\item [(6)] $(\alpha \land \beta) \lor (\alpha \land \delta)
    \nvDash \alpha \land (\beta \lor \delta)$.\nl Take
    $\gamma=(\alpha \land (\beta \lor \delta)) \land ((\alpha
    \land \beta) \lor (\alpha \land \delta))$.\nl Consider a
model ${\mathtt{Hol}}$ such that  ${\mathtt{Hol}}(\gamma)=$ \nl
$^\mathfrak D {\textrm{T}}^{(5,7,1)} [ \, ^\mathfrak D {\textrm{T}}^{(1,3,1)} (\frac{1}{2} {\mathtt{I}}\otimes \, ^\mathfrak  D
\Not^{(3)} \,^\mathfrak D{\textrm{T}}^{(1,1,1)} (\frac{1}{2} {\mathtt{I}}\otimes\frac{1}{2} {\mathtt{I}}\otimes P^{(1)}_0) \otimes
P^{(1)}_0)\otimes \, ^\mathfrak D \Not^{(7)}\,^\mathfrak D
{\textrm{T}}^{(3,3,1)} ( \, ^\mathfrak D \Not^{(3)} \,^\mathfrak
D{\textrm{T}}^{(1,1,1)} (\frac{1}{2} {\mathtt{I}}\otimes\frac{1}{2} {\mathtt{I}}\otimes P^{(1)}_0)\otimes \, ^\mathfrak D \Not^{(3)}
\,^\mathfrak D{\textrm{T}}^{(1,1,1)} (\frac{1}{2} {\mathtt{I}}\otimes\frac{1}{2} {\mathtt{I}}\otimes P^{(1)}_0)\otimes
P^{(1)}_0) \otimes P^{(1)}_0] $. \nl We have:
$\Prob(\Hol{(\alpha \land \beta) \lor (\alpha \land
\delta)})=\frac{7}{16} > \Prob(\Hol{\alpha \land (\beta \lor
\delta)})=\frac{3}{8}$.

\item [(7)]
    $\delta\vDash\alpha\;\text{and}\;\delta\vDash\beta\;
    \nRightarrow\;\delta\vDash\alpha\land\beta $.\nl Take
$\gamma=(\alpha\land\beta)\land\delta$. Consider a model ${\mathtt{Hol}}$ such that ${\mathtt{Hol}}(\gamma)= \, ^\mathfrak D {\mathtt{T}}^{(3,1,1)} (\, ^\mathfrak D {\mathtt{T}}^{(1,1,1)} (\frac{1}{2}
 {\mathtt{I}}\otimes\frac{1}{2} {\mathtt{I}}\otimes
 P^{(1)}_0)\otimes\frac{1}{2} {\mathtt{I}}\otimes P^{(1)}_0)$. We
have: $\Prob(\Hol{\alpha})= \Prob(\Hol{\beta})=
\Prob(\Hol{\delta})=\frac{1}{2} >
\Prob(\Hol{\alpha\land\beta})=\frac{1}{4}$.

\item [(8)] $\alpha \land \lnot \alpha \nvDash \beta$.\nl Take
    $\gamma=(\alpha\land\lnot\alpha)\land\beta$.\nl Consider a
    model ${\mathtt{Hol}}$ such that ${\mathtt{Hol}}(\gamma)=\,^\mathfrak D {\textrm{T}}^{(3,1,1)}
    (\,^\mathfrak D {\textrm{T}}^{(1,1,1)} (\frac{1}{2} {\mathtt{I}}\otimes\frac{1}{2} {\mathtt{I}}\otimes P^{(1)}_0)\otimes
P^{(1)}_0\otimes P^{(1)}_0)$. \nl We have:
$\Prob(\Hol{\alpha\land\lnot\alpha})=\frac{1}{4}$ and
$\Prob(\Hol{\beta})=0$.

\item [(9)] $\alpha \uplus \beta \nvDash \beta \uplus
    \alpha$.\nl Take
    $\gamma=(\alpha\uplus\beta)\land(\beta\uplus\alpha)$.
    Consider a model ${\mathtt{Hol}}$ such that ${\mathtt{Hol}}(\gamma)=\, ^\mathfrak D {\mathtt{T}}^{(2,2,1)} [\,
    ^\mathfrak D {\mathtt{XOR}}^{(1,1)} P_{\frac{1}
 {\sqrt{2}}(\ket{01}-\ket{10})} \otimes \, ^\mathfrak D {\mathtt{XOR}}^{(1,1)} (\frac{1}{2} {\mathtt{I}}\otimes\frac{1}{2} {\mathtt{I}})\otimes P^{(1)}_0]$.  We have:
$\Prob(\Hol{\alpha\uplus\beta})=1 >
\Prob(\Hol{\beta\uplus\alpha})=\frac{1}{2}$.

\item [(10)] $\alpha \uplus \beta \nvDash \alpha \lor \beta$;
    $\alpha \uplus \beta \nvDash \lnot \alpha \lor \lnot
    \beta$.\nl Take
    $\gamma=(\alpha\uplus\beta)\land(\alpha\lor\beta)$.
    Consider a model $\mathtt{Hol}$ such that ${\mathtt{Hol}}(\gamma)=\,
^\mathfrak D {\mathtt{T}}^{(2,3,1)} [\, ^\mathfrak D {\mathtt{XOR}}^{(1,1)} P_{\frac{1}{\sqrt{2}}(\ket{01}-\ket{10})}
    \otimes \, ^\mathfrak D \Not^{(3)} {\mathtt{T}}^{(1,1,1)}
    (\Not^{(1)}\otimes\Not^{(1)}\otimes {\mathtt{I}}^{(1)})
    (\frac{1}{2} {\mathtt{I}}\otimes\frac{1}{2} {\mathtt{I}}\otimes
    P^{(1)}_0)\otimes P^{(1)}_0]$. \nl We have:
    $\Prob(\Hol{\alpha\uplus\beta})=1 >
    \Prob(\Hol{\alpha\lor\beta})=\frac{3}{4}$.\nl In a similar
    way one proves that $\alpha \uplus \beta \nvDash \lnot
    \alpha \lor \lnot \beta$.

\end{enumerate}\end{proof}

The dual forms of 5.2(1)-5.1(8) hold for the connective $\lor$.

  The following theorem sums up some basic arguments that
  hold for the genuine quantum computational connectives.

\begin{theorem}\label{th:genui}\nl \nl
\begin{enumerate}
\item [(1)]$\sqrt{id}\sqrt{id} \alpha \equiv \alpha$
  \item [(2)]$\sqrt{id}\mathbf f \equiv \sqrt{id}\mathbf t$
  \item [(3)]$\lnot \sqrt{id}\mathbf f \equiv \sqrt{id}\mathbf
      f$; $\lnot \sqrt{id}\mathbf t \equiv \sqrt{id}\mathbf t$
  \item [(4)]$\sqrt{id}(\alpha \land \beta) \equiv \sqrt{id}
      \mathbf f$
\item [(5)]$\sqrt{\lnot}\sqrt{\lnot} \alpha \equiv
    \lnot\alpha$
  \item [(6)]$\sqrt{\lnot}\mathbf f \equiv \sqrt{\lnot}\mathbf
      t$
  \item [(7)]$\lnot \sqrt{\lnot}\mathbf f \equiv
      \sqrt{\lnot}\mathbf f$; $\lnot \sqrt{\lnot}\mathbf t
      \equiv \sqrt{\lnot}\mathbf t$
  \item [(8)] $\lnot \sqrt{\lnot} \alpha \equiv \sqrt{\lnot}\,
      \lnot \alpha$
  \item [(9)] $\sqrt{\lnot}(\alpha \land \beta) \equiv
      \sqrt{\lnot} \mathbf f$
\item [(10)]$\sqrt{id}\sqrt{\lnot}\alpha \equiv
    \sqrt{id}\alpha$
\item [(11)]$\sqrt{\lnot}\sqrt{id}\alpha \equiv
    \lnot\sqrt{\lnot}\alpha$
\item [(12)]$\sqrt{id}\sqrt{\lnot}(\alpha \land \beta) \equiv
    \sqrt{\lnot} \mathbf f$
\item [(13)] $\sqrt{\lnot}\sqrt{id}(\alpha \land \beta) \equiv
    \sqrt{\lnot} \mathbf f$
\end{enumerate}

  \end{theorem}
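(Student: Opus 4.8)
The plan is as follows. Throughout I would invoke Lemma~\ref{le:invariant} to restrict attention to the canonical model class, so that, since $\alpha\equiv\beta$ unfolds as $\alpha\vDash\beta$ and $\beta\vDash\alpha$, each of the thirteen claims reduces to the assertion that the two formulas receive equal probability under ${\mathtt{Hol}}^\gamma$, for every formula $\gamma$ having both of them as subformulas and every model ${\mathtt{Hol}}$. Two remarks drive everything. First, by Theorem~\ref{th:commut}, for a unary connective $c\in\{\lnot,\sqrt{id},\sqrt{\lnot}\}$ we have ${\mathtt{Hol}}^\gamma(c\,\beta)={}^{\mathfrak D}G_c^{(At(\beta))}({\mathtt{Hol}}^\gamma(\beta))$, where $G_c$ is one of $\mathtt{NOT}^{(1)}$, $\sqrt{\mathtt{I}}^{(1)}$, $\sqrt{\mathtt{NOT}}^{(1)}$, each acting as the identity on the first $At(\beta)-1$ tensor factors and as the one-qubit unitary $X$, $H$ or $V$ on the last. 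Second, by Lemma~\ref{le:project}(3)(a), ${\mathtt{p}}({\mathtt{Hol}}^\gamma(\delta))$ depends only on the last-qubit reduced state of ${\mathtt{Hol}}^\gamma(\delta)$. Iterating the first remark, each claim becomes a statement about how a short word in $X=\mathtt{NOT}^{(1)}$, $H=\sqrt{\mathtt{I}}^{(1)}$, $V=\sqrt{\mathtt{NOT}}^{(1)}$ transforms the last-qubit reduced state of ${\mathtt{Hol}}^\gamma(\beta)$ for the relevant immediate subformula $\beta$ --- or, when $\beta\in\{\mathbf f,\mathbf t\}$, transforms the fixed state $P_0^{(1)}$ or $P_1^{(1)}$ (Definition~\ref{de:model}(2)).

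For the ``algebraic'' items (1), (3), (5), (8) I would use the one-qubit identities $H^2={\mathtt{I}}^{(1)}$, $V^2=X$, $XV=VX$ and $X\,P_{\frac{1}{\sqrt{2}}(\ket{0}\pm\ket{1})}\,X=P_{\frac{1}{\sqrt{2}}(\ket{0}\pm\ket{1})}$: in each case the two contextual meanings coincide outright, so the probabilities are equal a fortiori. Items (10) and (11) follow from explicit one-qubit computations: (10) from the fact that $\frac{1}{\sqrt{2}}(\ket{0}-\ket{1})$, being a $\mathtt{NOT}^{(1)}$-eigenvector, is also a $\sqrt{\mathtt{NOT}}^{(1)}$-eigenvector, so that inserting $\sqrt{\mathtt{NOT}}^{(1)}$ before the final measurement is immaterial; (11) from a direct verification that the two rank-one projections $\sqrt{\mathtt{I}}^{(1)}(\sqrt{\mathtt{NOT}}^{(1)})^{\dagger}P_1^{(1)}\sqrt{\mathtt{NOT}}^{(1)}\sqrt{\mathtt{I}}^{(1)}$ and $(\sqrt{\mathtt{NOT}}^{(1)})^{\dagger}\mathtt{NOT}^{(1)}P_1^{(1)}\mathtt{NOT}^{(1)}\sqrt{\mathtt{NOT}}^{(1)}$ coincide. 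For items (2), (6), (7), which involve $\mathbf f$ and $\mathbf t$, Definition~\ref{de:model}(2) fixes ${\mathtt{Hol}}^\gamma(\mathbf f)=P_0^{(1)}$ and ${\mathtt{Hol}}^\gamma(\mathbf t)=P_1^{(1)}$; since each of $H$ and $V$ sends both $\ket{0}$ and $\ket{1}$ to a \emph{balanced} qubit (equal squared amplitudes $\tfrac{1}{2}$ in the canonical basis), every probability in sight equals $\tfrac{1}{2}$, and for (3) and (7) one checks that a subsequent application of $X$ preserves this.

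The substantive items are (4), (9), (12), (13), which say that $\sqrt{id}$ or $\sqrt{\lnot}$, possibly composed, applied to a conjunction always yields a formula of probability $\tfrac{1}{2}$, matching $\sqrt{id}\mathbf f$ resp.\ $\sqrt{\lnot}\mathbf f$. The key lemma to prove is that the last-qubit reduced state of ${\mathtt{Hol}}^\gamma(\alpha\land\beta)$ is \emph{diagonal} in the canonical basis. Since $\alpha\land\beta=\intercal(\alpha,\beta,\mathbf f)$, Theorem~\ref{th:weakcomm}(1) gives ${\mathtt{Hol}}^\gamma(\alpha\land\beta)={}^{\mathfrak D}\mathtt{T}^{(At(\alpha),At(\beta),1)}(\sigma)$ for a qumix $\sigma$ on $\mathcal H^{(At(\alpha))}\otimes\mathcal H^{(At(\beta))}\otimes\mathcal H^{(1)}$ whose third tensor factor has reduced state ${\mathtt{Hol}}^\gamma(\mathbf f)=P_0^{(1)}$ by Definition~\ref{de:model}(2); a subsystem with a pure reduced state is uncorrelated with the rest, so $\sigma=\tau\otimes P_0^{(1)}$ and hence ${\mathtt{Hol}}^\gamma(\alpha\land\beta)={\mathtt{AND}}^{(At(\alpha),At(\beta))}(\tau)$ (Definition~\ref{de:and}). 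Expanding as in the proof of Theorem~\ref{th:prand} via Lemma~\ref{le:tof}, but now tracking the partial trace over the first $At(\alpha)+At(\beta)$ factors, the prospective off-diagonal terms of the last qubit carry the scalar $\mathtt{tr}(({\mathtt{I}}-P)\tau P)$ with $P=P_1^{(At(\alpha))}\otimes P_1^{(At(\beta))}$, and this vanishes because $P({\mathtt{I}}-P)=0$; thus the last-qubit reduced state is $(1-q)P_0^{(1)}+qP_1^{(1)}$ with $q=\mathtt{tr}(P\tau)={\mathtt{p}}({\mathtt{AND}}^{(At(\alpha),At(\beta))}(\tau))$. Since $H$, $V$, $HV$ and $VH$ each send any computational-basis-diagonal qubit state to a state of probability $\tfrac{1}{2}$ --- equivalently, $HP_1^{(1)}H$, $V^{\dagger}P_1^{(1)}V$, $V^{\dagger}HP_1^{(1)}HV$ and $HV^{\dagger}P_1^{(1)}VH$ are rank-one projections onto balanced vectors --- all four probabilities equal $\tfrac{1}{2}$, as do ${\mathtt{p}}({\mathtt{Hol}}^\gamma(\sqrt{id}\mathbf f))$ and ${\mathtt{p}}({\mathtt{Hol}}^\gamma(\sqrt{\lnot}\mathbf f))$.

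The main obstacle is this diagonality step. One must argue that the purity of the ancilla's contextual meaning forces the factorization $\sigma=\tau\otimes P_0^{(1)}$, and then read off from the Toffoli decomposition of Lemma~\ref{le:tof} that feeding $\mathtt{T}^{(m,n,1)}$ a $P_0^{(1)}$-ancilla leaves the output qubit free of coherence --- so that the genuine connectives $\sqrt{id}$ and $\sqrt{\lnot}$, which operate on that qubit, perceive only a classical mixture and hence collapse the conjunction to probability $\tfrac{1}{2}$. Everything else reduces to a handful of explicit $2\times2$ unitary manipulations with $X$, $H$, $V$.
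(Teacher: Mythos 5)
Your proof is correct and follows essentially the same route as the paper's: Theorem \ref{th:commut} for the unary items, explicit $2\times 2$ computations with $\mathtt{NOT}^{(1)}$, $\sqrt{\mathtt{I}}^{(1)}$ and $\sqrt{\mathtt{NOT}}^{(1)}$, and Theorem \ref{th:weakcomm} together with Lemma \ref{le:tof} for the conjunction items. The only differences are presentational: you justify the factorization of the Toffoli input as $\tau\otimes P_0^{(1)}$ (via purity of the ancilla's reduced state), which the paper uses tacitly, and you package items (4), (9), (12), (13) into a single diagonality lemma for the output qubit of $\mathtt{AND}^{(m,n)}$, where the paper computes (4) by a direct trace calculation and declares the remaining cases ``similar''.
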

\begin{proof}\nl \nl
\begin{enumerate}
\item [(1)] $\sqrt{id}\sqrt{id} \alpha \equiv \alpha$.\\
Let $\alpha$ and $\sqrt{id}\sqrt{id} \alpha$ be subformulas of
    $\gamma$. Since for any $\rho \in \mathfrak D(\mathcal
    H^{(n)})$, $^\mathfrak D\sqrt{{\mathtt{I}}}^{(n)}\, ^\mathfrak
    D\sqrt{{\mathtt{I}}}^{(n)} \rho =\rho$, by Theorem
    \ref{th:commut}(2) we obtain: ${\mathtt{Hol}}^\gamma(\sqrt{id}\sqrt{id} \alpha)= {\mathtt{Hol}}^\gamma(\alpha)$.

\item [(2)] $\sqrt{id}\mathbf f \equiv \sqrt{id}\mathbf t$\\
By definition of model, by Theorem \ref{th:commut}(2) and
because $\Prob(\, ^\mathfrak D \SId^{(1)}
P_0^{(1)})=\frac{1}{2}= \Prob(\, ^\mathfrak D \SId^{(1)}
P_1^{(1)})$.

\item [(3)] $\lnot \sqrt{id}\mathbf f \equiv \sqrt{id}\mathbf
    f$;
      $\lnot \sqrt{id}\mathbf t \equiv \sqrt{id}\mathbf t$\\
Let $\lnot \sqrt{id}\mathbf f$ and $\sqrt{id}\mathbf f$ be
subformulas of $\gamma$. By definition of model and by Theorem
\ref{th:commut}(1,2) we have: \nl $\Prob(\Hol{\lnot
\sqrt{id}\mathbf f})= \Prob(\, ^\mathfrak D \Not^{(1)}
\,^\mathfrak D\SId^{(1)} P_0^{(1)})=\frac{1}{2} = \Prob(\,
^\mathfrak D \SId^{(1)} P_0^{(1)}) =
{\mathtt{Hol}}^\gamma(\sqrt{id}\mathbf f)$. In a similar way
one proves that $\lnot \sqrt{id}\mathbf t \equiv
\sqrt{id}\mathbf t$.
\item [(4)] $\sqrt{id}(\alpha \land \beta) \equiv \sqrt{id} \mathbf f$.\\
Let $\sqrt{id}(\alpha \land \beta)$ and $\sqrt{id} \mathbf f$
be subformulas of $\gamma$. Let $At(\alpha)= m$ and
$At(\beta)= n$. Suppose that in the syntactical tree of
$\gamma$ the subformulas $\alpha$ and $\beta$ occur,
respectively, at the positions $k_1$ and $k_2$ of
$Level_i^\gamma$ (consisting of $r$ formulas). Consider a
model ${\mathtt{Hol}}$ and let $\rho = \,
Red^{(k_1,k_2)}_{[1,\ldots,r]}({\mathtt{Hol}}(Level_i^\gamma))$.
By Theorem \ref{th:weakcomm}(1), we have:
${\mathtt{Hol}}^\gamma (\alpha \land \beta) = \, ^\mathfrak
D{\mathtt{T}}^{(m,n,1)}(\rho \otimes P_0^{(1)})$. Then, by
definition of ${\mathtt{p}}$ and by Lemma \ref{le:tof} we
obtain:\nl
 $\Prob(\Hol{\sqrt{id}(\alpha \land \beta)})=
\Tr[P_1^{(m+n+1)}\, ^\mathfrak D \SId^{(m+n+1)}\,^\mathfrak D
{\textrm{T}}^{(m,n,1)} (\rho\otimes P_0^{(1)})]= $ \\
$\Tr[({\mathtt{I}}^{(m+n)}-P_1^{(m)}\otimes
P_1^{(n)})\rho({\mathtt{I}}^{(m+n)}-P_1^{(m)}\otimes
P_1^{(n)})\otimes P_1^{(1)}\SId^{(1)}P_0^{(1)}\SId^{(1)}+
(P_1^{(m)}\otimes P_1^{(n)})\rho (P_1^{(m)}\otimes
P_1^{(n)})\otimes
P_1^{(1)}\SId^{(1)}\Not^{(1)}P_0^{(1)}\Not^{(1)}\SId^{(1)}]=$\\
$\Tr[({\mathtt{I}}^{(m+n)}-P_1^{(m)}\otimes
P_1^{(n)})\rho({\mathtt{I}}^{(m+n)}-P_1^{(m)}\otimes
P_1^{(n)})] \Tr(P_1^{(1)}\SId^{(1)}P_0^{(1)}\SId^{(1)})+$\\
$\Tr[(P_1^{(m)}\otimes P_1^{(n)})\rho (P_1^{(m)}\otimes
P_1^{(n)})] \Tr(P_1^{(1)}\SId^{(1)}P_1^{(1)}\SId^{(1)})=$\\
$\Tr[({\mathtt{I}}^{(m+n)}-P_1^{(m)}\otimes
P_1^{(n)})\rho({\mathtt{I}}^{(m+n)}-P_1^{(m)}\otimes
P_1^{(n)})] \frac{1}{2}+$ \\ $ \Tr[(P_1^{(m)}\otimes
P_1^{(n)})\rho (P_1^{(m)}\otimes P_1^{(n)})] \frac{1}{2}=
\Tr(\rho)\frac{1}{2}=\frac{1}{2} =
{\mathtt{p}}(\sqrt{id}\mathbf f)$.

\item [(5)]$\sqrt{\lnot}\sqrt{\lnot} \alpha \equiv \lnot\alpha$.\\
By Theorem  \ref{th:commut}(1,3) and because $\sqrt{{\mathtt{NOT}}}^{(n)} \sqrt{{\mathtt{NOT}}}^{(n)}\rho = {\mathtt{NOT}}^{(n)}\rho$,
for any $\rho \in \mathfrak D(\mathcal H^{(n)})$.

\item [(6)] $\sqrt{\lnot}\mathbf f \equiv \sqrt{\lnot}\mathbf t$.\\
Similar to (2).

\item [(7)] $\lnot \sqrt{\lnot}\mathbf f \equiv
    \sqrt{\lnot}\mathbf f$; $\lnot \sqrt{\lnot}\mathbf t
      \equiv \sqrt{\lnot}\mathbf t$.\\
By definition of model, by Theorem \ref{th:commut}(1,3) and
because\nl $\Prob(\, ^\mathfrak D \Not^{(1)}\,^\mathfrak D
\SNot^{(1)} P_0^{(1)})=\frac{1}{2} = \Prob(\,^\mathfrak
D\SNot^{(1)} P_0^{(1)})= \Prob(\, ^\mathfrak D \Not^{(1)}
\,^\mathfrak D\SNot^{(1)} P_1^{(1)}) = \Prob(\,^\mathfrak
D\SNot^{(1)} P_1^{(1)})$.

\item [(8)] $\lnot \sqrt{\lnot} \alpha \equiv \sqrt{\lnot}\,
    \lnot \alpha$.\\
    By Theorem \ref{th:commut}(1,3) and because $^\mathfrak
    D\sqrt{{\mathtt{NOT}}}^{(n)} \,^\mathfrak D\sqrt{{\mathtt{NOT}}}^{(n)} \rho = \,^\mathfrak D{\mathtt{NOT}}^{(n)}\rho$, for
    any $\rho \in \mathfrak D(\mathcal H^{(n)})$.

\item [(9)] $\sqrt{\lnot}(\alpha \land \beta) \equiv \sqrt{id}
    \mathbf f$.
\\
Similar to (4).

\item [(10)]$\sqrt{id}\sqrt{\lnot}\alpha \equiv \sqrt{id}\alpha$.\\
By Theorem \ref{th:commut} (2,3) and because $\Prob(^\mathfrak
D \SId^{(n)}\,^\mathfrak D \SNot^{(n)} \rho)= \Prob(
^\mathfrak D \SId^{(n)} \rho)$ for any $\rho \in \mathfrak
D(\mathcal H^{(n)})$.

\item [(11)] $\sqrt{\lnot}\sqrt{id}\alpha \equiv \lnot\sqrt{\lnot}\,\alpha$.\\
Similar to (10).

\item [(12)] $\sqrt{id}\sqrt{\lnot}(\alpha \land \beta) \equiv \sqrt{\lnot}
\, \mathbf f$\\
Similar to (4).

\item [(13)] $\sqrt{\lnot}\sqrt{id}(\alpha \land \beta) \equiv \sqrt{\lnot}
\, \mathbf f$\\
Similar to (4).

\end{enumerate}
\end{proof}

Theorem \ref{th:nval} shows how the ``Boolean'' fragment of  {\bf
HQCL} (formalized in the language $\mathcal L^C$) is a quite weak
logic with strongly non-classical features. As happens in the case
of most fuzzy logics, conjunctions and disjunctions of {\bf HQCL}
are generally non-idempotent. This is, of course, expected in all
situations that involve information-transmission, where ``repetita
iuvant''. The failure of commutativity, associativity and
distributivity (for $\land$ and $\lor$) seems to be confirmed and
justified by a number of examples that concern informal arguments
(expressed in  natural languages) or semantic situations arising
in the languages of art (for instance, in literature or in music).
At the same time, it is not easy  to find appropriate linguistic
models for the genuine quantum computational connectives
($\sqrt{id}$ and $\sqrt{\lnot}$) outside the domain of
quantum-information phenomena. Interestingly enough, a suggestion,
in this direction, comes from a formal semantics of
music.\footnote{See \cite{DGLN}. } Consider the case of {\em
musical modulations\/}, whose characteristic role is creating
tonality-changes in a given composition. The starting point may be
a {\em precise\/} tonality (say, $C$ major), which is followed by
a situation of {\em tonal ambiguity\/}, where different tonalities
co-exist in a form that seems to behave like a quantum
superposition. Finally one arrives at another tonality, which may
be either closely related or distant with respect to the original
tonality. From an abstract point of view, such a transformation
seems to be similar to what happens when the gate square root of
negation is applied twice. A first application of $\sqrt{{\mathtt{NOT}}}^{(1)}$ to a classical certainty, represented for instance by
the bit $\ket{1}$,  gives rise to a maximally uncertain {\em
quantum perhaps\/}: the superposition $\frac{1}{\sqrt{2}}(\ket{0}
+ \ket{1})$, for which the two canonical truth-values are equally
probable. But, then,  a second application of the same gate
transforms this maximal uncertainty into a different classical
certainty, represented by the bit $\ket{0}$.

\end{document}